\def\openone{\leavevmode\hbox{\small1\kern-3.8pt\normalsize1}}
\def\tr{{\rm Tr}}
\def\ca{{\cal A}}
\def\cx{{\cal X}}
\newcommand{\be}{\begin{equation}}
\newcommand{\ee}{\end{equation}}
\newcommand{\bea}{\begin{eqnarray}}
\newcommand{\eea}{\end{eqnarray}}
\newcommand{\beann}{\begin{eqnarray*}}
\newcommand{\eeann}{\end{eqnarray*}}
\def\RR{\mathbb{R}}
\def\ZZ{\mathbb{Z}}
\def\NN{\mathbb{N}}
\def\CC{\mathbb{C}}
\def\HH{\mathbb{H}}
\def\KK{\mathbb{K}}
\def\DD{\mathbb{D}}
\def\LL{\mathbb{L}}
\newtheorem{theorem}{Theorem}
\newtheorem{lemma}{Lemma}
\newtheorem{proposition}{Proposition}
\newtheorem{corollary}{Corollary}
\newtheorem{remark}{Remark}
\newtheorem{definition}{Definition}
\title{Dissipative entanglement of quantum spin fluctuations}
\date{\null} 
\author{F. Benatti$^{1,2}$, F. Carollo$^{1,2}$, R. Floreanini$^2$\\
\small $^1$Dipartimento di Fisica, Universit\`a di Trieste, Trieste, 34151 Italy\\
\small ${}^2$Istituto Nazionale di Fisica Nucleare, Sezione di Trieste, 34151 Trieste, Italy}
\begin{document}

\maketitle

\begin{abstract}
\noindent
We consider two non-interacting infinite quantum spin chains immersed in a common thermal environment and undergoing a local dissipative dynamics of Lindblad type.
We study the time evolution of collective mesoscopic quantum spin fluctuations 
that, unlike macroscopic mean-field observables, retain
a quantum character in the thermodynamical limit.
We show that the microscopic dissipative dynamics is able to entangle these 
mesoscopic degrees of freedom, through a purely mixing mechanism.
Further, the behaviour of the dissipatively generated quantum correlations
between the two chains is studied as a function of temperature
and dissipation strength.
\end{abstract}

\section{Introduction}

The presence of an external environment typically affects quantum systems in weak interaction with it via loss of quantum correlations due to decohering and mixing-enhancing effects \cite{Alicki}-\cite{Benatti1}.
Nevertheless, it has also been established that suitable
environments are capable of creating and enhancing quantum entanglement among quantum 
open sub-systems immersed in them instead of destroying it \cite{Plenio}-\cite{Benatti2}.
It is remarkable that entanglement can be generated solely by the mixing structure
of the irreversible dynamics, without any environment induced, direct interaction between the quantum sub-systems.

This mechanism of environment induced entanglement generation
has been studied for systems made of few qubits 
or oscillator modes \cite{Benatti1},\cite{Benatti2}-\cite{Benatti4} and specific protocols have been proposed
to prepare predefined entangled states via the action of suitably
engineered environments \cite{Kraus}. 
Instead, in this paper, we study the possibility that entanglement be created 
through a purely noisy mechanism in many-body systems (for different approaches to entanglement in many-body systems, see \cite{Lewenstein}-\cite{Modi}
and references therein).

In a quantum system made of a large number $N$ of constituents, 
typical accessible observables are collective ones, {\it i.e.} those
involving the degrees of freedom of all its elementary parts.
For these ``macroscopic'' observables, one usually expects that quantum effects 
fade away as $N$ becomes large, even more so when
the many-body system is in contact with an external environment.
This is surely the case for the so-called ``mean field'' observables,
{\it i.e.} averages of microscopic
operators; these quantities scale as $1/N$
and as such behave as classical observables when the number
of system constituents becomes large.

Nevertheless, other collective observables exist that scale as
$1/\sqrt{N}$ and that might retain some quantum properties
as $N$ increases \cite{Goderis1}-\cite{Matsui}. These observables have been called
``fluctuation operators'' and shown to obey a quantum central limit theorem.
In the large $N$ limit, the microscopic fluctuation operators form a bosonic algebra, irrespective of the nature of the microscopic many-body system. 
Being half-way between microscopic observables 
(as for instance the individual spin operators in a generic spin systems) and truly macroscopic ones 
({\it e.g.} the corresponding mean magnetization),
the fluctuation operators have been named ``mesoscopic''.
They provide a particularly suited scenario to look for truly quantum signals
in the dynamics of ``large'' systems, {\it i.e.} in systems 
in which the number of microscopic constituents grows arbitrarily.

Although the emergent time-evolution over the fluctuation algebra has
been extensively studied in many systems \cite{Verbeure}, very little is known
of its behaviour in open many-body systems, {\it i.e.} in systems 
immersed in an external bath.
This is the most common situation encountered in actual experiments, typically
involving cold atoms, optomechanical or spin-like systems \cite{Bloch,Aspelmeyer,Rogers}, 
that can never be thought of as completely isolated from their thermal surroundings.
Actually, the repeated claim of having detected ``macroscopic'' entanglement
in those experiments \cite{Jost,Krauter} poses a serious challenge in trying to 
interpret theoretically those results \cite{Narnhofer}.

Motivated by these experimental findings, in the following we shall show
that quantum behaviour can indeed be present at the mesoscopic level in open many-body
systems provided suitable fluctuation operators are considered.
More specifically, we focus on a many-body system composed by two spin-1/2 chains, one next to the other, which are endowed with a microscopic thermal state at inverse temperature $\beta$ with a tensor product structure, that excludes long-range correlations. A site in the system is thus composed by the corresponding couple of sites in the two chains and suitable single-site operators are considered giving rise to quantum fluctuations that, in the infinite volume limit, identify collective bosonic degrees of freedom clearly attributable to the two chains independently. 
The two chains are immersed in a common environment such that the observables supported by finite lattice intervals are subjected to a Lindblad type dynamics without direct interactions among the spins either in a same or in different chains. The dynamics is chosen in such a way to leave the microscopic state invariant and to map into itself the linear span of the relevant single-site observables. 
Under this condition, we show that the emergent, mesoscopic dissipative quantum fluctuation dynamics is capable of entangling different collective bosonic degrees of freedom and that the dissipatively created entanglement presents interesting features as a function of the temperature  and of the microscopic coupling strength of the two chains~\cite{note}.

The structure of the paper is as follows: Section 2 provides the necessary preliminary notions concerning quantum spin chains and their description at the mesoscopic level based on a Weyl algebra of quantum fluctuations that satisfy a quantum central limit relation as explained in Theorem~\ref{th1}. 

In Section 3, the general techniques exposed in Section 2 are 
applied to the case of a system consisting of two quantum spin $1/2$ chains in a microscopic factorized thermal state: specific microscopic operators are selected 
that give rise to collective degrees of freedom pertaining to each chain independently of the other or to both chains at the same time. The description of the resulting quantum fluctuations is given in terms of bosonic creation and annihilation operators and their mesoscopic thermal state is obtained in Proposition~\ref{prop-state}. 

In Section 4, a microscopic open quantum dynamics of the two chains is considered with a Lindblad generator that does not contain direct spin interactions and whose dissipative term statistically couples also spins belonging to different chains, while leaving the microscopic thermal state invariant. The main result of the paper is contained in Theorem~\ref{qfth} which shows that, in the large $N$ limit, the microscopic dissipative dynamics gives rise to a mesoscopic dynamics of quantum fluctuations consisting of a semigroup of completely positive Gaussian maps sending Weyl operators into Weyl operators.  The Lindblad generator of this so-called quasi-free semigroup is derived in Corollary~\ref{cor1}.  

Section 5 and 6 focus on mesoscopic Gaussian initial states whose form is left invariant by the dissipative mesoscopic dynamics. Specific Gaussian states are considered involving collective degrees of freedom that belong to the two chains, independently. They are obtained with separable squeezing operations on the mesoscopic thermal state: the resulting squeezed state is then separable with respect to the collective degrees of freedom pertaining to different chains.

In Section 7, two concrete microscopic models of open quantum spin chains are considered: in the first one, the dissipative term of the microscopic Lindblad generator is not diagonal in the site indices and consists of Kraus operators involving spins from both chains at each lattice site. Instead, in the second model the dissipative contribution is diagonal in the site indices and each site contributes with Kraus operators pertaining to only one chain.
Propositions~\ref{propo1} and~\ref{propo3} provide the precise forms of the Lindblad generators of the dissipative quasi-free semigroups.

Section 8 studies the entanglement dynamics of the initially separable squeezed states constructed in Section 6 for the two models explicitly solved in Section 7. Squeezed states are not left invariant by the emerging mesoscopic dynamics, although they remain Gaussian, so that they may develop collective entanglement between the two chains at the mesoscopic level which can be quantified by the logarithmic negativity. The temporal behaviour of such a dissipatively generated entanglement is then studied analytically and numerically for different values of temperature, squeezing parameter and dissipation strength.

\section{Quantum spin chains and their fluctuation algebra}
\label{CONSTRUCTION}

In this section, we briefly review how to construct the algebra of quantum fluctuations of a generic spin chain.

\subsection{Quantum fluctuations}

A quantum spin chain is a one-dimensional bi-infinite lattice, whose sites are indexed by an integer $j\in\ZZ$,
all supporting the same finite-dimensional matrix algebra $\ca^{(j)}=M_d(\CC)$.
Its algebraic description~\cite{Bratteli} is by means of the \textit{quasi-local} $C^*$ algebra $\ca$ obtained as an inductive limit from the strictly local 
sub-algebras $\ca_{[q,p]}=\bigotimes_{j=p}^q\ca^{(j)}$ supported by finite intervals 
$[q,p]$, with $q\leq p$ in $\ZZ$. Namely, one considers the 
algebraic union $\bigcup_{q\leq p}\ca_{[q,p]}$ and its completion with respect to the norm inherited by the local algebras.
Any operator $x\in M_d(\CC)$ at site $j$ can be embedded into $\ca$ as:
\be
x^{(j)}=\bold{1}_{j-1]}\otimes x\otimes\bold{1}_{[j+1}\ ,
\label{embed}
\ee
where $\bold{1}_{j-1]}$ is the tensor product of identity matrices at each site from $-\infty$ to $j-1$, while $\bold{1}_{[j+1}$ is the tensor product of identity matrices from site $j+1$ to $+\infty$.
Quantum spin chains are naturally endowed with the translation automorphism 
$\tau:\ca\mapsto\ca$ such that $\tau(x^{(j)})=x^{(j+1)}$.

Generic states $\omega$ on the quantum spin chain are described by positive, normalised linear functionals $\ca\ni a\mapsto\omega(a)$: they are expectation functionals that assign mean values to  all operators in $\ca$.
In the following, we shall consider translation-invariant states such that
\be
\label{transinv}
\begin{split}
\omega(a)&=\omega\big(\tau(a)\big)\hspace{34pt}\qquad\forall a\in \mathcal{A}\ ,\\
\omega(x^{(j)})=\omega(x^{(j+1)})&=\omega(x)=\tr(\rho\,x)\qquad\forall x\in M_d(\CC)\ ,
\end{split}
\ee
where $\rho$ is any density matrix in $M_d(\CC)$: it represents the evaluation of $\omega$ on single site observables. Furthermore, we shall focus upon
translation-invariant states $\omega$ that are also \textit{clustering}, namely  they do not support correlations between far away localized operators:
\be
\label{clustates}
\lim_{n\to\pm\infty}\omega\Big(a^\dag\tau^n(b)c\Big)=\omega(a^\dag\,c)\,\omega(b)\quad \forall a,b,c\in\ca\ .
\ee 

In an infinite quantum spin chain, the operators belonging to strictly local sub-algebras contribute to the microscopic description of the system. In order to move to a description based on collective observables supported by infinitely many lattice sites, a proper scaling ought to be chosen. Most often, mean-field observables are considered; these are constructed as averages of $N$ copies of a same single site observables $x$, from site $j=0$ to site $N-1$:
\be
X_N=\frac{1}{N}\sum_{k=0}^{N-1}x^{(k)}\ ,\qquad x\in M_d(\CC)\ .
\label{macro}
\ee

Given any state $\omega$ on $\ca$, the Gelfand-Naimark-Segal (GNS) construction~\cite{Bratteli} provides a representation $\pi_\omega:\ca\mapsto \pi_\omega(\ca)$ of $\ca$ on a Hilbert space $\HH_\omega$ with a cyclic vector $\vert\omega\rangle$ such that the linear span of vectors of the form $\vert\Psi_a\rangle=\pi_\omega(a)\vert\omega\rangle$ is dense in $\HH_\omega$ and
$$
\omega(b^\dag\,a\,c)=\langle \Psi_b\vert\pi_\omega(a)\vert\Psi_c\rangle\ ,\qquad a,b,c\in\ca\ .
$$

In case of a clustering state $\omega$, one can then consider the limit for $N\to\infty$ of 
$\omega\left(b^\dagger X_N\, c\right)$ where $b,c\in\ca$, obtaining
\be
\lim_{N\to\infty}\omega\left(b^\dagger X_N\, c\right)=\omega(b^\dag c)\,\omega(x)\ .
\label{MET}
\ee
Indeed, for any integer $N_0<N$ one can write:
$$
\lim_{N\to\infty} \omega\left(b^\dagger X_N\, c\right)=
\lim_{N\to\infty} \omega\Bigg( b^\dagger \bigg( \frac{1}{N} \sum_{k=0}^{N_0} x^{(k)} 
+ \frac{1}{N} \sum_{k=N_0+1}^{N-1} x^{(k)}\bigg)\, c\Bigg)\ .
$$
The first contribution in the r.h.s. clearly vanishes in the large $N$ limit. Concerning the second term, since strictly local operators are norm dense in $\mathcal{A}$, without
loss of generality one can assume $c$ to have support on sites with labels $\leq N_0$, so that
one can exchange it with $\sum_{k=N_0+1}^{N-1} x^{(k)}$. Using the clustering property (\ref{clustates})
one immediately gets the result (\ref{MET}). This means that in the so-called weak operator topology,
{\it i.e.} under the state average, $X_N$ converges to a scalar multiple of the identity operator:
\be
\lim_{N\to\infty} X_N = \omega(x)\, {\bf 1}\ .
\ee
Furthermore, in Appendix A it is proved that, given $x,y\in M_d(\CC)$, the product $X_NY_N$ of the mean-field-observables  weakly converges to $\omega(x)\omega(y)$: 
\be
\label{macro1}
\lim_{N\to\infty}\omega\bigg(a^\dag X_N\,Y_N\,b\bigg)=\omega(a^\dag b)\,\omega(x)\,\omega(y)\ .
\ee
It thus follows that the weak-limits of mean-field observables commute and give rise to
a commutative algebra. 
\medskip

\begin{remark}
\label{rem0}
{\rm Since they commute, mean-field observables pertain to the macroscopic, classical description level with no fingerprints of the microscopic quantum framework from which they emerge.
Instead, as outlined in the Introduction, we are interested in studying which collective observables extending over the whole spin chain may keep some degree of quantum behaviour; clearly, a less rapid scaling than $1/N$ is necessary.}
\qed
\end{remark}

\medskip

Let us then consider combinations of microscopic operators of the form:
\be
F_N(x)=\frac{1}{\sqrt{N}}\sum_{k=0}^{N-1}\left(x^{(k)}-\omega(x)\right)\ ;
\label{FL}
\ee
they are quantum analogues of the fluctuation variables in classical stochastic theory: we shall refer to them as
``local quantum fluctuations''.
Their large $N$ limit with respect to clustering states $\omega$ has been thoroughly investigated 
in~\cite{Goderis1,Verbeure} yielding a non-commutative central limit theorem and  an associated  quantum  fluctuation algebra. 

The scaling $1/\sqrt{N}$ is not sufficient to guarantee convergence in the weak-operator topology.
Nevertheless, consider $x,y\in M_d(\CC)$ such that $\left[x\,,\,y\right]=z$. Since $[x^{(j)}\,,\, y^{(\ell)}]=\delta_{j\ell}\,z^{(j)}$, with respect to a clustering state $\omega$, one has, following the same strategy used in (\ref{MET}),
\be
\lim_{N\to\infty} \omega\left(a^\dag\left[F_N(x),F_N(y)\right]b\right)=\lim_{N\to\infty} 
\frac{1}{N}\sum_{j=0}^{N-1}\omega\left(a^\dag z^{(j)}\, b\right)=\omega(a^\dag b)\,\omega(z) ,
\ee
for all $a,b\in\ca$.

Therefore, commutators $\left[F_N(x),F_N(y)\right]$ of local fluctuations do not vanish when $N\to\infty$. 
They behave as mean-field quantities and tend, in the weak-topology, to scalar quantities $\omega(z)$.
This fact indicates that, at the mesoscopic level, the emerging quantum structure is endowed with 
a non-commutative algebraic structure.
\medskip

\begin{remark}
\label{rem0a}
{\rm Because they emerge from a scaling $1/\sqrt{N}$, quantum fluctuations  provide a description level in 
between the microscopic (strictly local)  and the macroscopic (mean-field) ones. We will refer to it as to a \textit{mesoscopic} description level: though collective, it nevertheless inherits to a certain extent the 
quantum, non-commutativity  of the microscopic system from which it emerges.}
\qed
\end{remark}
\medskip

\subsection{Quantum fluctuation algebra}

In order to construct a quantum fluctuation algebra, one starts by selecting a set of $d$ linearly independent single-site microscopic observables 
$\chi=\{x_j\}_{j=1}^d$, $x_j\in M_p(\CC)$, $x_j=x_j^\dag$, and then considers their local elementary fluctuations $F_N(x_j)$ and the large $N$ limit of the expectations of polynomials in the operators $F_N(x_j)$ with respect to a clustering state $\omega$.
In particular, the observables $x_j$ are chosen such that $1)$ the coefficients
\be
\label{cormat}
C^{(\omega)}_{ij}:=\lim_{N\to\infty}\omega\big(F_N(x_i)F_N(x_j)\big)\ ,
\ee
give a well defined positive $d\times d$ correlation matrix $C^{(\omega)}$,
and $2)$ that the characteristic functions $\omega\big(e^{itF_N(x_j)}\big)$ converge to a Gaussian function in $t$ with zero mean and covariance matrix $\Sigma^{(\omega)}$ with entries  
\be
\label{covmat}
\Sigma^{(\omega)}_{ij}=\frac{1}{2}\,\lim_{N\to\infty}\omega\big(\left\{F_N(x_i)\,,\,F_N(x_j)\right\}\big)\ .
\ee 
We shall then define the following bilinear, positive and symmetric map on the real linear span 
$\mathcal{X}=\Big\{x_r=\sum_{i=1}^d r_i\, x_i,\ x_i\in\chi,\ r_i\in\mathbb{R}\Big\}$,
\be
\label{BiFo}
(x_{r_1},x_{r_2})\to (r_1,\Sigma^{(\omega)}\,r_2)=\sum_{i,j=1}^dr_{1i}\, r_{2j}\,\Sigma^{(\omega)}_{ij}\ .
\ee
\medskip

A multivariate version of the {\sl normal quantum central limit theorem} is based on a restricted class of clustering states.
\medskip
 
\begin{definition}
\label{2}
A finite set of self-adjoint operators $\chi=\{x_j\}_{j=1}^d$ is said to have ``normal multivariate quantum fluctuations'' 
with respect to a clustering state $\omega$ if the latter obeys the condition: 
\be
\label{const2}
\sum_{k=0}^{\infty}\Big|\omega(x^{(0)}_ix_j^{(k)})-\omega(x_i)\omega(x_j)\Big|<+\infty\quad\forall x_i,x_j\in\chi\ ,
\ee
and further satisfies
\bea
\label{Gauss1}
\lim_{N\to\infty}\omega\big(F_N^2(x_j)\big)&=& \Sigma^{(\omega)}_{jj}\\
\label{Gauss2}
\lim_{N\to\infty}\omega(e^{itF_N(x_j)})&=&{\rm e}^{-\frac{t^2}{2}\Sigma^{(\omega)}_{jj}}\qquad 
\forall x_j\in\chi,\ \forall\, t\in\mathbb{R}\ .	
\eea
\end{definition}
\medskip
\noindent
We expect quantum fluctuations to obey the canonical commutation relations in the limit of large $N$; then, exponentials of local fluctuations ${\rm e}^{iF_N(x_j)}$ are expected to satisfy Weyl-like commutation relations in that limit
\cite{Verbeure}. 

In full generality, given a set $\chi$ as in {\sl Definition \ref{2}}, one equips the real vector space $\mathcal{X}$ with the symplectic (bilinear) form 
\be
\label{sympform1}
(r_1,r_2)\to(r_1,\sigma^{(\omega)} r_2)=\sum_{i,j=1}^dr_{1i}\,r_{2j}\,\sigma^{(\omega)}_{ij}\ ,
\ee 
defined by the anti-symmetric matrix
$\sigma^{(\omega)}$ with entries
\be
\label{sympform}
\sigma^{(\omega)}_{ij}:=-i\lim_{N\to\infty}\omega\left(\left[F_N(x_i)\,,\,F_N(x_j)\right]\right)=-\sigma^{(\omega)}_{ji}\ .
\ee
The relation between the correlation, covariance and symplectic matrices is 
\be
\label{corcovsym}
C^{(\omega)}=\Sigma^{(\omega)}\,+\,\frac{i}{2}\sigma^{(\omega)}\ .
\ee

For sake of compactness, using the linearity of the map 
that associates an operator $x$ with its local quantum fluctuation $F_N(x)$,
the following notation will be used:
\bea
\label{qfa1}
(r\,,\,F_N)&:=&\sum_{j=1}^dr_j\,F_N(x_j)=F_N(x_r)\qquad\forall x_r\in \chi\ ,\\
\label{qfa2}
W_N(r)&:=&{\rm e}^{i(r\,,\,F_N)}={\rm e}^{iF_N(x_r)}\ ,
\eea
where $F_N=(F_N(x_1),F_N(x_2),\ldots, F_N(x_d))^{tr}$ is the vector of local fluctuations.

With the aid of the symplectic matrix $\sigma^{(\omega)}$, one can construct the abstract \emph{Weyl} algebra 
$\mathcal{W}$, linearly generated by
the Weyl operators $W(r)$, $r\in\mathbb{R}^d$, obeying the relations:
\be
W^\dag(r)=W(-r)\ ,\quad
W(r_1)W(r_2)=W(r_1+r_2)\,{\rm e}^{-\frac{i}{2}(r_1,\sigma^{(\omega)} r_2)}\ .
\label{Weyl}
\ee
The following theorem specifies in which sense the large $N$ limit of the local exponentials $W_N(r)$ can be identified with Weyl operators $W(r)$ \cite{Verbeure}.
\medskip

\begin{theorem}
\label{th1}
Any set $\chi$ with normal fluctuations with respect to a clustering state $\omega$ admits a regular \emph{quasi-free} state 
$\Omega$ on a Weyl algebra $\mathcal{W}(\chi,\sigma^{(\omega)})$ such that:
\bea
\nonumber
&&\hskip-1cm
\lim_{N\to\infty}\omega\big(W_N(r_1)\,W_N(r_2)\big)=
\exp\Bigg(-\frac{\big((r_1+r_2),\Sigma^{(\omega)}\,(r_1+r_2)\big)}{2}\,-\frac{i}{2}\,\big(r_1,\sigma^{(\omega)}r_2\big)\Bigg)\\
&&\hskip 3.7cm
=\,\Omega\big(W(r_1)W(r_2)\big)\ , 
\label{quasistate}
\eea
for all $x_{r_{1,2}}\in \mathcal{X}$ .
\end{theorem}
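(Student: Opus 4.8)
The plan is to reduce the two--operator statement to a one--operator quantum central limit theorem supplemented by an \emph{emergent} canonical commutation relation, and then to recognise the resulting Gaussian as the two--point function of a quasi--free state on the abstract Weyl algebra. First I would establish the single--operator limit
\[
\lim_{N\to\infty}\omega\big(W_N(r)\big)={\rm e}^{-\frac{1}{2}(r,\Sigma^{(\omega)}r)}\ ,\qquad r\in\RR^d\ .
\]
Since $F_N(x_r)=(r,F_N)=F_N\!\big(\sum_i r_i x_i\big)$ and $x_r=\sum_i r_i x_i$ is itself a self--adjoint single--site observable lying in $\mathcal{X}$, the summability condition (\ref{const2}) is inherited by $x_r$ via the triangle inequality, so the normal--fluctuation hypotheses (\ref{Gauss1})--(\ref{Gauss2}) apply to $x_r$. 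Evaluating (\ref{Gauss2}) at $t=1$, and using that the variance $\lim_{N\to\infty}\omega\big(F_N^2(x_r)\big)$ equals $(r,\Sigma^{(\omega)}r)$ (which follows from (\ref{covmat}) together with the antisymmetry of $\sigma^{(\omega)}$ recorded in (\ref{corcovsym})), yields the displayed limit.

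Next I would derive the emergent Weyl relation at finite $N$. Exactly as in the commutator computation already performed above,
\[
\big[F_N(x_{r_1}),F_N(x_{r_2})\big]=\frac{1}{N}\sum_{k=0}^{N-1}z^{(k)}\ ,\qquad z=[x_{r_1},x_{r_2}]\ ,
\]
which, being a mean--field observable, converges weakly to the scalar $\omega(z)\,{\bf 1}=i\,(r_1,\sigma^{(\omega)}r_2)\,{\bf 1}$ by the definition (\ref{sympform}) of $\sigma^{(\omega)}$. Because this leading commutator becomes central while every higher nested commutator carries an additional factor $1/\sqrt{N}$ and hence vanishes, a Baker--Campbell--Hausdorff argument gives, under the state average,
\[
W_N(r_1)\,W_N(r_2)\ \longrightarrow\ {\rm e}^{-\frac{i}{2}(r_1,\sigma^{(\omega)}r_2)}\,W_N(r_1+r_2)\ .
\]
Applying the single--operator limit of the first step to $r_1+r_2$ then produces
\[
\lim_{N\to\infty}\omega\big(W_N(r_1)W_N(r_2)\big)={\rm e}^{-\frac{i}{2}(r_1,\sigma^{(\omega)}r_2)}\,{\rm e}^{-\frac{1}{2}\left((r_1+r_2),\Sigma^{(\omega)}(r_1+r_2)\right)}\ ,
\]
which is precisely the asserted Gaussian expression.

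Finally I would identify this limit with a quasi--free state. On the abstract Weyl algebra $\mathcal{W}(\chi,\sigma^{(\omega)})$ I set $\Omega\big(W(r)\big)={\rm e}^{-\frac{1}{2}(r,\Sigma^{(\omega)}r)}$; regularity is immediate because $t\mapsto\Omega\big(W(tr)\big)$ is continuous, and positivity of $\Omega$ follows from the positivity of the correlation matrix $C^{(\omega)}=\Sigma^{(\omega)}+\frac{i}{2}\sigma^{(\omega)}\ge 0$ recorded in (\ref{corcovsym}), which is exactly the bound guaranteeing that a Gaussian functional defines a state on a CCR algebra. Using the Weyl relation (\ref{Weyl}) one obtains $\Omega\big(W(r_1)W(r_2)\big)=\Omega\big(W(r_1+r_2)\big)\,{\rm e}^{-\frac{i}{2}(r_1,\sigma^{(\omega)}r_2)}$, reproducing the same expression and closing the identification.

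The hard part will be making the Baker--Campbell--Hausdorff step rigorous: the fluctuations $F_N(x_r)$ have norms growing with $N$, so the exponential series cannot be manipulated termwise and one must argue through the bounded unitaries $W_N(r)$ instead. Concretely, the technical core is the estimate
\[
\lim_{N\to\infty}\Big[\omega\big(W_N(r_1)W_N(r_2)\big)-{\rm e}^{-\frac{i}{2}(r_1,\sigma^{(\omega)}r_2)}\,\omega\big(W_N(r_1+r_2)\big)\Big]=0\ ,
\]
which is controlled by showing that the non--scalar remainder $\tfrac{1}{\sqrt{N}}F_N(z)$ of the commutator contributes negligibly under the clustering state. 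This is the substance of the quantum central limit theorem of Goderis--Verbeure--Vets invoked as \cite{Verbeure}, whose Lie--Trotter-type interpolation argument I would either cite directly or reconstruct.
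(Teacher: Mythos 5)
The first thing to note is that the paper contains no proof of Theorem~\ref{th1} at all: it is quoted from the literature (the Goderis--Verbeure--Vets quantum central limit theorem, cited as \cite{Verbeure}), so there is no internal argument to compare yours against. Your sketch reconstructs the standard route of that literature, and its outline is correct: (i) a one--operator Gaussian limit applied to $x_{r_1+r_2}$, using that $(r,C^{(\omega)}r)=(r,\Sigma^{(\omega)}r)$ for real $r$ by antisymmetry of $\sigma^{(\omega)}$ in \eqref{corcovsym}; (ii) the weak convergence of the commutator $[F_N(x_{r_1}),F_N(x_{r_2})]$ to the scalar $i(r_1,\sigma^{(\omega)}r_2)$, promoted to the factorization $\omega\big(W_N(r_1)W_N(r_2)\big)\approx{\rm e}^{-\frac{i}{2}(r_1,\sigma^{(\omega)}r_2)}\,\omega\big(W_N(r_1+r_2)\big)$; (iii) the identification of the resulting Gaussian functional as a regular quasi--free state on $\mathcal{W}(\chi,\sigma^{(\omega)})$, with positivity guaranteed by $C^{(\omega)}=\Sigma^{(\omega)}+\frac{i}{2}\sigma^{(\omega)}\geq 0$ and the product formula checked against the Weyl relation \eqref{Weyl}. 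This is exactly what the cited theorem asserts, and your honest flagging of step (ii) as the technical core is accurate: that estimate \emph{is} the theorem, and deferring it to \cite{Verbeure} is legitimate here precisely because the paper does the same.

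One point deserves more care than you give it. You claim that the normal--fluctuation hypotheses ``apply to $x_r$'' because $x_r\in\mathcal{X}$ and the summability condition \eqref{const2} is inherited by linear combinations via the triangle inequality. The inheritance argument is valid only for \eqref{const2}; it does nothing for \eqref{Gauss2}. As Definition~\ref{2} is literally written, the Gaussian limit \eqref{Gauss2} is assumed only for the generators $x_j\in\chi$, and marginal Gaussianity of finitely many observables does not formally imply Gaussianity of their linear combinations (this is the quantum analogue of the classical fact that normal marginals do not force joint normality). So either one reads Definition~\ref{2}, as the word ``multivariate'' and the usage in \cite{Verbeure} suggest, as requiring \eqref{Gauss2} for every element of the real span $\mathcal{X}$ --- in which case your step (i) is immediate --- or the extension to the span must itself be proven from the clustering hypotheses, which is again part of the Goderis--Verbeure--Vets machinery. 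As it stands, your proposal silently assumes the stronger reading; it should say so explicitly, since otherwise step (i), and not only the Baker--Campbell--Hausdorff estimate, has a gap.
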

\medskip

The regularity and quasi-free character of $\Omega$ follow from \eqref{Gauss2}; indeed, as explicitly shown
by \eqref{Gauss2}, $\Omega$ is a Gaussian state (see Section 5). In particular,
its regularity guarantees that one can write 
\be
\label{reg}
W(r)={\rm e}^{iF(x_r)}={\rm e}^{i(r,F)}\ ,\qquad (r,F)=\sum_{i=1}^dr_i\,F(x_i)\ ,
\ee 
where $F$ is an operator-valued $d$-dimensional vector with components $F(x_i)$ that are collective field operators satisfying canonical commutation relations 
\be
\left[F(x_{r_1})\,,\,F(x_{r_2})\right]=\left[(r_1,F)\,,\,(r_2,F)\right]=i\,\big(r_1,\sigma^{(\omega)} r_2\big)\ .
\label{COMSIGMA}
\ee
We shall refer to the Weyl algebra $\mathcal{W}(\chi,\sigma_\omega)$ generated by the strong-closure (in the GNS representation based on $\Omega$) of the linear span of Weyl operators as the {\sl quantum fluctuation algebra}.

\section{Spin-1/2 chains}
\label{EX}

In this section we consider two quantum spin chains whose spins do not directly interact, but are immersed into a same environment in such a way that they are subjected to a same external quantum noise and behave as open quantum systems undergoing a microscopic dissipative quantum dynamics described by a semi-group with a generator in
Kossakowski-Lindblad form.
Our aim is to study which kind of mesoscopic time-evolution emerges from a given microscopic dynamics and how it affects a suitably constructed quantum fluctuation algebra.  In particular, we shall show that, solely because of its statistical mixing properties, the noisy part of the microscopic generator may induce entanglement between the two spin chains at the mesoscopic level.

\subsection{Quantum fluctuations}

We will first focus upon the microscopic double spin chain for which we shall construct a specific fluctuation algebra without considering any dynamics.

At each site of both chains we attach the algebra $M_2(\CC)$ generated by the $2\times 2$ identity matrix 
and the Pauli matrices $\sigma_{1,2,3}$ satisfying the algebraic rules
$$
[\sigma_i\,,\,\sigma_j]=2i\epsilon_{ijk}\,\sigma_k\ .
$$ 
We shall pair sites from the two chains so that $\ca^{(k)}$ will denote the matrix algebra 
$M_4(\CC)=M_2(\CC)\otimes M_2(\CC)$ supported by the $k$-th sites of the double chain. 
The \emph{quasi-local} algebra $\ca$ describing the double chain will then be the tensor product of the quasi-local algebras of the single chains, with $a\otimes 1$ and $1\otimes a$ denoting operators pertaining to the first, respectively the second chain. 

We shall equip $\ca$ with the microscopic thermal state at inverse temperature $\beta$ 
constructed from the infinite tensor product of a same single site thermal state with Hamiltonian:
\begin{equation}
H=\frac{\eta}{2}\big(\sigma_3\otimes\bold{1}+\bold{1}\otimes \sigma_3\big)\ .
\label{MICHAM}
\end{equation}
Explicitly, one then has 
\be
a\mapsto\omega_\beta(a)=\tr_{[q,p]}
\left(\bigotimes_{k=p}^q\rho_\beta^{(k)}\, a\right)\ ,\quad\rho_\beta^{(k)}:=\frac{{\rm e}^{-\beta H^{(k)}}}{\tr\left({\rm e}^{-\beta H^{(k)}}\right)}\ ,
\label{STATE}
\ee
where $H^{(k)}$ coincides with the hamiltonian in \eqref{MICHAM} 
for all $k$ and $a$ is any operator belonging to the strictly local algebra 
$\ca_{[q,p]}\otimes\ca_{[q,p]}$ (more general translationally invariant, clustering states are discussed in \cite{BCF2}). Further, $\tr_j$, respectively $\tr_{[q,p]}$, will denote the trace with respect to the Hilbert spaces $\CC^4$, respectively $\CC^{4^{q-p+1}}$, relative to the site $j\in[p,q]$, respectively to all sites $j\in[p,q]$. 
Setting $\epsilon=\tanh\left(\beta\eta/2\right)$, the only non-vanishing single site expectations are:
\bea
\label{thermexp1}
\omega_\beta\Big(\sigma^{(j)}_3\otimes 1\Big)&=&\omega_\beta\Big(1\otimes \sigma^{(j)}_3\Big)=\frac{\tr\left({\rm e}^{-(\beta\eta/2)\,\sigma_3}\, \sigma_3\right)}{2\cosh(\beta\eta/2)}=-\,\epsilon\\
\label{thermexp2}
\omega_\beta(\sigma^{(j)}_3\otimes \sigma_3^{(k)})&=&\epsilon^2\ .
\eea 
The state $\omega_\beta$ is thus an equilibrium thermal state with respect to the hamiltonian
time-evolution automorphism $\tau_t$ of $\ca$:
namely, $\omega_\beta$ satisfies the Kubo-Martin-Schwinger (KMS) relations at inverse temperature $\beta$ given by
\be
\label{KMS1}
\omega_\beta\big(a\,\tau_t[b]\big)=\omega_\beta\big(\,\tau_{t-i\beta}[b]\,a\big)\qquad\forall\ a,b\in\ca\ .
\ee
Such a state does not support correlations between the two spin chains and manifestly obeys the clustering condition in \eqref{clustates}. 

In the following, we shall consider the quantum fluctuation algebra based upon the self-adjoint subset $\chi=\{x_j\}_{j=1}^8$ consisting of the following $4\times 4$ hermitean matrices 
\bea
\label{matrix}
&&x_1=\sigma_1\otimes\bold{1}\ ,\ 
x_2=\sigma_2\otimes\bold{1}\ ,\ x_3=\bold{1}\otimes \sigma_1\ , \ x_4=\bold{1}\otimes \sigma_2\\
\label{matrixa}
&&x_5=\sigma_1\otimes\sigma_3\ , \ x_6=\sigma_2\otimes\sigma_3\ ,\ x_7=\sigma_3\otimes \sigma_1\ ,\ 
x_8=\sigma_3\otimes \sigma_2\ .
\eea
One easily sees that $\omega_\beta(x_j)=0$ for all $j=1,\dots,8$. Further, the conditions in 
{\sl Definition \ref{2}} are satisfied; indeed,
\be
\label{micst}
\sum_{k=0}^\infty\Big|\omega_\beta(x^{(0)}_ix^{(k)}_j)-\omega_\beta(x_i)\,\omega_\beta(x_j)\Big|
=\Big|\omega_\beta(x_ix_j)\Big|\ .
\ee

\begin{remark}
\label{remchoice}
{\rm There are $16$ single site observables of the form $\sigma_\mu\otimes\sigma_\nu$, $\mu,\nu=0,1,2,3$, $\sigma_0=\bold{1}$.
It turns out that the set of local fluctuation operators,
\be
\label{fluctexpl}
F_N(x_j)=\frac{1}{\sqrt{N}}\sum_{k=0}^{N-1}\Big(x^{(k)}_j-\omega(x_j)\Big)
=\frac{1}{\sqrt{N}}\sum_{k=0}^{N-1} x^{(k)}_j\ ,
\ee 
corresponding to the chosen subset $\chi$, gives rise to a set of mesoscopic bosonic operators $F(x_j)$, $1\leq j\leq 8$ whose Weyl algebra commutes with the one generated by the remaining eight elements.
Moreover, since the matrices $x_{1,2}$ and $x_{3,4}$ do refer to single sites belonging to different spin chains, they will provide collective operators associated to two different mesoscopic degrees of freedom.}
\qed
\end{remark}
\medskip

The microscopic state $\omega_\beta$ is a tensor product state and translation invariant; therefore, from \eqref{cormat}, one gets the correlation matrix $C^{(\beta)}$ with entries
\be
\label{modcorrmat}
C^{(\beta)}_{ij}=\lim_{N\to\infty}\omega_\beta\Big(F_N(x_i)F_N(x_j)\Big)
=\tr(\rho_\beta\, x_i\,x_j)\ .
\ee
with $\rho_\beta$ as in (\ref{STATE}).
The explicit form of this $8\times 8$ matrix is given in Appendix B; it can be expressed as a three-fold tensor products of $2\times 2$ matrices:
\be
C^{(\beta)}=\left(\bold{1}-\epsilon\,\sigma_1\right)\otimes\bold{1}\otimes\left(\bold{1}+
\epsilon\,\sigma_2\right)\ .
\label{modcorrmat2}
\ee
In computing tensor products, we adopt the convention in which 
the entries of a matrix are multiplied by the matrix to its right.

According to the preceding section, the algebraic relations among the emerging mesoscopic operators $F(x_j)$ are described by the symplectic matrix with entries 
$\sigma^{(\beta)}_{ij}=-i\tr\big(\rho_\beta\,[x_i\,,\,x_j]\big)$,
\be
\label{COMM1}
\sigma^{(\beta)}=-2i\epsilon(\bold{1}-\epsilon\sigma_1)\otimes\bold{1}\otimes\sigma_2
\ee
and by the covariance matrix with entries $
\Sigma^{(\beta)}_{ij}=\frac{1}{2}\,\tr\big(\rho_\beta\left\{x_i\,,\,x_j\big\}\right)$,
\be
\label{covmat2}
\Sigma^{(\beta)}=\frac{1}{2}\left(C^{(\beta)}+(C^{(\beta)})^{tr}\right)
=(\bold{1}-\epsilon\sigma_1)\otimes \bold{1}\otimes \bold{1}\ ,
\ee
where $tr$ means matrix transposition.
Notice that the symplectic matrix $\sigma^{(\beta)}$ is invertible; explicitly one finds:
\be
\label{invCOMM1}
(\sigma^{(\beta)})^{-1}=\frac{1}{2c^2\epsilon}\left(\bold{1}+\epsilon\sigma_1\right)\otimes\bold{1}\otimes\,i\sigma_2\ ,
\qquad c=\sqrt{1-\epsilon^2}\ .
\ee

The fluctuation algebra $\mathcal{W}(\chi,\sigma^{(\beta)})$ is then obtained from the linear span of exponential operators of the form (see the discussion after {\sl Theorem \ref{th1}})
\be
\label{Weyl1}
W(r)={\rm e}^{iF(x_r)}={\rm e}^{i\sum_{j=1}^8 r_j\,F(x_j)}={\rm e}^{i(r\,,\,F)}\ ,
\qquad x_r=\sum_{j=1}^8r_j\,x_j\ ,
\ee
where the vector $r$ is now eight dimensional, $r=(r_1,\ldots,r_8)^{tr}\in\RR^8$, while $F$ is the eight-dimensional operator valued vector with components $F(x_j)$, $1\leq j\leq 8$.
The mesoscopic Weyl operators arise from limits of microscopic exponential operators 
\bea
\label{Weyl8}
W_N(r)&:=&{\rm e}^{iF_N(x_r)}={\rm e}^{i(r\,,\,F_N)}\\
\label{Weyl8a}
(r\,,\,F_N)&:=&\sum_{j=1}^8r_j\,F_N(x_j)=F_N(x_r)\ ,
\eea
where $F_N=\{F_N(x_j)\}_{j=1}^8$ is the vector of local fluctuations. From \eqref{Weyl} and \eqref{COMM1}, one has:
\be
\label{displ}
W(r)\,F(x_i)\,W^\dag(r)=F(x_i)\,+\,i\big[(r,F)\,,\,F(x_i)\big]=F(x_i)\,+\,\sum_{j=1}^8
\sigma^{(\beta)}_{ij}\,r_j\ .
\ee

\subsection{Fluctuation algebra}

The Weyl algebraic structure associated with the chosen set $\chi$ and the thermal state $\omega_\beta$ allows for the mesoscopic description to be formulated in terms of four-mode bosonic annihilation and creation operators 
$a^\#_i\equiv(a_i,\, a_i^\dagger)$, $1\leq i\leq 4$,
satisfying the canonical commutation relations
\be
\label{coomodea}
[a_i\,,\,a^\dag_j]=\delta_{ij}\ ,\quad [a_i\,,\,a_j]=[a_i^\dag\,,\,a^\dag_j]=0\ .
\ee
Indeed, one can write 
\be
\label{Weyl2}
F(x_i)=a(f_i)+a^\dag(f_i)\ ,\quad a^\dag(f_i)=\sum_{j=1}^4\,[f_{i}]_j\,a^\dag_j\ ,\ 1\leq i\leq 8\ ,
\ee
by means of the following four-dimensional vectors $f_i\in\CC^4$, with components
\bea
\label{Weyl3}
&&\hskip -1.2cm
f_1=\sqrt{\epsilon}\begin{pmatrix}
1\cr0\cr0\cr0
\end{pmatrix}\, ,\qquad
f_2=-i\,f_1\, ,\; 
f_3=\sqrt{\epsilon}\begin{pmatrix}
0\cr0\cr1\cr0
\end{pmatrix}\, ,\qquad
f_4=-i\,f_3\\ 
\label{Weyl4}
&&\hskip-1.2cm
f_5=\sqrt{\epsilon}\begin{pmatrix}
-\epsilon\cr\sqrt{1-\epsilon^2}\cr0\cr0
\end{pmatrix}\, ,\qquad
f_6=-i\,f_5\, ,\qquad 
f_7=\sqrt{\epsilon}\begin{pmatrix}
0\cr0\cr-\epsilon\cr \sqrt{1-\epsilon^2}
\end{pmatrix}\, ,\qquad
f_8=-i\,f_7\ .
\eea
It follows that
\be
\label{Weyl5}
\big[F(x_i)\,,\,F(x_j)\big]=2\,i\,\mathcal{I}m\left((f_i,f_j)\right)\ ,\quad
(f_i,f_j)=\epsilon\,\Sigma^{(\beta)}_{ij}\,+\,\frac{i}{2}\sigma^{(\beta)}_{ij}\ .
\ee
Setting 
\be
\label{anncrop}
a=(a_1,a_2,a_3,a_4)^{tr}\ ,\quad  a^\dag=(a^\dag_1,a^\dag_2,a^\dag_3,a^\dag_4)^{tr}\ ,\quad
A=(a,a^\dag)^{tr}\ ,
\ee  
one has 
\be
\label{matrix1}
F=\mathcal{M}\,A\ ,\quad \mathcal{M}=\begin{pmatrix}f_1^\dag&f_1^{tr}\cr
\vdots&\vdots\\
f_8^\dag&f_8^{tr}\end{pmatrix}\ ,
\ee
where $f_i^\dag=(f^*_{i1},f^*_{i2},f^*_{i3},f^*_{i4})$, $f^{tr}_i=(f_{i1},f_{i2},f_{i3},f_{i4})$.
The $8\times 8$ matrix $\mathcal{M}$ can be inverted and used to write $A=\mathcal{M}^{-1}F$.
The explicit expressions of $\mathcal{M}$ and $\mathcal{M}^{-1}$ are reported in Appendix B.

From the structure of $\mathcal{M}^{-1}$, one notices that the creation and annihilation operators $a^\#_1$, respectively $a^\#_3$ come from single site operators $x_{1,2}$, respectively $x_{3,4}$ pertaining to the first, respectively the second chain.
Then, $a^\#_1$ and $a^\#_3$ describe two independent mesoscopic degrees of freedom emerging from different chains. Instead, $a^\#_{2}$ and $a^\#_4$ result from combinations of spin operators involving both chains at the same time.
\medskip

\begin{remark}
\label{rem5}
{\rm If the temperature vanishes, {\it i.e.} $\epsilon=1$, 
the non vanishing purely imaginary entries in $C^{(\beta)}$  are all 
proportional to $\pm 1$ (see (\ref{COMM2})). In such a degenerate case, only two bosonic modes can be accommodated:
\be
\label{inverse0}
a_1^{\dagger}=\frac{ F(x_1)\,+\,i\,F(x_2)}{2}\ ,\quad
a_2^\dagger= \frac{F(x_3)\,+\, i\,F(x_4)}{2}\ .
\ee
This degeneracy is due to a so-called coarse graining effect \cite{Verbeure} which forbids distinguishing the mesoscopic limits of some different fluctuation operators. In other terms, it may happen that
$$
\lim_{N\to\infty}\omega\Big(\big[F_N(x_{r_1})-F_N(x_{r_2})\big]^2\Big)=0\ ,
$$
even when $x_{r_1}\neq x_{r_2}$.}
\qed
\end{remark}
\medskip

In the creation and annihilation operator formalism, the Weyl operators become displacement operators $D(z)$ labeled by complex vectors $z\in\CC^4$.
Let $Z=(z,z^*)^{tr}\in\CC^8$ and $\Sigma_3$ denote the diagonal $8\times 8$ matrix
$\hbox{diag}(1,1,1,1,-1,-1,-1,-1)$; then,
\be
\label{displ1}
D(z):={\rm e}^{-(Z,\Sigma_3\,A)}=\exp\left(\sum_{j=1}^4\Big(z_j\,a^\dag_j-z^*_j\,a_j\Big)\right)\ .
\ee

\begin{lemma}
\label{lemma0}
Given the creation and annihilation operators $a_i^\#$, $1\leq i\leq 4$, Weyl and displacement operators are related by  
\bea
\label{Weyl6}
W(r)&=&{\rm e}^{i(r,F)}=D(z_r)\ ,\qquad
Z_r=\begin{pmatrix}z_r\cr z^*_r\end{pmatrix}=i\Sigma_3\,\mathcal{M}^\dag\,r\\
\label{Weyl6b}
D(z)&=&W(r_z)\ ,
\quad r_z=-i(\mathcal{M}^\dagger)^{-1}\Sigma_3\,Z_r\ .
\eea  
\end{lemma}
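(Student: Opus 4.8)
The plan is to exploit the single linear relation $F=\mathcal{M}\,A$ of \eqref{matrix1}, which expresses the vector of mesoscopic field operators in terms of the annihilation/creation vector $A=(a,a^\dag)^{tr}$. Both $W(r)$ and $D(z)$ are exponentials of linear combinations of the \emph{same} eight operators $a_j^\#$, so it suffices to rewrite the two exponents as linear forms in the components of $A$ and to match their coefficients.

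First I would substitute $F=\mathcal{M}\,A$ into the exponent of $W(r)$ in \eqref{Weyl1}, obtaining $i(r,F)=i\,r^{tr}\mathcal{M}\,A$, where the reality of $r\in\RR^8$ allows $r^{tr}=r^\dag$. From \eqref{displ1} the exponent of the displacement operator is $-(Z,\Sigma_3\,A)=-Z^\dag\Sigma_3\,A$, the bracket being conjugate-linear in its first slot; a direct comparison with $\sum_{j=1}^4(z_j a_j^\dag-z_j^* a_j)$ fixes this convention. Since the components of $A$ are linearly independent operators, the two unitaries coincide as soon as (and, by that independence, only if) the coefficient row-vectors agree, namely
\be
i\,r^{tr}\mathcal{M}=-Z_r^\dag\,\Sigma_3\ .
\ee
Taking the Hermitian adjoint of this vector identity and using that $r$ is real, that $\Sigma_3$ is real-diagonal, hence Hermitian, and involutive, $\Sigma_3^2=\bold{1}$, I would solve for $Z_r$ and recover precisely \eqref{Weyl6}, $Z_r=i\,\Sigma_3\,\mathcal{M}^\dag\,r$.

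Two verifications then remain. The first is a consistency check that this formula returns a vector of the admissible ``doubled'' shape $Z_r=(z_r,z_r^*)^{tr}$ required by the displacement parametrization. Writing the left and right $8\times4$ blocks of $\mathcal{M}$ in \eqref{matrix1} as $\Phi^*$ and $\Phi$, where $\Phi$ is the $8\times4$ matrix with rows $f_i^{tr}$, one finds $\mathcal{M}^\dag\,r=(w,\overline{w})^{tr}$ with $w=\Phi^{tr}r$ for real $r$, so that $Z_r=i\Sigma_3\mathcal{M}^\dag r=(iw,-i\overline{w})^{tr}$; since $(iw)^*=-i\overline{w}$ this is indeed of the form $(z_r,z_r^*)^{tr}$ with $z_r=iw$. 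The second is the inverse relation \eqref{Weyl6b}: invoking the invertibility of $\mathcal{M}$, and hence of $\mathcal{M}^\dag$, established in the text and Appendix B, I would invert $Z_r=i\Sigma_3\mathcal{M}^\dag r$ using $\Sigma_3^2=\bold{1}$ once more to obtain $r_z=-i(\mathcal{M}^\dag)^{-1}\Sigma_3\,Z$, the reality of $r_z$ following from the same block structure read backwards.

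The whole argument is routine linear algebra, and the only point I would treat with real care is the bookkeeping of complex conjugations: keeping straight which slot of $(\cdot,\cdot)$ carries the bar, and confirming that the $Z_r$ produced by \eqref{Weyl6} respects the intrinsic constraint $Z_r=(z_r,z_r^*)^{tr}$. That consistency, rather than any hard estimate, is the genuine content of the lemma.
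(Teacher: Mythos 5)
Your proof is correct, and it follows precisely the route the paper intends: the paper states Lemma~\ref{lemma0} without proof, treating it as the routine consequence of substituting $F=\mathcal{M}\,A$ from \eqref{matrix1} into the exponents of \eqref{Weyl1} and \eqref{displ1} and matching coefficients of the linearly independent operators $a_j,a_j^\dag$. Your two extra verifications --- that $Z_r=i\Sigma_3\mathcal{M}^\dag r$ indeed has the doubled form $(z_r,z_r^*)^{tr}$ thanks to the block structure $\mathcal{M}=(\Phi^*,\Phi)$, and that $r_z$ is real by the corresponding block structure of $\mathcal{M}^{-1}$ --- supply exactly the bookkeeping the paper leaves implicit.
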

\medskip

According to Theorem \ref{th1}, the mesoscopic algebra $\mathcal{W}(\chi,\sigma^{(\beta)})$ inherits a regular quasi-free state from the microscopic state $\omega_\beta$.
\medskip

\begin{proposition}
\label{prop-state}
The quasi-free state $\Omega_\beta$ on the Weyl algebra of quantum fluctuations  $\mathcal{W}(\chi,\sigma^{(\beta)})$ is such that 
\be
\label{qfs1}
\Omega_\beta(W(r))=\exp\Big(-\frac{1}{2}(r\,,\Sigma^{(\beta)}\,r)\Big)\ ,
\ee
with covariance matrix $\Sigma^{(\beta)}$ given by \eqref{covmat2}. In the creation and annihilation operator formalism, it amounts to the expectation functional
$\displaystyle \Omega_\beta(W)=\tr(R_\beta\,W)$, where
\be
\label{qfs2}
R_\beta=\frac{{\rm e}^{-\beta\, K}}{\tr\left({\rm e}^{-\beta\,K}\right)} 
\ ,\quad K=\eta\sum_{j=1}^4a^\dag_j a_j\ ,
\ee
namely to a $KMS$ state at inverse temperature $\beta$  
with respect to the group of automorphisms generated by quadratic hamiltonian $K$.
\label{StateBETA}
\end{proposition}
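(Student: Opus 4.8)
The plan is to read the first, Gaussian expression directly off Theorem \ref{th1} and then to check that, after re--expressing the Weyl operators through the bosonic modes introduced in Lemma \ref{lemma0}, this same expression coincides with the quantum characteristic function of the four--mode Gibbs state $R_\beta$; the KMS property will then follow from the elementary fact that a Gibbs state is KMS for the dynamics generated by its Hamiltonian. First I would specialize \eqref{quasistate} to $\omega=\omega_\beta$ and set $r_2=0$: the symplectic term drops out and one is left with
\be
\Omega_\beta(W(r))=\exp\Big(-\tfrac12\,(r,\Sigma^{(\beta)}\,r)\Big)\ ,
\ee
which is precisely \eqref{qfs1} with the covariance matrix \eqref{covmat2}. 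It thus only remains to show that the quasi--free functional $\Omega_\beta$ is represented by $W\mapsto\tr(R_\beta W)$.

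For this second part I would first compute the characteristic function of $R_\beta$ on the displacement operators \eqref{displ1}. Since $K=\eta\sum_{j=1}^4 a^\dag_j a_j$ in \eqref{qfs2} is a sum of four commuting number operators sharing the single frequency $\eta$, the modes factorize and the standard single--mode thermal result gives
\be
\tr\big(R_\beta\,D(z)\big)=\exp\Big(-\tfrac12\,\coth(\beta\eta/2)\,\textstyle\sum_{j=1}^4|z_j|^2\Big)=\exp\Big(-\tfrac{1}{2\epsilon}\,|z|^2\Big)\ ,
\ee
where I used $\coth(\beta\eta/2)=1/\epsilon$ with $\epsilon=\tanh(\beta\eta/2)$. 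Next, Lemma \ref{lemma0} converts $W(r)$ into $D(z_r)$ with $Z_r=i\Sigma_3\mathcal{M}^\dag r$; using $\Sigma_3^2=\openone$ and $Z_r^\dag Z_r=2|z_r|^2$ for the real label $r$, one obtains $|z_r|^2=\tfrac12\,r^{tr}\mathcal{M}\mathcal{M}^\dag r$.

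The decisive step is then the identity $\mathcal{M}\mathcal{M}^\dag=2\epsilon\,\Sigma^{(\beta)}$, which follows from \eqref{Weyl5}: inspecting the rows $(f_i^\dag,f_i^{tr})$ of $\mathcal{M}$ shows that $(\mathcal{M}\mathcal{M}^\dag)_{ij}=(f_i,f_j)+(f_i,f_j)^*=2\,\mathcal{R}e(f_i,f_j)=2\epsilon\,\Sigma^{(\beta)}_{ij}$, the antisymmetric contribution $\sigma^{(\beta)}$ cancelling in the Hermitian combination. Substituting back yields $|z_r|^2=\epsilon\,(r,\Sigma^{(\beta)}r)$, whence $\tr(R_\beta D(z_r))=\exp(-\tfrac12(r,\Sigma^{(\beta)}r))=\Omega_\beta(W(r))$. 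Since the Weyl operators generate $\mathcal{W}(\chi,\sigma^{(\beta)})$ and such a functional is fixed by its values on them, this establishes $\Omega_\beta(\cdot)=\tr(R_\beta\,\cdot)$. The KMS statement at inverse temperature $\beta$ with respect to the automorphism group $\alpha_t(\cdot)={\rm e}^{itK}(\cdot){\rm e}^{-itK}$ generated by $K$ is then the standard Gibbs--state KMS property \cite{Bratteli}.

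I expect the only genuinely non--routine point to be the bookkeeping that ties the abstract quasi--free data to the mode picture, namely the identity $\mathcal{M}\mathcal{M}^\dag=2\epsilon\,\Sigma^{(\beta)}$ together with the correct $\Sigma_3$--normalization of $|z_r|^2$; one must keep track of the fact that the real part of $(f_i,f_j)$ carries exactly the factor $\epsilon$ relative to $\Sigma^{(\beta)}$, so that it cancels the $1/\epsilon$ coming from the thermal occupation $\coth(\beta\eta/2)$. Everything else is the standard Gaussian/thermal--state calculus.
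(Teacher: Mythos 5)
Your proposal is correct and follows essentially the same route as the paper's proof: the substantive step---rewriting $W(r)$ as the displacement operator $D(z_r)$ via Lemma~\ref{lemma0}, computing the thermal characteristic function of $R_\beta$, and matching it against the Gaussian \eqref{qfs1} through the relation \eqref{Weyl5} (in your notation, $\mathcal{M}\mathcal{M}^\dag=2\epsilon\,\Sigma^{(\beta)}$, which indeed cancels the $1/\epsilon$ from $\coth(\beta\eta/2)$)---is exactly the paper's argument. The only minor deviation is in the first half, where you obtain \eqref{qfs1} by specializing Theorem~\ref{th1} to $r_2=0$, while the paper recomputes the large-$N$ limit directly from the tensor-product structure of $\omega_\beta$; both are legitimate, since the normal-fluctuation hypotheses of Theorem~\ref{th1} were verified in the text preceding the Proposition.
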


\noindent
\begin{proof}
The tensor product structure and translation-invariance of $\omega_\beta$ yield 
\beann
\omega_\beta\left(W_N(r)\right)&=&\left(\tr\left(\rho_\beta\,{\rm e}^{i/\sqrt{N}\sum_{j=1}^8r_j\,x_j
}\right)\right)^N\\
&=&\left(1-\frac{1}{2N}\sum_{i,j=1}^8r_i r_j\tr(\rho\,x_i\,x_j)\,+\,o\left(\frac{1}{N}\right)\right)^N\ ,
\eeann
whence, since $r\in\RR^8$,
$$
\lim_{N\to\infty}\omega_\beta\big(W_N(r)\big)=
\lim_{N\to\infty}\omega_\beta\left({\rm e}^{i(r\,,\,F_N)}\right)=
\exp\Big(-\frac{1}{2}(r\,,\Sigma^{(\beta)}\,r)\Big)\ .
$$
On the other hand, writing $W(r)$ as a displacement operator $D(z_r)$, from \eqref{Weyl6}, 
its expectation with respect to the KMS state $\Omega_\beta$ reads
$$
\Omega_\beta(W(r))=\exp\Big(-\frac{\|Z_r\|^2}{4\epsilon}\Big)=\exp\Big(-\frac{\sum_{i,j=1}^8r_ir_j\,(f_i,f_j)}{4\epsilon}\Big)\ .
$$ 
Then, the result follows from \eqref{Weyl5}.
\qed
\end{proof}

\section{Dissipative mesoscopic dynamics}
\label{DDTISC}

Once the algebra of quantum fluctuations is constructed, an important issue is what kind of dynamics emerges at the mesoscopic level from a given microscopic time-evolution.
So far, only unitary microscopic dynamics have been considered and these have given rise to quasi-free mesoscopic unitary time-evolutions \cite{Verbeure}. 

Instead, in the following we shall focus upon the double quantum spin chain introduced before, undergoing an irreversible dissipative microscopic dynamics due to the presence of a common environment to which the chains are weakly coupled.
This setting is typical of open quantum systems, so that the double chain will be affected by decoherence due to noise and dissipation. However, quantum correlations in open systems need not only be destroyed by an environment;
if the latter is suitably engineered, entanglement can be created among two open quantum systems immersed into it by a purely statistical mixing mechanism, namely without the intervention of either direct or environment induced hamiltonian interactions \cite{Braun,Benatti2}.  

The main purpose of the following sections is twofold: on one hand, we show that, from a suitable Lindblad-type microscopic dissipative dynamics, one obtains a mesoscopic quasi-free dissipative semigroup at the fluctuation level. On the other hand, we study under which conditions the capacity of the dissipative microscopic dynamics to entangle spins belonging to different chains can persist at the mesoscopic level.

\subsection{Dissipative microscopic dynamics}

We shall study the fluctuation time-evolution emerging from a microscopic irreversible dynamics generated locally by a generator whose action on $X\in\ca_{[0,N-1]}$ is of Kossakowski-Lindblad form. More specifically,
we shall discuss dynamical equations of the following generic form:
\bea
\label{LINDMICO0a}
&&\hskip-1.2cm
\partial_tX(t)=\LL_N[X(t)]\ ,\quad \LL_N[X]=\HH_N[X]\,+\,\DD_N[X]\\
\label{LINDMICO0b}
&&\hskip -1.2cm
\HH_N[X]=i\Big[H_N\,,\,X\Big]\ ,\quad H_N=\sum_{k=0}^{N-1}\,h^{(k)}\ ,\quad H_N^\dagger=H_N\ ,\\
&&\hskip-1.2cm
\label{LINDMICO0}
\DD_N[X]=\sum_{k,\ell=0}^{N-1}J_{k\ell}\sum_{\mu,\nu=1}^d\,D_{\mu\nu}\,\Big(v_\mu^{(k)}\,X\,(v_{\nu}^\dag)^{(\ell)}
-\frac{1}{2}\left\{v_{\mu}^{(k)}\,(v_\nu^\dag)^{(\ell)})\,,\,X\,\right\}\Big)\\
\label{LINDMICO0c}
&&\hskip-.2cm
=\frac{1}{2}\sum_{k,\ell=0}^{N-1}J_{k\ell}\sum_{\mu,\nu=1}^d\,D_{\mu\nu}\,\Big(v_\mu^{(k)}\,\left[X\,,\,(v_{\nu}^\dag)^{(\ell)}\right]\,+\,\left[v_{\mu}^{(k)}\,,\,X\right]\,(v_\nu^\dag)^{(\ell)}\Big)
\ .
\eea
The single site terms in the Hamiltonian contribution $\HH_N$ are the same for each site with no interactions among spins either belonging to a same chain or to different ones. Instead,
in the purely dissipative contribution $\DD_N$, the mixing action of the Kraus operators $v_\mu$ is weighted by the coefficients $J_{k\ell}\,D_{\mu\nu}$, involving in general different sites. 
Altogether, they form a Kossakowski matrix $J\otimes D$; in order to ensure the complete positivity of the generated dynamical maps $\displaystyle\Phi_t^N={\rm e}^{t\LL_N}$,  both $J$ and $D$ must be positive semi-definite.
We shall leave the operators $h$ and $v_\mu$ completely unspecified; they will be fixed only later, when
discussing specific examples of entanglement generation.

In order to enforce translation invariance, one attaches the same hamiltonian to each sites $h^{(k)}=h$, 
and further consider different site couplings $J_{k\ell}$ of the form
\be
\label{JKL1}
J_{k\ell}=J(|k-\ell|)\ ,\qquad J(0)=:J_0> 0\ .
\ee
Furthermore, we shall assume the strength of the mixing terms to decrease with the site distance in such a way that
\be
\label{JKL2}
\lim_{N\to\infty}\frac{1}{N}\sum_{k,\ell=0}^{N-1}|J_{k\ell}|
=J_0+\lim_{N\to\infty}\frac{1}{N}\sum_{k\neq \ell=0}^{N-1}|J_{k\ell}|\,<\,+\infty\ .
\ee
This request together with \eqref{JKL1} implies that
\be
\label{JKL3}
\lim_{N\to\infty}\frac{1}{N}\sum_{k=0}^{N-1}|J_{k\ell}|=0\qquad\forall\ \ell\in\NN\ .
\ee

\medskip

\begin{remark}
\label{rem4}
{\rm The generator $\LL_N$ does not mediate any direct interaction between different spins since the Hamiltonian in $\HH_N$ does not have interaction terms. On the other hand, the dissipative term $\DD_N$ accounts for environment induced dissipative effects by means of the anti-commutator 
$$
-\frac{1}{2}\left\{\sum_{k,\ell=0}^{N-1}\sum_{\mu,\nu=1}^d\,J_{k\ell}D_{\mu\nu}\,v_\mu^{(k)}\,(v_{\nu}^\dagger)^{(\ell)}\,,\,X
\right\}\ ,
$$
while the remaining term
$$
\sum_{k,\ell=0}^{N-1}\sum_{\mu,\nu=1}^dJ_{k\ell}D_{\mu\nu}\,v_\mu^{(k)}\,X\,(v_{\nu}^\dag)^{(\ell)} \ ,
$$
also known as quantum noise, contributes to statistical mixing.
This latter effect can be better appreciated by diagonalising the non-negative matrix $J\otimes D$ and recasting the corresponding  contribution to $\DD_N$ into the Kraus-Stinespring form $\sum_a L_a\,X\,L_a^\dag$ of completely positive maps. By duality, it gives rise to  a map on local density matrices,
$$
\ca_{[0,N-1]}\ni\rho_N\mapsto\sum_aL^\dag_a\,\rho_N\,L_a \ ,
$$ 
that transforms pure states into mixed ones.
As we shall see, the presence of Kraus operators supported by both chains may allow this mixing term to entangle 
them at the mesoscopic level even in absence of direct spin interactions.}
\qed
\end{remark}
\medskip

An important request needed for the discussion presented in the next sections is the time-invariance of the microscopic state $\omega_\beta$. Were it not so, the state dependent mesoscopic canonical commutation relations would also depend on time, opening the way to mesoscopic non-markovian time-evolutions: such an interesting issue is however outside the scope of the present work and will be addressed elsewhere. 
We shall thus consider local generators $\LL_N$ such that
\be
\label{invst}
\omega_N\circ\Phi_t^N=\omega_N\ ,
\ee
where $\omega_N$ denotes the local state resulting from restricting $\omega_\beta$ to $\ca_{[0,N-1]}$.

\subsection{Emerging mesoscopic dynamics}

We shall now prove that, under certain technical conditions to be specified later, the mesoscopic dynamics that emerges in the limit of large $N$ from the local time-evolution \hbox{$\Phi^N_t={\rm e}^{t\LL_N}$}, ${t\geq0}$, generated by \eqref{LINDMICO0a}-\eqref{LINDMICO0c} is a dissipative semigroup $\Phi_t={\rm e}^{t\LL}$, ${t\geq 0}$, of completely positive, unital quasi-free maps on the algebra of fluctuations.
Namely, that, under the mesoscopic dynamics, displacement operators $W(r)$ of the form \eqref{Weyl6} are mapped into themselves,
\begin{equation}
\Phi_t[W(r)]={\rm e}^{f_r(t)}\,W(r_t)\ ,
\label{DYN}
\end{equation}
where both the function $f_r(t)$ and the time-dependent eight-dimensional real vector $r_t=(r_1^t,\ldots,r_8^t)^{tr}$
are to be determined.
\medskip

\begin{remark}
\label{rem4-1}
{\rm It is worth noting that, due to unitality and complete positivity, the maps $\Phi_t$ obey Schwartz-positivity
\be
\label{Schwpos}
\Phi_t(X^\dag X)\,\geq\,\Phi_t(X^\dag)\,\Phi_t(X)\ .
\ee
Moreover, since the Weyl operators $W(r)$ are unitary,
\be
\label{Schwpos1}
\|\Phi_t(W(r))\|=\left|{\rm e}^{f_r(t)}\right|\leq \|W(r)\|=1\ .
\ee
}
\qed
\end{remark}
\medskip

In order to outline the idea of the proof, we first consider the structure of the time-derivative of the time-evolving local exponentials that give rise to
$\Phi_t[W(r)]$ in \eqref{DYN}.
\medskip

\begin{lemma}
\label{lemma1}
Let $W_N(r)\in\ca_{[0,N-1]}$, $r\in\RR^8$, denote the local exponential operators \eqref{Weyl8} and define
\be
\label{P2}
W^t_N(r)={\rm e}^{f_r(t)}\,W_N(r_t)={\rm e}^{f_r(t)}\,{\rm e}^{i(r_t,F_N)}\ ,
\ee
with $r_t=(r_1^t,\ldots,r_8^t)^{tr}$.
Then, 
\be
\frac{{\rm d}}{{\rm d}t}W_N^t(r)=
\Bigg(\frac{{\rm d}f_r(t)}{{\rm d}t}\,+\,i\left(\dot{r}_t\,,\,F_N\right)\,
-\,\frac{1}{2}\Big[(r_t,F_N)\,,\,(\dot{r_t},F_N)\Big]\Bigg)\,W_N^t(r)
+\,E_{N}\,
\label{lemma1-1}
\ee
with $E_{N}$ vanishing in norm when $N\to\infty$ for all finite $t\geq 0$.
\end{lemma}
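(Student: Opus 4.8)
The plan is to differentiate the explicitly defined operator $W_N^t(r)={\rm e}^{f_r(t)}\,{\rm e}^{i(r_t,F_N)}$ directly, treating the scalar prefactor and the exponential of the operator $(r_t,F_N)$ separately, and then to extract the leading terms while collecting everything else into a remainder $E_N$ that can be shown to vanish in norm. First I would apply the product rule: the derivative of the prefactor gives $\dot f_r(t)\,W_N^t(r)$, which is already one of the three terms on the right-hand side of \eqref{lemma1-1}. The genuine work is differentiating ${\rm e}^{i(r_t,F_N)}$ with respect to $t$, where the generator $i(r_t,F_N)$ is time-dependent and, crucially, does \emph{not} commute with its own time-derivative $i(\dot r_t,F_N)$.

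The key tool here is the standard formula for the derivative of the exponential of a time-dependent operator,
\be
\frac{{\rm d}}{{\rm d}t}{\rm e}^{A(t)}=\int_0^1{\rm e}^{sA(t)}\,\dot A(t)\,{\rm e}^{(1-s)A(t)}\,{\rm d}s\ ,
\ee
with $A(t)=i(r_t,F_N)$ and $\dot A(t)=i(\dot r_t,F_N)$. To bring this into the desired form I would move $\dot A(t)$ past ${\rm e}^{(1-s)A(t)}$ using the commutator expansion, i.e. write ${\rm e}^{sA}\dot A\,{\rm e}^{-sA}=\dot A+s[A,\dot A]+\tfrac{s^2}{2}[A,[A,\dot A]]+\cdots$. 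The essential simplification is that, by \eqref{COMSIGMA}, the commutator $[(r_t,F_N),(\dot r_t,F_N)]$ converges weakly to the \emph{scalar} $i(r_t,\sigma^{(\beta)}\dot r_t)$: at finite $N$ it is a central element only up to an operator that is suppressed by the $1/\sqrt N$ scaling of the fluctuations. Thus the single commutator term $s[A,\dot A]=-s\,[(r_t,F_N),(\dot r_t,F_N)]$ is, to leading order, a $c$-number; integrating $\int_0^1 s\,{\rm d}s=\tfrac12$ produces exactly the $-\tfrac12[(r_t,F_N),(\dot r_t,F_N)]$ term in \eqref{lemma1-1}, while the zeroth-order term $\dot A=i(\dot r_t,F_N)$ produces the $i(\dot r_t,F_N)$ term. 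All higher nested commutators $[A,[A,\dot A]]$ and beyond involve at least two commutators acting on local fluctuations; by the same scaling argument each such bracket carries an additional factor that vanishes as $N\to\infty$, so the entire multiple-commutator tail is absorbed into $E_N$.

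The main obstacle, and the step requiring genuine care, is the norm estimate showing $\|E_N\|\to 0$ uniformly for finite $t\ge0$. The subtlety is that $[(r_t,F_N),(\dot r_t,F_N)]$ is not \emph{exactly} a scalar at finite $N$; it equals $\tfrac1N\sum_{j}(\cdots)^{(j)}$ plus cross terms, so one must split it as a scalar (its limit) plus an operator-valued correction. I would track this correction explicitly: since $F_N(x)$ scales as $1/\sqrt N$ and commutators of single-site operators living at distinct sites vanish, the correction to the single commutator and \emph{all} the higher-order nested commutators are $O(1/\sqrt N)$ in an appropriate sense, and because the Weyl operators $W_N^t(r)$ are unitary (hence norm one) the products remain norm-controlled. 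The cleanest route is therefore to define $E_N$ as precisely the difference between $\tfrac{{\rm d}}{{\rm d}t}W_N^t(r)$ and the three named terms, and then bound it by combining (i) the integral representation of the exponential derivative, (ii) the commutator expansion, and (iii) the uniform boundedness $\|W_N^t(r)\|=1$, showing that every surviving contribution carries a factor tending to zero as $N\to\infty$. I expect the bookkeeping of these commutator corrections, rather than any conceptual difficulty, to be the heart of the argument.
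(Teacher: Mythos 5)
Your proposal is correct and takes essentially the same route as the paper: the series you obtain from Duhamel's formula plus the adjoint expansion, $\frac{{\rm d}}{{\rm d}t}{\rm e}^{A(t)}=\Big(\sum_{n\geq0}\frac{1}{(n+1)!}\KK^n_{A(t)}(\dot A(t))\Big){\rm e}^{A(t)}$ with $A(t)=i(r_t,F_N)$, is exactly the nested-commutator identity the paper proves in its Appendix C, and both arguments then read off the terms $i(\dot r_t,F_N)$ and $-\tfrac12\big[(r_t,F_N),(\dot r_t,F_N)\big]$ from $n=0,1$ and bound the $n\geq2$ tail by the same single-site collapse: since operators at distinct sites commute, $\KK^n_{F_N(x_{r_t})}\big(F_N(\dot x_{r_t})\big)=N^{-(n+1)/2}\sum_{k}\KK^n_{x^{(k)}_{r_t}}(\dot x^{(k)}_{r_t})$ has norm $O\big(N^{(1-n)/2}\big)$, so the tail is $O(1/\sqrt N)$ uniformly on finite time intervals. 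One small remark: the coefficient $\tfrac12$ arises exactly from $\int_0^1 s\,{\rm d}s$, so no appeal to the commutator being ``approximately central'' is needed (or made in the paper) at this stage — your own final bookkeeping, defining $E_N$ as the exact difference and estimating the $n\geq 2$ terms, already dispenses with that framing and coincides with the paper's proof.
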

\medskip

\noindent
\begin{proof} 
\noindent
Recalling (\ref{Weyl8}) and (\ref{Weyl8a}), one can write:
$$
W_N^t(r)={\rm e}^{f_r(t)}\,{\rm e}^{iF_N(x_{r_t})}\ ,\quad
x_{r_t}=\sum_{j=1}^8r_t^j\,x_j\ .
$$
Note that $\displaystyle\dot{F}_N(x_{r_t}):=\frac{{\rm d}}{{\rm d}t}F_N(x_{r_t})=F_N(\dot{x}_{r_t})=(\dot{r}_t,F_N)$. 
Introduce now the following nested commutators:
\be
\label{comms}
\KK_A^n(B):=\Big[A\,,\,\KK^{n-1}_A(B)\Big]\ ,\quad \KK^0_A(B)=B\ .
\ee
Then, as shown in Appendix C, one has:
\bea
\nonumber
&&\hskip-2.5cm
\frac{{\rm d}}{{\rm d}t}\,W_N(r_t)=\left(\sum_{n=1}^\infty\frac{i^n}{n!}\,\KK^{n-1}_
{F_N(x_{r_t})}\Big(F_N(\dot{x}_{r_t})\Big)\right)\,W_N(r_t)\\
\hskip-.8cm
&=&\Big(i\,\left(\dot{r}_t\,,\,F_N\right)
-\frac{1}{2}\Big[F_n(x_{r_t}),\ F_N(\dot x_{r_t})\Big]\Big)\,W_N(r_t)\,+\,E_{N}\\
\label{expder1}
\hskip-.8cm
E_{N}&=&
\sum_{n=3}^\infty\frac{i^n}{n!}\,\KK^{n-1}_{F_N(x_{r_t})}\Big(F_N(\dot{x}_{r_t})\Big)\ ,
\eea
thus recovering the second and third terms in the r.h.s. of \eqref{lemma1-1}. 
Moreover, since operators at different sites commute, one has:
$$
\KK^{n-1}_{F_N(x_{r_t})}\big(F_N(\dot{x}_{r_t})\big)
=\frac{1}{N^{n/2}}\sum_{k=0}^{N-1}\KK^{n-1}_{x^{(k)}_r}(\dot{x}^{(k)}_r)\ .
$$
Further, using
$$
\Big\| \KK^{n-1}_{x_r^{(k)}}(\dot{x}^{(k)}_{r_t}) \Big\|\leq 2^{n-1} \|x_r\|^{n-1} \|\dot x_r\|\ ,
$$
one estimates
$$
\Big\|\KK^{n-1}_{F_N(x_{r_t})}\big(F_N(\dot{x}_{r_t})\big)\Big\|\leq
\frac{1}{\sqrt{N}}\Bigg(\frac{2\|x_{r_t}\|}{\sqrt{N}}\Bigg)^{n-1}\,\|\dot{x}_{r_t}\|\ .
$$
As a consequence, the norm of $E_N$ in (\ref{expder1}) is bounded as
\be
\Big\|E_{N}\Big\|\leq
\frac{{\rm e}^{2\|x_{r_t}\|}}{\sqrt{N}}\, \|\dot{x}_{r_t}\|\ .
\label{bound1}
\ee
Therefore, from $\displaystyle\|\dot{x}_{r_t}\|\leq\sum_{j=1}^8|\dot{r}^j_t|\,\|x_j\|$, it follows that, in the limit of large $N$, $E_{N}$ vanishes in norm uniformly for $0\leq t\leq {\cal T}$, with $\cal T$ any finite, positive constant.
\qed
\end{proof}
\medskip

Notice that, beside the scalar term, the dominant contributions to the time derivative of
$W_N^t(r)$ scale like fluctuations
and mean-field quantities. 
We want to compare them with similarly scaling terms in $\LL_N[W^t_N(r)]$. The following {\sl Lemma}
is then useful.

\begin{lemma}
\label{lemma3}
Given the local dissipative semigroup on $\ca_{[0,N-1]}$ generated by  
\beann
\partial_tX(t)&=&\LL_N[X(t)]\ ,\quad \LL_N[X]=\HH_N[X]\,+\,\DD_N[X]\\
\HH_N[X]&=&i\Big[H_N\,,\,X\Big]\ ,\quad H_N=\sum_{k=0}^{N-1}\,h^{(k)},\quad h^{(k)}=h=h^\dag\\
\DD_N[X]&=&\sum_{k,\ell=0}^{N-1}J_{k\ell}\sum_{\mu,\nu=1}^d\,D_{\mu\nu}\Bigg(v_\mu^{(k)}\,X\,(v_{\nu}^\dag)^{(\ell)}
-\frac{1}{2}\left\{v_{\mu}^{(k)}\,(v_\nu^\dag)^{(\ell)}\,,\,X\,\right\}\Bigg)\ ,
\eeann
with positive semi-definite matrices $J\otimes D=[J_{k\ell}]\otimes[D_{\mu\nu}]$ and
coefficients \hbox{$J_{k\ell}=J(|k-\ell|)$} satisfying \eqref{JKL1} and \eqref{JKL2}, one can 
recast the action of the Lindblad generator on $W_N(r)$ as follows:
\bea
\label{Ldec0}
&&\hskip -1.5cm
\LL_N\big[W_N(r)\big]=\,i\,\LL_N\big[(r,F_N)\big]\,W_N(r)\,-\,\frac{1}{2}\Big[(r,F_N)\,,\,\LL_N\big[(r,F_N)\big]\Big]\,W_N(r)\\
&&\hskip -1cm
+\,\frac{1}{2}\Big(\LL_N\big[(r,F_N)\big]\,(r,F_N)\,+\,(r,F_N)\,\LL_N\big[(r,F_N)\big]
-\,\LL_N\big[(r,F_N)^2\big]\Big)\,W_N(r)\,+\,L_N
\label{Ldec2}
\eea 
with $L_N=\mathcal{R}_N+D_N$ and $\mathcal{R}_N$, $D_N$ vanishing in norm when 
$N\to\infty$.
\end{lemma}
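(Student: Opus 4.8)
The plan is to compute $\LL_N[W_N(r)]$ directly from the norm-convergent exponential series $W_N(r)={\rm e}^{iG}$, $G:=(r,F_N)=F_N(x_r)$, and to organise the outcome by its scaling in $N$, separating the two surviving orders from a norm-vanishing tail. The single structural input I would use is that $\HH_N$ is a derivation whereas $\DD_N$ is not, the failure of the Leibniz rule being measured by the gradient (``carr\'e du champ'') form
$$\Gamma_N(X,Y):=\LL_N[XY]-\LL_N[X]\,Y-X\,\LL_N[Y]=\sum_{k,\ell=0}^{N-1}J_{k\ell}\sum_{\mu,\nu=1}^{d}D_{\mu\nu}\,\big[v_\mu^{(k)},X\big]\,\big[Y,(v_\nu^\dag)^{(\ell)}\big]\,,$$
a direct consequence of \eqref{LINDMICO0}. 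In particular $\LL_N[G]\,G+G\,\LL_N[G]-\LL_N[G^2]=-\Gamma_N(G,G)$, which is already the term displayed in \eqref{Ldec2}; so it remains only to produce the first two terms and to control the remainder.

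For the Hamiltonian piece I would use the derivation identity $\HH_N[{\rm e}^{iG}]=i\int_0^1{\rm e}^{isG}\,\HH_N[G]\,{\rm e}^{-isG}\,{\rm d}s\;{\rm e}^{iG}$, expand the inner conjugation as $\sum_{n\ge0}\frac{(is)^n}{n!}\KK^n_G(\HH_N[G])$ in the notation \eqref{comms}, and integrate $\int_0^1 s^n\,{\rm d}s=\tfrac1{n+1}$; the $n=0,1$ terms give exactly $i\HH_N[G]\,{\rm e}^{iG}-\tfrac12[G,\HH_N[G]]\,{\rm e}^{iG}$ and the $n\ge2$ terms define $\mathcal{R}_N$. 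For the dissipative piece I would push ${\rm e}^{iG}$ to the right in every summand of $\DD_N[{\rm e}^{iG}]$ using ${\rm e}^{iG}X{\rm e}^{-iG}=\sum_{n\ge0}\frac{i^n}{n!}\KK^n_G(X)$ (legitimate termwise for fixed $N$) and sort by the order $n$. The order $n=0$ cancels since $\DD_N[{\bf 1}]=0$. Writing the generic summand with $u:=v_\mu^{(k)}$ on the left and $w:=(v_\nu^\dag)^{(\ell)}$ on the right and using $u\,G\,w-\tfrac12\{uw,G\}=\tfrac12\big(u[G,w]+[u,G]w\big)$, the order $n=1$ reproduces \emph{exactly} $i\,\DD_N[G]$; expanding $\KK^2_G(uw)=\KK^2_G(u)\,w+2[G,u][G,w]+u\,\KK^2_G(w)$ and exploiting the cancellation $[G,u][G,w]+[u,G][G,w]=0$, the order $n=2$ reproduces \emph{exactly} $-\tfrac12[G,\DD_N[G]]-\tfrac12\Gamma_N(G,G)$. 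Adding the two contributions, and noting that $\HH_N$ feeds nothing into $\Gamma_N$ because $\HH_N[G]\,G+G\,\HH_N[G]-\HH_N[G^2]=0$, assembles \eqref{Ldec0}--\eqref{Ldec2} with $D_N$ the $n\ge3$ dissipative tail.

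The remaining task is the norm estimate of $\mathcal{R}_N$ and $D_N$, where the hypotheses \eqref{JKL1}--\eqref{JKL2} and the $1/\sqrt N$ scaling of the fluctuations balance. For $\mathcal{R}_N$ the mechanism is that of \emph{Lemma \ref{lemma1}}: each commutator with $G=F_N(x_r)$ collapses a fluctuation onto a single-site sum carrying a factor $N^{-1/2}$, so $\KK^n_G(\HH_N[G])=N^{-(n+1)/2}\sum_k\KK^n_{x_r}\!\big([h,x_r]\big)^{(k)}$ has norm $O(N^{(1-n)/2})$ and the $n\ge2$ series vanishes like $N^{-1/2}$; only the fluctuation part of $\HH_N[G]$ enters, the scalar part dropping out of every commutator. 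For $D_N$ the point is that commuting $G$ with the \emph{localized} operators $u,w$ again gains $N^{-1/2}$ per commutator while preserving support on the sites $k,\ell$, so the order-$n$ contribution is bounded by $\tfrac{C^n}{n!}N^{-n/2}\sum_{k,\ell}|J_{k\ell}|\sum_{\mu,\nu}|D_{\mu\nu}|$; since \eqref{JKL2} forces $\sum_{k,\ell}|J_{k\ell}|=O(N)$, this is $O(N^{1-n/2})$, finite at $n=2$ (the surviving mean-field scale) and $O(N^{-1/2})$ for $n\ge3$, whence $\|D_N\|=O(N^{-1/2})\to0$.

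The main obstacle I anticipate is precisely this interplay, together with the Leibniz bookkeeping: one must check that the orders $n=1$ and $n=2$ collapse to $i\DD_N[G]$ and $-\tfrac12[G,\DD_N[G]]-\tfrac12\Gamma_N(G,G)$ \emph{exactly}, and that the $N^{-1/2}$ gained per commutator is matched against the $O(N)$ growth of $\sum_{k,\ell}|J_{k\ell}|$ permitted by \eqref{JKL2}, so as to keep the $n=2$ order at the finite mean-field scale while sending every higher order to zero.
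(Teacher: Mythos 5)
Your proposal is correct and follows essentially the same route as the paper's proof: both expand the action of $\LL_N$ on $W_N(r)$ in nested commutators $\KK^n_{(r,F_N)}$, identify the orders $n=1,2$ with the fluctuation- and mean-field-scaling terms of \eqref{Ldec0}--\eqref{Ldec2}, and kill the $n\geq 3$ tail by the factor $N^{-1/2}$ gained per commutator (locality of the Kraus operators being preserved) balanced against the $O(N)$ growth of $\sum_{k,\ell}|J_{k\ell}|$ permitted by \eqref{JKL2}. Your Duhamel formula and carr\'e du champ $\Gamma_N$ are tidier packagings of, respectively, the paper's unitarity rearrangement $\HH_N[W_N(r)]=i\sum_k\big(h^{(k)}-W_N(r)h^{(k)}W_N^\dag(r)\big)W_N(r)$ and of the two displayed operator identities it uses to recognize $D^{(2)}_N$ and $D^{(3)}_N$, but the underlying algebra and remainder estimates are identical.
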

\medskip

\begin{proof}
\noindent
We shall analyze separately the hamiltonian and dissipative contributions.\hfill\break

\leftline{$\bullet$ {\sl Hamiltonian contribution}}
\smallskip

Since $W_N(r)$ is unitary, the Hamiltonian term can be recast as 
\beann
\nonumber
\hskip-.5cm
\HH_N[W_N(r)]&=&i\sum_{k=0}^{N-1}\Bigg(h^{(k)}-W_N(r)\,h^{(k)}\,W^\dag_N(r)\Bigg)W_N(r)\\
\label{ham1}
\hskip-.5cm
&=&
-i\Bigg(\sum_{k=0}^{N-1}H_N^{(k)}(x_r)\Bigg)\,W_N(r)\\
\hskip-.5cm
\label{ham2}
H_N^{(k)}(x_r)&=&\sum_{n=1}^\infty\frac{i^n}{n!}\,\KK^n_{F_N(x_r)}(h^{(k)})=
\sum_{n=1}^\infty\frac{i^n}{n!N^{n/2}}\KK^n_{x^{(k)}_r}(h^{(k)})\ ,
\eeann
whence $\HH_N[W_N(r)]=\Big(H^{(1)}_{N}(x_r)+H^{(2)}_{N}(x_r)\Big)\,W_N(r)\,+\,{\cal R}_{N}$, where
\bea
\label{ham3}
H^{(1)}_{N}(x_r)&=&-\Big[H_N\,,\,(r,F_N)\Big]\\
\label{ham3a}
H^{(2)}_{N}(x_r)&=&-\frac{i}{2}\Big[(r,F_N)\,,\,\Big[H_N\,,\,(r,F_N)\Big]\Big]
\\
\label{ham4}
{\cal R}_{N}&=&-\,i\sum_{k=0}^{N-1}\sum_{n=3}^\infty\frac{i^n}{n!N^{n/2}}\KK^n_{x^{(k)}_r}(h^{(k)})\,W_N(r)\ .
\eea
Since $\|h^{(k)}\|=\|h\|$ and 
$\|x^{(k)}_r\|=\|x_r\|$ for all $k$, one can write:
\be
\label{bound2}
\|{\cal R}_{N}\|\leq\sum_{k=0}^{N-1}\sum_{n=3}^\infty\frac{1}{n!N^{n/2}}\,\left\|\KK^n_{x^{(k)}_r}(h^{(k)})\right\|\leq\frac{{\rm e}^{2\|x_r\|}}{\sqrt{N}}\,\|h\|\ .
\ee
\medskip

\leftline{$\bullet$ {\sl Dissipative contribution}}

Setting $W_N(r)\,v^{(k)}_\mu\,W^\dag_N(r)=v^{(k)}_\mu\,+\,V^{(k)}_{\mu N}$, where
$$
V^{(k)}_{\mu N}=\sum_{n=1}^\infty\,\frac{i^n}{n!} \KK^n_{F_N(x_r)}(v^{(k)}_\mu)=
\sum_{n=1}^\infty\,\frac{i^n}{n!N^{n/2}} \KK^n_{x_r^{(k)}}(v^{(k)}_\mu)\ ,
$$
one rewrites the purely dissipative contribution as
\be
\DD_N[W_N(r)]
=\frac{1}{2}\sum_{k,\ell=0}^{N-1}J_{k\ell}\sum_{\mu,\nu=1}^{d}\,
D_{\mu\nu}\,\Big(
v^{(k)}_\mu(V^\dag_{\nu N})^{(\ell)}
-V^{(k)}_{\mu N}(v^\dag_\nu)^{(\ell)}
-V^{(k)}_{\mu N}\,(V^\dag_{\nu N})^{(\ell)}\Big)\,W_N(r)\ .
\label{decomp}
\ee
Collecting contributions that scale not faster than $1/N$, one can write:
\bea
\label{proof1c}
&&\hskip-1cm
V^{(k)}_{\mu N}=i\,\Big[(r,F_N)\,,\,v_\mu^{(k)}\Big]\,-\,\frac{1}{2}\Big[(r,F_N)\,,\,\Big[(r,F_N)\,,\,v_\mu^{(k)}\Big]\Big]
+\Delta^{(k)}_{\mu N}\ ,\\
\label{proof1d}
&&\hskip-1cm
\Delta^{(k)}_{\mu N}=\sum_{n=3}^\infty\,\frac{i^n}{n!N^{n/2}}\, \KK^n_{x^{(k)}_r}(v^{(k)}_\mu)\\
\label{proof1b}
&&\hskip-1cm
\label{VV}
V^{(k)}_{\mu N}\,(V^\dag_{\nu N})^{(\ell)}=-\,\Big[(r,F_N)\,,\,v_\mu^{(k)}\Big]\,\Big[(r,F_N)\,,\,(v_\nu^\dag)^{(\ell)}\Big]\,+\,\Delta^{(k\ell)}_{\mu\nu N}\\
&&\hskip-1cm
\Delta^{(k\ell)}_{\mu\nu N}=\sum_{n+m\geq3}\frac{i^n(-i)^m}{n!m!N^{(n+m)/2}}\, 
\KK^n_{x^{(k)}_r}(v^{(k)}_\mu)\,\KK^m_{x^{(\ell)}_r}(v^\dag_\nu)^{(\ell)})\ .
\eea
Using as before
$\displaystyle\|\KK^n_{x^{(k)}_r}(v^{(k)}_\mu)\|\leq 2^n\|x_r\|^n\,\|v_\mu\|$, one gets
\be
\label{est0}
\|\Delta^{(k)}_{\mu N}\|\leq\frac{{\rm e}^{2\|x_r\|}}{N^{3/2}}\,\|v_\mu\|\ ,\quad
\|\Delta^{(k\ell)}_{\mu\nu N}\|\leq\frac{{\rm e}^{4\|x_r\|}}{N^{3/2}}\,\|v_\mu\|\,\|v_\nu\|\ .
\ee

Using these results, one can decompose $\DD_N$ as the sum of three contributions scaling at most
as $1/N$, plus a correction term:
$\DD_N[W_N(r)]=\Big(D^{(1)}_{N}(x_r)\,+\,D^{(2)}_{N}(x_r)+\,D^{(3)}_{N}(x_r)\Big)\,W_N(r)\,+D_N$.
The contribution $D^{(1)}_{N}(x_r)$ comes from the first term in \eqref{proof1c}, it scales as a fluctuation and, using \eqref{LINDMICO0c}, it can be rewritten as:
\be
D^{(1)}_{N}(x_r)=i\DD_N\big[(r,F_N)\big]
\ .
\label{proof1e}
\ee
The second contribution scales as $1/N$ and comes from the second term in \eqref{proof1c}
and the first two terms in the r.h.s. of (\ref{decomp}); using
$$
\big[x\,,\,[x\,,\,v]\big]v^\dag\,-\,v\,\big[x\,,\,[x\,,\,v^\dag]\big]
=-\big[x\,,\,v[x\,,\,v^\dag\big]\,+\,\big[v\,,\,x]\,v^\dag\big]\ ,
$$
it can be recast in the form
\be
\label{proof1f}
D^{(2)}_{N}(x_r)=-\,\frac{1}{2}\big[(r,F_N)\,,\,\DD_N\left[(r,F_N)\right]\big]\ .
\ee
Further, using the relation
\beann
&&
x\,\left(v\,[x\,,\,v^\dag]\,+\,[v\,,\,x]\,v^\dag\right)\,+\,\Big(v\,[x\,,\,v^\dag]\,+\,[v\,,\,x]\,v^\dag\Big)\,x\,-\\
&&\hskip 3cm
-\,v\,[x^2\,,\,v^\dag]\,-\,[v\,,\,x^2]\,v^\dag=2\,[x\,,\,v]\,[x\,,\,v^\dag]\ ,
\eeann
the third contribution, that comes from the first term in the r.h.s of (\ref{VV}) and the last term in
the r.h.s of (\ref{decomp}) and scales as a mean-field quantity, can be rewritten as
\begin{eqnarray}
\nonumber
&&\hskip-1cm
D^{(3)}_{N}(x_r)=\frac{1}{2}\sum_{k,\ell=0}^{N-1}J_{k\ell}\sum_{\mu,\nu=1}^{d}D_{\mu\nu}\Big[(r,F_N)\,,\,v_\mu^{(k)}\Big]\,\Big[(r,F_N)\,,\,(v_\nu^\dag)^{(\ell)}\Big]\\
&&
=\frac{1}{2}\Big(\DD_N\big[(r,F_N)\big]\,(r,F_N)\,+\,(r,F_N)\,\DD_N\big[(r,F_N)\big]\,
-\,\DD_N\big[(r,F_N)^2\big]\Big)\ .
\end{eqnarray}
Notice that the Hamiltonian term is such that
$$
\HH_N\big[(r,F_N)\big]\,(r,F_N)\,+\,(r,F_N)\,\HH_N\big[(r,F_N)\big]\,-\,\HH_N\big[(r,F_N)^2\big]=0\ ,
$$
so that one can add the above contribution to that of $\DD_N$ without modifying it,  thus obtaining
\be
\label{proof1g}
D^{(3)}_{N}(x_r)=\frac{1}{2}\Big(\LL_N\big[(r,F_N)\big]\,(r,F_N)\,+\,(r,F_N)\,\LL_N\big[(r,F_N)\big]\,-\,\LL_N\big[(r,F_N)^2\big]\Big)\ .
\ee
Finally, the correction term $D_{N}$ reads 
$$
D_{N}=\frac{1}{2}\sum_{k,\ell=0}^{N-1}J_{k\ell}\sum_{\mu,\nu=1}^{d}\,
D_{\mu\nu}\,\Big(
v^{(k)}_\mu(\Delta^\dag_{\nu N})^{(\ell)}-(v_\nu^\dag)^{(\ell)})\,\Delta^{(k)}_{\mu N}-\Delta^{(k\ell)}_{\mu\nu N}\Big)\,W_N(r)\ ,
$$
and \eqref{est0} provides the upper bound 
\be
\label{bound3}
\|D_{N}\|\leq \frac{3}{2N^{3/2}}\sum_{k,\ell=0}^{N-1}|J_{k\ell}|\,\sum_{\mu,\nu=1}^d\,|D_{\mu\nu}|\,\|v_\mu\|\,\|v_\nu\|\,{\rm e}^{4\|x_r\|}\ ,
\ee
whence the condition \eqref{JKL2} on the coefficients $J_{k\ell}$ makes it vanish in norm as $1/\sqrt{N}$ when $N\to\infty$. 
Putting together all these results and estimates, the statement of the Lemma immediately follows. 
\qed
\end{proof}

\subsection{Quasi-free dissipative mesoscopic dynamics}

We shall choose single particle Hamiltonian operators $h=h^\dag$ and Kraus operators $v_\mu$ such that, for all $0\leq k\leq N-1$, the linear span $\cx$ of the chosen set $\chi$ of on-site microscopic observables be mapped into itself by the Lindblad generator:
\be
\label{Ldec}
\LL_N[x^{(k)}_j]=\HH_N[x^{(k)}_j]\,+\,\DD_N[x^{(k)}_j]=\sum_{p=1}^8 \left(\mathcal{H}_{jp}\,+\,\mathcal{D}_{jp}\right)\,x^{(k)}_p\ .
\ee
We have denoted by $\mathcal{H}=[\mathcal{H}_{jp}]$ and $\mathcal{D}=[\mathcal{D}_{jp}]$ the $8\times 8$ matrices of coefficients specifying the action of the hamiltonian and dissipative generators and set 
\be
\label{singsiteL}
\mathcal{L}=\mathcal{H}+\mathcal{D}\ ,\qquad \mathcal{H}_{ij}^*=\mathcal{H}_{ij}\ ,\quad
\mathcal{D}_{ij}^*=\mathcal{D}_{ij}\ .
\ee

When comparing the time derivative in \eqref{lemma1-1} with the action of the generator in \eqref{Ldec0}, one has to match contributions with the same scaling. 
Since, for large $N$, mean-field observables behave as scalar multiples of the identity the matching among them can be obtained by a proper choice of the unknown function $f_r(t)$.
On the other hand, the term $i(\dot{r}_t,F_N)$ in the time-derivative that scales as a fluctuation should be matched by the term $i\LL_N[(r,F_N)]$ in the action of the generator.

Then, for generic $r\in\RR^8$, the equality
\be
\label{sqrtn}
(\dot{r}_t,F_N)=\LL_N\big[(r_t,F_N)\big]=\left(\mathcal{L}^{tr}r_t,F_N\right) 
\ee
is equivalent to having
\be
\label{Ldec4}
r_t={\rm e}^{\,t\,\mathcal{L}^{tr}}\,r\ ,\quad 
\Phi^N_t\big[(r,F_N)\big]=\Big(r,\,{\rm e}^{t\mathcal{L}}\,F_N\Big)\ ,
\ee
where, as before, $\mathcal{L}^{tr}$ denotes the transposed $\mathcal{L}$. Notice that such a time-dependence also satisfies
\be
\label{Ldec1}
\Big[(r_t,F_N)\,,\,(\dot{r}_t,F_N)\Big]=\Big[(r_t,F_N)\,,\,\LL_N\big[(r_t,F_N)\big]\Big]\ .
\ee 
Therefore, the difference between the time-derivative of $W_N^t(r)$ and the action of the generator on the same operator becomes
\be
\label{difft}
\frac{{\rm d}}{{\rm d}t}W^t_N(r)-\LL_N\left[W_N^t(r)\right]=E_N-L_N+\left(\frac{{\rm d}f_r(t)}{{\rm d}t}-D^{(3)}_{N}(x_{r_t})\right)\,W^t_N(r_t)\ ,
\ee
where now $D^{(3)}_{N}(x_{r_t})$ in (\ref{proof1g}) can be expressed as:
\be
D^{(3)}_{N}(x_{r_t})=\frac{1}{2}\Big(\big(\mathcal{L}^{tr}r_t,F_N\big)(r_t,F_N)+(r_t,F_N)\big(\mathcal{L}^{tr}r_t,F_N\big)
-\LL_N\big[(r_t,F_N)^2\big]\Big)\ .
\label{difft0}
\ee
Since the microscopic state $\omega_\beta$ is $\Phi^N_t$-invariant, so that $\omega_\beta\circ\LL_N=0$, and of the product form \eqref{STATE} with $\omega_\beta(x_j)=0$, we get
\be
\label{difft1}
\omega_\beta\left(D^{(3)}_{N}(x_{r_t})\right)=\frac{1}{2}\big(\mathcal{L}^{tr}
r_t,\Sigma^{(\beta)}\,r_t\big)\,+\,\frac{1}{2}\big(r_t,\Sigma^{(\beta)}\mathcal{L}^{tr}\,r_t\big)
=\big(r_t,\mathcal{L}\,\Sigma^{(\beta)}\,r_t\big)\ ,
\ee
where the last equality follows from $r$ being a real vector and the covariance matrix 
$\Sigma^{(\beta)}$ \eqref{covmat2} being real symmetric.
This result and \eqref{difft} suggest then to choose
\bea
\label{difft2}
\frac{{\rm d}}{{\rm d}t}f_r(t)&=&\omega_\beta\left(D^{(3)}_{N}(x_{r_t})\right)\quad \hbox{so that}\\
\label{difft2a}
f_r(t)&=&-\frac{1}{2}\,\left(r,\mathcal{Y}_t\,r\right)\ ,\quad \mathcal{Y}_t\,=\,
\Sigma^{(\beta)}\,-\,{\rm e}^{t\mathcal{L}}\,\Sigma^{(\beta)}\,{\rm e}^{t\mathcal{L}^{tr}}
\eea 
with initial condition $f_r(0)=0$.
It turns out that 
\be
\label{difft3}
\mathcal{Y}_t\geq 0\quad\hbox{so that}\quad f_r(t)\leq 0\quad\hbox{and}\quad 
{\rm e}^{f_r(t)}\leq 1\ ,
\ee
for all $t\geq 0$, in agreement with \eqref{Schwpos1}.
This can be seen as follows:
let $\lambda\in\mathbb{C}^8$ be a generic complex vector and set $q_\lambda=\sum_{j=1}^8
\lambda_j\,x_j\in\mathcal{X}$. Then, Schwartz positivity \eqref{Schwpos}, the time-invariance of $\omega_\beta$ and the second relation in \eqref{Ldec4} yield
\begin{eqnarray*}
&&
\frac{1}{2}\sum_{i,j=1}^d\lambda_i^*\lambda_j\,\omega_\beta\Big(\Big\{F_N(x_i)\,,\,F_N(x_j)
\Big\}\Big)=\\
&&\hskip .5cm
=\frac{1}{2}\sum_{i,j=1}^d\lambda_i^*\lambda_j\,\omega_\beta\Big(\Phi^N_t\Big[\Big\{F_N(x_i)\,,\,F_N(x_j)\Big\}\Big]\Big)\\
&&\hskip .5cm
\geq\frac{1}{2}\omega_\beta\Big(\Phi^N_t[F_N(q^\dag_\lambda)]\,\Phi_t^N\left[F_N(q_\lambda)\right]\Big)\,+\,\frac{1}{2}\omega_\beta\Big(\Phi^N_t\left[F_N(q_\lambda)\right]\,\Phi^N_t[F_N(q^\dag_\lambda)]\Big)\\ 
&&\hskip .5cm
=\frac{1}{2}\omega_\beta\Big(\Big(\lambda,{\rm e}^{t\mathcal{L}}\,F_N\Big)\,\Big(\lambda^*,{\rm e}^{t\mathcal{L}}\,F_N\Big)\Big)\,+\,\frac{1}{2}\omega_\beta\Big(\Big(\lambda^*,
{\rm e}^{t\mathcal{L}}\,F_N\Big)\,\Big(\lambda,{\rm e}^{t\mathcal{L}}\,F_N\Big)\Big)\\
&&\hskip .5cm
=\frac{1}{2}\sum_{i,j;r,s=1}^d\lambda_i^*\lambda_r\,\left({\rm e}^{t\mathcal{L}}\right)_{ij}\,
\left({\rm e}^{t\mathcal{L}}\right)_{rs}\,\omega_\beta\Big(\Big\{F_N(x_j)\,,\,F_N(x_s)\Big\}\Big)\ .
\end{eqnarray*}
Recalling (\ref{covmat}), in the large $N$ limit one thus obtains, for all $\lambda\in\mathbb{C}^d$,
$$
\Big(\lambda,\Sigma^{(\beta)}\,\lambda\Big)\geq \sum_{i,j;r,s=1}^d\lambda_i^*\lambda_r\left({\rm e}^{t\mathcal{L}}\right)_{ij}\,\left({\rm e}^{t\mathcal{L}}\right)_{rs}\,\Sigma^{(\beta)}_{js}=
\Big(\lambda,{\rm e}^{t\mathcal{L}}\,\Sigma^{(\beta)}\,{\rm e}^{t\mathcal{L}^{tr}}\,\lambda\Big)\ .
$$
Equipped with these considerations, we prove the following main technical result.
\medskip

\begin{theorem}
\label{qfth}
Consider the quasi-local algebra $\ca$ with  
the translation-invariant KMS state $\omega_\beta$ in \eqref{STATE}, the  self-adjoint set 
$\chi=\{x_j\}_1^8$ in \eqref{matrix}, \eqref{matrixa} and the resulting quantum fluctuation algebra 
$\mathcal{W}(\chi,\sigma^{(\beta)})$.
Let the local algebras $\ca_{[0,N-1]}$ evolve under the local dissipative semigroups 
$\{\Phi^N_t\}_{t\geq0}$ with Lindblad generator as in 
\eqref{LINDMICO0a}-\eqref{LINDMICO0c} where the 
Hamiltonian and Kraus operators satisfy the relations \eqref{Ldec}.
In the limit of large $N$, the emerging dissipative mesoscopic dynamics is described by  a
semi-group $\{\Phi_t\}_{t\geq0}$ of completely positive, unital maps 
on $\mathcal{W}(\chi,\sigma^{(\beta)})$, such that
\be
\label{RD2}
\lim_{N\to\infty}\omega_\beta\Big(W_N(a)\,\Phi^N_t\big[W_N(r)\big]\,
W_N(b)\Big)=\Omega_\beta\Big(W(a)\,\Phi_t\big[W(r)\big]\,W(b)\Big)\ ,
\ee
for all microscopic exponential operators $W_N(a)$, $W_N(b)$, $W_N(r)$, with $W(a)$, 
$W(b)$ and $W(r)$ the corresponding Weyl operators in the algebra $\mathcal{W}(\chi,\sigma^{(\beta)})$ and $\Omega_\beta$ the state on it defined by \eqref{qfs1}, which is then left invariant by $\Phi_t$. 
Moreover, the maps $\Phi_t$ are quasi-free, {\it i.e.} they map Weyl operators into Weyl operators:
$\displaystyle \Phi_t[W(r)]={\rm e}^{f_r(t)}\,W(r_t)$, with
$r_t$ and $f_r(t)$ as in \eqref{Ldec4} and \eqref{difft2}-\eqref{difft2a}, respectively.
\end{theorem}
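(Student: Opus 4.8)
The plan is to single out the quasi-free maps $\Phi_t[W(r)]={\rm e}^{f_r(t)}\,W(r_t)$, with $r_t={\rm e}^{t\mathcal{L}^{tr}}r$ from \eqref{Ldec4} and $f_r(t)$ from \eqref{difft2a}, as the common large-$N$ limit of the local evolutions $\Phi^N_t[W_N(r)]$ and of the interpolating operators $W^t_N(r)={\rm e}^{f_r(t)}\,W_N(r_t)$ introduced in \eqref{P2}. Once I show that, sandwiched by the spectators $W_N(a),W_N(b)$ and averaged in $\omega_\beta$, these two families of operators differ by a quantity that vanishes as $N\to\infty$, the right-hand side of \eqref{RD2} follows immediately: $W^t_N(r)$ is just ${\rm e}^{f_r(t)}$ times a microscopic exponential operator, so the quasi-free limit \eqref{quasistate} of {\sl Theorem \ref{th1}} (in its extension to products of several Weyl operators) identifies $\lim_N\omega_\beta\big(W_N(a)\,W^t_N(r)\,W_N(b)\big)$ with ${\rm e}^{f_r(t)}\,\Omega_\beta\big(W(a)\,W(r_t)\,W(b)\big)=\Omega_\beta\big(W(a)\,\Phi_t[W(r)]\,W(b)\big)$. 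By density of the linear span of Weyl operators, this simultaneously determines $\Phi_t$ on all of $\mathcal{W}(\chi,\sigma^{(\beta)})$.

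The comparison of $\Phi^N_t[W_N(r)]$ with $W^t_N(r)$ is the core of the argument, and I would obtain it by a Duhamel interpolation. Since $\Phi^N_0[W_N(r)]=W^0_N(r)=W_N(r)$ and $\{\Phi^N_t\}$ is a semigroup, differentiating $s\mapsto\Phi^N_{t-s}[W^s_N(r)]$ gives
\be
W^t_N(r)-\Phi^N_t\big[W_N(r)\big]=\int_0^t\Phi^N_{t-s}\Big[\frac{{\rm d}}{{\rm d}s}W^s_N(r)-\LL_N\big[W^s_N(r)\big]\Big]\,{\rm d}s\ .
\ee
The integrand is precisely the discrepancy computed in \eqref{difft}, namely $E_N(s)-L_N(s)+\big(\dot f_r(s)-D^{(3)}_{N}(x_{r_s})\big)\,W^s_N(r_s)$. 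By {\sl Lemma \ref{lemma1}} and {\sl Lemma \ref{lemma3}} the operators $E_N(s)$ and $L_N(s)$ vanish in norm uniformly for $s\in[0,t]$; because each $\Phi^N_{t-s}$ is unital and completely positive, hence a norm contraction, and $\|W_N(a)\|=\|W_N(b)\|=1$, their contributions to the averaged integral are $O(1/\sqrt N)$ and drop out.

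The remaining term $\big(\dot f_r(s)-D^{(3)}_{N}(x_{r_s})\big)\,W^s_N(r_s)$ is the main obstacle, and it is exactly here that the choice \eqref{difft2} of $f_r$ does its work. The factor $D^{(3)}_{N}(x_{r_s})$ of \eqref{difft0} is \emph{not} norm-small: written as a double sum of products of two commutators $[(r_s,F_N),v_\mu^{(k)}]\,[(r_s,F_N),(v_\nu^\dag)^{(\ell)}]$, each commutator carrying a factor $1/\sqrt N$, it scales as a mean-field average $\tfrac1N\sum_{k,\ell}J_{k\ell}(\cdots)$. The construction \eqref{difft2} sets $\dot f_r(s)=\omega_\beta\big(D^{(3)}_{N}(x_{r_s})\big)$, so that $D^{(3)}_{N}-\dot f_r$ is a \emph{centered} mean-field observable whose variance $\omega_\beta\big(|D^{(3)}_{N}-\dot f_r|^2\big)$ vanishes as $N\to\infty$ by the law of large numbers for the clustering state $\omega_\beta$ (the mechanism behind \eqref{macro1}). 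I would exploit this $L^2(\omega_\beta)$-convergence through the Kadison--Schwarz inequality for the unital, completely positive maps $\Phi^N_{t-s}$ together with a Cauchy--Schwarz estimate in the GNS space to show that the averaged integrand tends to zero. The genuinely delicate point, which I expect to require the clustering property \eqref{clustates} in an essential way, is that the centered mean-field factor must decorrelate from the spectators $W_N(a),W_N(b)$ \emph{and} survive the intervening contraction $\Phi^N_{t-s}$ without its smallness being spoiled, uniformly in $s\in[0,t]$.

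With \eqref{RD2} established, the structural properties of $\{\Phi_t\}$ follow with little effort. The maps are quasi-free by their very definition, and unitality is the case $r=0$. Complete positivity and the Schwarz inequality \eqref{Schwpos} are inherited in the limit from the corresponding properties of the local CP, unital maps $\Phi^N_t$, consistently with the bound ${\rm e}^{f_r(t)}\le1$ of \eqref{Schwpos1} guaranteed by $\mathcal{Y}_t\ge0$ in \eqref{difft3}. The semigroup law $\Phi_{t+s}=\Phi_t\circ\Phi_s$ reduces to the composition $r_{t+s}={\rm e}^{t\mathcal{L}^{tr}}r_s$ of the linear parts together with the cocycle identity $f_r(t+s)=f_r(s)+f_{r_s}(t)$, which is a direct algebraic consequence of the definition of $\mathcal{Y}_t$ in \eqref{difft2a}. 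Finally, invariance of $\Omega_\beta$ under $\Phi_t$ follows by setting $a=b=0$ in \eqref{RD2} and using $\omega_\beta\circ\Phi^N_t=\omega_\beta$; equivalently, the same definition gives $f_r(t)=\tfrac12(r_t,\Sigma^{(\beta)}r_t)-\tfrac12(r,\Sigma^{(\beta)}r)$, so that $\Omega_\beta\big(\Phi_t[W(r)]\big)={\rm e}^{f_r(t)}\Omega_\beta\big(W(r_t)\big)$ reproduces the Gaussian weight $\Omega_\beta\big(W(r)\big)$ of \eqref{qfs1}.
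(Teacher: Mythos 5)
Your proposal reproduces the paper's skeleton faithfully: the splitting of the error into the Duhamel term $\omega_\beta\big(W_N(a)\,(W^t_N(r)-\Phi^N_t[W_N(r)])\,W_N(b)\big)$ plus the CLT-type convergence of $W^t_N(r)$, the use of {\sl Lemma \ref{lemma1}} and {\sl Lemma \ref{lemma3}}, the choice \eqref{difft2} of $f_r$, and the concluding structural claims (cocycle identity for $f_r$, unitality, invariance of $\Omega_\beta$) are all correct and essentially identical to the paper's argument. The difficulty is that the one step you explicitly leave open --- making the $L^2(\omega_\beta)$-smallness of the centered mean-field factor $\dot f_r(s)-D^{(3)}_{N}(x_{r_s})$ survive both the spectators $W_N(a),W_N(b)$ and the intervening map $\Phi^N_{t-s}$ --- is precisely the crux of the whole proof, and the tool you point to (clustering, ``the mechanism behind \eqref{macro1}'') cannot do the job. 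The clustering argument of \eqref{MET} and \eqref{macro1} factorizes $\omega\big(a^\dag X_N\,b\big)$ only when the spectators $a,b$ are quasi-local, so that they can be confined to finitely many sites and commuted past the tail of the collective sum; here the spectators $W_N(a)$, $W_N(b)$ are themselves collective operators supported on all $N$ sites, and that argument breaks down. Likewise, a direct Cauchy-Schwarz estimate in the GNS space, say $\big|\langle\Psi_a\vert\pi(Y)\vert\Psi_b\rangle\big|\leq\big\|\pi(Y^\dag)\vert\Psi_a\rangle\big\|$, leaves you with $\omega_\beta\big(W_N(a)\,YY^\dag\,W_N(a)^\dag\big)$: the conjugation by the collective unitary $W_N(a)$ blocks the use of the $\Phi^N_t$-invariance of $\omega_\beta$, so the Kadison-Schwarz-plus-invariance argument that is supposed to strip off $\Phi^N_{t-s}$ cannot even be started.

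What actually unlocks this step in the paper is not clustering but the KMS property \eqref{KMS1} of $\omega_\beta$: writing $\omega_\beta\big(W_N(a)\,\Phi^N_{t-s}[\delta_N]\,W_N(b)\big)=\omega_\beta\big(\Phi^N_{t-s}[\delta_N]\,W_N(b)\,\tau_{i\beta}[W_N(a)]\big)$ moves the left spectator to the right at the cost of an imaginary-time translation; a single Cauchy-Schwarz then bounds this by $\omega_\beta\big(\Phi^N_{t-s}[\delta_N]\,\Phi^N_{t-s}[\delta_N^\dag]\big)^{1/2}\,\omega_\beta\big(\tau_{i\beta}[W_N(a)]^\dag\,\tau_{i\beta}[W_N(a)]\big)^{1/2}$, where the second factor is finite for finite $\beta$ (on a dense set, uniformly in time) and the first factor contains no spectators at all, so that the invariance \eqref{invst} of $\omega_\beta$ together with Schwarz positivity \eqref{Schwpos} gives $\omega_\beta\big(\Phi^N_{t-s}[\delta_N]\Phi^N_{t-s}[\delta_N^\dag]\big)\leq\omega_\beta\big(\delta_N\delta_N^\dag\big)$, reducing everything to the variance of $D^{(3)}_{N}$. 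Note also that this last variance estimate is not a consequence of clustering alone: since $D^{(3)}_{N}$ is a $J$-weighted double sum $\frac{1}{N}\sum_{k,\ell}J_{k\ell}\sum_{\mu,\nu}D_{\mu\nu}\,a^{(k)}_\mu b^{(\ell)}_\nu$, its vanishing rests essentially on the summability hypotheses \eqref{JKL2} and \eqref{JKL3} together with the product structure of $\omega_\beta$. Until you supply the KMS step (or some alternative device with the same spectator-removing effect), the proof is incomplete exactly at its central analytic point.
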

\medskip

\begin{remark}
\label{rem6}
{\rm The chosen type of convergence conforms to the fact that the action of any map on the quantum fluctuation algebra is totally specified by its action on the Weyl operators 
$W(r)$. Such an action is in turn completely defined by the matrix elements in the GNS representation based on the limit state $\Omega_\beta$. Both the state $\Omega_\beta$ and the Weyl operators arise from the large $N$ limit of the microscopic exponential operators $W_N(r)$ with respect to the microscopic state $\omega_\beta$.}
\qed
\end{remark}
\medskip

\noindent
\begin{proof}
For sake of simplicity, we shall set
\be
\label{notation}
\omega^N_{ab}(\cdot):=\omega_\beta\Big(W_N(a)\,\cdot\,W_N(b)\Big)\ ,\
\Omega_{ab}(\cdot)=
\Omega\Big(W(a)\,\cdot\,W(b)\Big)
\ee
and then show that, for arbitrary $a,b,r\in\RR^8$, the positive quantity
\be
\label{P1}
I_N=\Bigg|\Omega_{ab}\Big(\Phi_t\big[W(r)\big]\Big)\,-\,\omega^N_{ab}\Big(\Phi_t^N\big[W_N(r)\big]\Big)\Bigg|
\ee
vanishes when $N\to\infty$. Writing
$\Phi_t^N[W(r)]=\Phi_t^N[W_N(r)]-W^t_N(r)+W^t_N(r)$ one has $I_N\leq I^{(1)}_{N}+I^{(2)}_{N}$, where
\bea
\label{I1}
I^{(1)}_{N}&:=&\Big|\omega^N_{ab}\Big(W_N^t(r)\,-\,\Phi_t^N[W_N(r)]\Big)\Big|\\
\label{I2}
I^{(2)}_{N}&:=&\Big|\Omega_{ab}\Big(\Phi_t[W(r)]\Big)\,-\,\omega^N_{ab}\Big(W_N^t(r)\Big)\Big|\ .
\eea
Because of \eqref{P2} and \eqref{Schwpos1}, one gets
$$
I^{(2)}_{N}\leq\left|\Omega_{ab}\Big(W(r_t)\Big)\,-\,\omega_{ab}^N\Big(W_N(r_t)\Big)\right|\ .
$$
Then, the properties of the exponential operators (see Remark \ref{rem6}) make
$I^{(2)}_{N}\to 0$ with $N\to\infty$, uniformly for any finite time interval, $0\leq t\leq {\cal T}$.
On the other hand, in order to estimate $I^{(1)}_{N}$, we write
\beann
W^t_N(r)\,-\,\Phi_t^N[W_N(r)]&=&\int_0^t{\rm d}s\,\frac{{\rm d}}{{\rm d}s}\left(\Phi_{t-s}^N\Big[W^s_N(r)\Big]\right)\\
&=&\int_0^{t}{\rm d}s\,\Phi_{t-s}^N\Big[\frac{{\rm d}}{{\rm d}s}W_N^s(r)\,-\,\LL_N[W_N^s(r)]\Big]\ .
\eeann
Then, recalling (\ref{difft}), one obtains: 
\beann
I^{(1)}_{N}&\leq&\int_0^t{\rm d}s\,\Big|\omega^N_{ab}\Big(\Phi^N_{t-s}[\delta_N(r,s)]\Big)\Big|\\
\delta_N(r,s)&:=&E_N-L_N\,+\,\left(\frac{{\rm d}f_r(s)}{{\rm d}t}-D^{(3)}_{N}(x_{r_s})\right)\,W^t_N(r_s)\ ,
\eeann
with $D^{(3)}_{N}(x_{r_s})$ given by \eqref{difft0}.
Since the microscopic state $\omega_\beta$ obeys the KMS conditions \eqref{KMS1}, from the Cauchy-Schwartz inequality it follows that
\beann
&&\hskip-.7cm
\Big|\omega^N_{ab}\Big(\Phi^N_{t-s}[\delta_N(r,s)]\Big)\Big|^2=\Big|\omega_\beta\Big(\Phi_{t-s}^N[\delta_N(r,s)]\,
W_N(b)\tau_{i\beta}[W_N(a)]\Big)\Big|^2\\
&&\hskip-.7cm
\leq\omega_\beta\Big(\Phi^N_{t-s}[\delta_N(r,s)]\Phi^N_{t-s}[\delta^\dag_N(r,s)]\Big)\,
\omega_\beta\Big(\big(\tau_{i\beta}[W_N(a)]\big)^\dag\tau_{i\beta}[W_N(a)]\Big)\ .
\eeann
For finite inverse temperatures $\beta$, the second term in the right side of the inequality is bounded on a dense subset of operators in the Weyl algebra, while the first one can be estimated by means of the invariance of $\omega_\beta$ under $\Phi_t^N$ and of Schwartz-positivity \eqref{Schwpos}:
$$
\omega_\beta\Big(\Phi^N_{t-s}[\delta_N(r,s)]\Phi^N_{t-s}[\delta^\dag_N(r,s)]\Big)\leq
\omega_\beta\Big(\delta_N(r,s)\,\delta^\dag_N(r,s)\Big)\ .
$$
The proof of the theorem can thus be completed by showing that, when $N\to\infty$, the right hand side of the above inequality vanishes uniformly for $0\leq s\leq t\leq {\cal T}$. 
The Cauchy-Schwartz inequality 
$\left|\omega(a^\dag b)\right|^2\leq\omega(a^\dag a)\,\omega(b^\dag b)$ yields
$$
\omega_\beta\left((a+b)^\dag(a+b)\right)\leq\left(\sqrt{\omega_\beta(a^\dag a)}\,+\,\sqrt{\omega_\beta(b^\dag b)}\right)^2\ .
$$
Therefore, setting $\dot{f}_r(s):={\rm d}f_r(s)/{\rm d}s$ and using \eqref{difft} and \eqref{Schwpos1} together with 
$\omega_\beta(a^\dag a)\leq \|a\|^2$ and \eqref{difft1}--\eqref{difft3}, one gets
\begin{eqnarray*}
&&
\sqrt{\omega_\beta\Big(\delta_N(r,s)\,\delta^\dag_N(r,s)\Big)}\leq
\sqrt{\omega\Big((E_N-L_N)^\dag(E_N-L_N)\Big)}\\
&&
+\,{\rm e}^{f_r(t)}\,\sqrt{\omega_\beta\bigg(\left(\dot{f}_r(s)-D^{(3)}_{N}(x_{r_s})\right)\,\left(\dot{f}_r(s)-\big(D^{(3)}_{N}\big)^\dagger(x_{r_s})\right)\bigg)}\\
&&\leq\|E_N-L_N\|\,+\,\sqrt{\omega_\beta\bigg(\left(\dot{f}_r(s)-D^{(3)}_{N}(x_{r_s})\right)\,\left(\dot{f}_r(s)-\big(D^{(3)}_{N}\big)^\dagger(x_{r_s})\right)\bigg)}\\
&&\leq\|E_N-L_N\|\,+\,\sqrt{\omega_\beta\left(D^{(3)}_{N}(x_{r_s})\,\big(D^{(3)}_{N}\big)^\dagger(x_{r_s})\right)-\left|\omega_\beta\left(D^{(3)}_{N}(x_{r_s})\right)\right|^2}\ .
\end{eqnarray*}
According to {\sl Lemma 2} and {\sl Lemma \ref{lemma3}} (with $r_t$ in the place of $r$ in the bound \eqref{bound3}), one obtains $\lim_{N\to\infty}\|E_N-L_N\|=0$, uniformly for $0\leq t\leq {\cal T}$. Furthermore,
\begin{eqnarray*}
D^{(3)}_{N}(x_r)&=&\frac{1}{2}\sum_{k,\ell=0}^{N-1}J_{k\ell}\sum_{\mu,\nu=1}^{d}D_{\mu\nu}\Big[(r,F_N)\,,\,v_\mu^{(k)}\Big]\,\Big[(r,F_N)\,,\,(v_\nu^\dag)^{(\ell)}\Big]\\
&=&\frac{1}{2N}\sum_{k,\ell=0}^{N-1}J_{k\ell}\sum_{\mu,\nu=1}^{d}\sum_{i,j=1}^8
D_{\mu\nu}r_s^ir_s^j\Big[x_i^{(k)}\,,\,v_\mu^{(k)}\Big]\,\Big[x^{(\ell)}_j\,,\,(v_\nu^\dag)^{(\ell)}\Big]
\end{eqnarray*}
can be recast in the form
$$
D^{(3)}_{N}(x_r)=\frac{1}{N}\sum_{k,\ell=0}^{N-1} J_{k\ell}\sum_{\mu,\nu=1}^{d}D_{\mu\nu}\,  
a_\mu^{(k)}\,b_\nu^{(\ell)}\ ,
$$
where $a_\mu^{(k)}$ and $b_\nu^{(\ell)}$ are single site operators. Then,
\beann
&&\hskip-1.5cm
\omega_\beta\left(D^{(3)}_{N}(x_{r_s})\,\big(D^{(3)}_{N}\big)^\dagger(x_{r_s})\right)-\left|\omega_\beta\left(D^{(3)}_{N}(x_{r_s})\right)\right|^2\\
&&=\sum_{k_1,\ell_1=0\atop k_2,\ell_2=0}^{N-1}\sum_{\mu_1,\nu_1=1\atop\mu_2,\nu_2=1}^d\,\frac{J_{k_1\ell_1}\,J_{k_2\ell_2}}{N^2}\ D_{\mu_1\nu_1}\,D_{\mu_2\nu_2}\,
\Bigg(\omega_\beta\Big(a_{\mu_1}^{(k_1)}\,b_{\nu_1}^{(\ell_1)}(b_{\nu_2}^\dag)^{(\ell_2)}\,(a_{\mu_2}^\dag)^{(k_2)}\Big)\\
&&\hskip6cm
-\,\omega_\beta\Big(a_{\mu_1}^{(k_1)}\,b_{\nu_1}^{(\ell_1)}\Big)\,\omega_\beta\Big((b^\dag_{\mu_2})^{(\ell_2)}\,(a_{\mu_2}^\dag)^{(k_2)}\Big)\Bigg)\ .
\eeann
Because of the assumption \eqref{JKL2} and its consequence \eqref{JKL3}, this quantity vanishes when \hbox{$N\to\infty$}. For example, suppose $k_1=k_2$, then the corresponding multiple sums can be bounded by a term proportional to
$$
\frac{1}{N^2}\sum_{k,\ell_1,\ell_2=0}^{N-1}\,|J_{k\ell_1}|\,|J_{k\ell_2}|\ .
$$
Then, the right hand side of the previous expression 
vanishes uniformly for $0\leq s\leq t\leq {\cal T}$ because of the finite number of summands and the bounded norm of all the spin operators involved in any finite interval of time.
\qed
\end{proof}
\medskip

The previous theorem shows that, when the linear space $\mathcal{X}$ of selected single-site operators is stable under the action of the local Lindblad generator, then the emergent mesoscopic irreversible dynamics maps Weyl operators into themselves: 
it turns out that such a dynamics corresponds to a semigroup of unital, completely positive maps on the Weyl algebra $\mathcal{W}(\chi,\sigma^{(\beta)})$, generated by a Lindblad generator which is at most quadratic in the fluctuation operators $F(x_i)$.
\medskip

\begin{corollary}
\label{cor1}
The maps $W(r)\mapsto\Phi_t[W(r)]=W_t(r)={\rm e}^{f_r(t)}\,W(r_t)$ with $r_t\in\mathbb{R}^8$
and $f_r(t)$ given by \eqref{Ldec4}, respectively \eqref{difft2a}, satisfy the time-evolution equation $\partial_tW_t(r)=\LL[W_t(r)]$, where the generator $\LL$ is given by
\bea
\label{fluctLind1}
&&\hskip-1.5cm
\LL[W_t(r)]=\frac{i}{2}\,\sum_{i,j=1}^8 H^{(1)}_{ij}\big[F(x_i)F(x_j)\,,\,W_t(r)\big]\\
\label{fluctLind2}
&&\hskip-.5cm+
\sum_{i,j=1}^8D^{(1)}_{ij}\left(F(x_i)\,W_t(r)\,F(x_j)\,-\,\frac{1}{2}\big\{
F(x_i)F(x_j)\,,\,W_t(r)\big\}\right),
\eea
with $H^{(1)}$ a Hermitian $8\times 8$ matrix and $D^{(1)}$ a positive semi-definite $8\times 8$ hermitian matrix, given by
\bea
\label{Flind1a}
&&H^{(1)}=-i(\sigma^{(\beta)})^{-1}\left(\mathcal{L}\,C^{(\beta)}\,-\,C^{(\beta)}\,\mathcal{L}^{tr}\right)\,(\sigma^{(\beta)})^{-1}\ ,\\
&&D^{(1)}=(\sigma^{(\beta)})^{-1}\left(\mathcal{L}\,C^{(\beta)}\,+\,C^{(\beta)}\mathcal{L}^{tr}\right)(\sigma^{(\beta)})^{-1}\ .
\label{Flind1b}
\eea
In the creation and annihilation operator formalism, using the notation introduced in 
\eqref{anncrop}, 
the generator reads
\bea
\label{fluctLind3}
&&\hskip-1.5cm
\LL[D_t(z)]=\frac{i}{2}\,\sum_{i,j=1}^8 H^{(2)}_{ij}\left[A^\dag_i\,A_j\,,\,D_t(z)\right]\\
\label{fluctLind4}
&&\hskip-.5cm+
\sum_{i,j=1}^8 D^{(2)}_{ij}\left(A_i^\dag\,D_t(z)\,A_j\,-\,\frac{1}{2}\left\{
A_i^\dag A^\dag\,,\,D_t(z)\right\}\right),
\eea
where $D_t(z)$ is the time-evolved displacement operator \eqref{Weyl6b} corresponding to the time-evolved Weyl operator $W_t(r)$ and $H^{(2)}$ and $D^{(2)}$ are $8\times 8$ matrices, given by
\be
\label{Flind2a}
H^{(2)}=\mathcal{M}^\dag\,H^{(1)}\,\mathcal{M}\ ,\qquad
D^{(2)}=\mathcal{M}^\dag\,D\,\mathcal{M}\ ,
\ee
where $\mathcal{M}$ is the matrix in \eqref{matrix1.1} of Appendix B.
\end{corollary}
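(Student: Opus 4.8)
The plan is to verify directly that the quadratic generator displayed in \eqref{fluctLind1}--\eqref{fluctLind2}, with the matrices $H^{(1)}$, $D^{(1)}$ given by \eqref{Flind1a}--\eqref{Flind1b}, reproduces the time derivative of $W_t(r)={\rm e}^{f_r(t)}W(r_t)$ already identified in Theorem~\ref{qfth}. First I would record that on the limiting fluctuation algebra the field operators obey the \emph{exact} canonical commutation relations \eqref{COMSIGMA}, so that the higher nested commutators which only vanished asymptotically in Lemma~\ref{lemma1} now vanish identically. Using $\dot r_t=\mathcal{L}^{tr}r_t$ from \eqref{Ldec4} one thus obtains the closed form
\be
\partial_t W_t(r)=\Big(\dot f_r(t)-\tfrac{i}{2}\big(r_t,\sigma^{(\beta)}\mathcal{L}^{tr}r_t\big)+i\big(\mathcal{L}^{tr}r_t,F\big)\Big)\,W_t(r)\ ,
\ee
whose scalar part (with $\dot f_r(t)=(r_t,\mathcal{L}\,\Sigma^{(\beta)}r_t)$ from \eqref{difft1}) and linear-in-$F$ part are to be matched separately against the action of the candidate generator.

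Next I would compute $\LL[W(r)]$ for a generic Weyl operator by pushing every $F(x_i)$ through $W(r)$ with the displacement rule \eqref{displ}. Both the Hamiltonian commutator $\tfrac{i}{2}\sum_{ij}H^{(1)}_{ij}[F(x_i)F(x_j),W(r)]$ and the dissipator $\sum_{ij}D^{(1)}_{ij}(F(x_i)W(r)F(x_j)-\tfrac12\{F(x_i)F(x_j),W(r)\})$ then collapse to a linear-in-$F$ term plus a scalar, all multiplying $W(r)$. Collecting the linear terms gives the coefficient vector $i\big[-\mathrm{Re}(H^{(1)})+\mathrm{Im}(D^{(1)})\big]\sigma^{(\beta)}r$ (Hermiticity of $H^{(1)}$, $D^{(1)}$ is used here), while the scalar equals $-\tfrac12\big(\sigma^{(\beta)}r,(iH^{(1)}+D^{(1)})\sigma^{(\beta)}r\big)$. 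The verification therefore reduces to the two matrix identities
\be
-\mathrm{Re}(H^{(1)})+\mathrm{Im}(D^{(1)})=\mathcal{L}^{tr}(\sigma^{(\beta)})^{-1}\ ,\qquad iH^{(1)}+D^{(1)}=2(\sigma^{(\beta)})^{-1}\mathcal{L}\,C^{(\beta)}(\sigma^{(\beta)})^{-1}\ ,
\ee
both of which follow from \eqref{Flind1a}--\eqref{Flind1b} by expanding $C^{(\beta)}=\Sigma^{(\beta)}+\tfrac{i}{2}\sigma^{(\beta)}$ via \eqref{corcovsym} and using the antisymmetry and invertibility \eqref{invCOMM1} of $\sigma^{(\beta)}$; the scalar match is then closed by $(\sigma^{(\beta)})^{tr}=-\sigma^{(\beta)}$ and $(r,\mathcal{L}\sigma^{(\beta)}r)=-(r,\sigma^{(\beta)}\mathcal{L}^{tr}r)$. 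Along the way I would confirm that \eqref{Flind1a}--\eqref{Flind1b} define Hermitian matrices, which amounts to checking that their real parts are symmetric and their imaginary parts antisymmetric, a direct consequence of $C^{(\beta)}-(C^{(\beta)})^{tr}=i\sigma^{(\beta)}$ and the antisymmetry of $\sigma^{(\beta)}$. Since the Lindblad representation is not unique, I would frame this as a verification of a consistent choice rather than an inversion.

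The step I expect to carry the genuine content, rather than bookkeeping, is the positive semidefiniteness of $D^{(1)}$, which is what makes \eqref{fluctLind1}--\eqref{fluctLind2} a bona fide Lindblad generator. Instead of reading positivity off the explicit formula \eqref{Flind1b}, I would deduce it from Theorem~\ref{qfth}, which already establishes that $\{\Phi_t\}_{t\ge0}$ is a semigroup of completely positive maps: by the standard characterization of quasi-free completely positive semigroups, a quadratic generator with dissipative matrix $D^{(1)}$ is completely positive exactly when $D^{(1)}\ge0$, so complete positivity of $\Phi_t$ forces $D^{(1)}\ge0$. That $\LL$ is genuinely $t$-independent, so that $\partial_tW_t(r)=\LL[W_t(r)]$ holds for all $t$ and not only at $t=0$, follows from the cocycle identity $f_r(t+s)=f_r(t)+f_{r_t}(s)$ together with $r_{t+s}=e^{s\mathcal{L}^{tr}}r_t$, both immediate from \eqref{difft2a} and \eqref{Ldec4}; hence $\{\Phi_t\}$ is a one-parameter semigroup with generator $\LL$.

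Finally, to reach the creation/annihilation picture I would substitute $F=\mathcal{M}A$ from \eqref{matrix1} into the two quadratic forms. The essential observation is that each $F(x_i)$ is self-adjoint, so in every bilinear the left factor may be written as $F(x_i)=\sum_k\overline{\mathcal{M}_{ik}}\,A_k^\dag$ and the right factor as $F(x_j)=\sum_l\mathcal{M}_{jl}A_l$; this turns $\sum_{ij}H^{(1)}_{ij}F(x_i)F(x_j)$ into $\sum_{kl}(\mathcal{M}^\dag H^{(1)}\mathcal{M})_{kl}A_k^\dag A_l$, and the same rewriting applied to both the sandwich and the anticommutator in the dissipator yields the normal-ordered form \eqref{fluctLind3}--\eqref{fluctLind4} with $H^{(2)}=\mathcal{M}^\dag H^{(1)}\mathcal{M}$ and $D^{(2)}=\mathcal{M}^\dag D^{(1)}\mathcal{M}$, as in \eqref{Flind2a}. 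The appearance of $\mathcal{M}^\dag$ rather than $\mathcal{M}^{tr}$, and of the ordering $A_k^\dag A_l$, is precisely the imprint of the self-adjointness of $F$ encoded in \eqref{Weyl2}.
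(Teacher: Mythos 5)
Your proposal is correct and follows essentially the same route as the paper's own proof: both compute $\partial_t W_t(r)$ from the exact canonical commutation relations, push the fluctuation operators through $W_t(r)$ with the displacement rule \eqref{displ}, match the linear-in-$F$ and scalar parts (which reduces, via \eqref{corcovsym} and the antisymmetry and invertibility of $\sigma^{(\beta)}$, to exactly the matrix relations encoded in \eqref{Flind1a}--\eqref{Flind1b}), and pass to the bosonic form by writing the left factor of each bilinear as $\sum_k\mathcal{M}^*_{ik}A_k^\dag$ and the right one as $\sum_\ell\mathcal{M}_{j\ell}A_\ell$. The only divergence is that you additionally justify two points the paper's proof leaves implicit --- the positive semi-definiteness of $D^{(1)}$ (deduced from the complete positivity of $\Phi_t$ established in Theorem~\ref{qfth}) and the time-independence of $\LL$ (via the cocycle identity for $f_r$ and the flow property of $r_t$) --- which strengthens rather than departs from the paper's argument.
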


\begin{proof}
Using \eqref{appb1} in Appendix C, the explicit expressions for $\dot{r}_t$, $f_r(t)$ and the relation \eqref{corcovsym} among the correlation, covariance and symplectic matrices, one computes
\beann
\partial_tW_t(r)&=&\Big(\dot{f}_r(t)\,+\,i(\dot{r}_t,F)\,-\,\frac{1}{2}\big[(r_t,F)\,,\,(\dot{r}_t,F)\big]\Big)\,W_t(r)\\
&=&
\left(i(r_t,\mathcal{L}\,F)\,+\,(r_t,\mathcal{L}\,\Sigma^{(\beta)} r_t)\,+\,\frac{i}{2}(r_t,\mathcal{L}\,\sigma^{(\beta)} r_t)\right)\,W_t(r)\\
&=&\Big(i(r_t,\mathcal{L}\,F)\,+\,(r_t,\mathcal{L}\,C^{(\beta)} r_t)\Big)\,W_t(r)
\ .
\eeann
In order to show how to match this time-derivative with the action on $W_t(r)$ of a linear map as in the statement of the Corollary, it is useful to recall \eqref{displ}, which gives 
$$
W_t(r)\,F(x_i)=\left(F(x_i)\,+\,\sum_{j=1}^8\sigma^{(\beta)}_{ij}\,r^j_t\right)\,W_t(r)\ .
$$
It is then straightforward to derive that
\beann
\LL[W_t(r)]&=&\frac{i}{2}\Big(\left(r_t,\sigma^{(\beta)}\big(H^{(1)}+(H^{(1)})^{tr}\big)F\right)\,+\,
\left(r_t,\sigma^{(\beta)}\, H^{(1)}\,\sigma^{(\beta)} r_t\right)\Big)\,W_t(r)\\
&+&\frac{1}{2}\Big(\left(r_t,\sigma^{(\beta)}\big(D^{(1)}-(D^{(1)})^{tr}\big)F\right)\,+\,
\left(r_t,\sigma^{(\beta)}\, D^{(1)}\,\sigma^{(\beta)} r_t\right)\Big)\,W_t(r)\ .
\eeann
By equating the operatorial, respectively the scalar contributions from the time-derivative and the generator action, one obtains
\beann
\mathcal{L}&=&\frac{1}{2}\sigma^{(\beta)}\left(H^{(1)}\,+\,(H^{(1)})^{tr}\right)\,
-\,\frac{i}{2}\sigma^{(\beta)}\left(D^{(1)}-(D^{(1)})^{tr}\right)\\
\mathcal{L}\,C^{(\beta)}&=&\sigma^{(\beta)}\,\frac{i\,H^{(1)}\,+\,D^{(1)}}{2}\,\sigma^{(\beta)}\ ,
\eeann
whence, by the invertibility of $\sigma^{(\beta)}$ (see \eqref{invCOMM1}), the hermiticity of 
$C^{(\beta)}$ and the the fact that $\mathcal{L}^\dag=\mathcal{L}^{tr}$ (see \eqref{singsiteL}), the result follows from 
$$
\mathcal{L}\,C^{(\beta)}\pm C^{(\beta)}\mathcal{L}^{tr}=\sigma^{(\beta)}\,\left(\frac{i\,H^{(1)}\,+\,D^{(1)}}{2}\,\mp\,\frac{i\,H^{(1)}\,-\,D^{(1)}}{2}\right)\,\sigma^{(\beta)}\ .
$$
The second part of the corollary follows from using \eqref{matrix1} and inserting it into
\eqref{fluctLind1} and \eqref{fluctLind2}
$$
F(x_i)=F^\dag(x_i)=\sum_{k=1}^8\mathcal{M}^*_{ik}\,A^\dag_k\ ,\quad
F(x_j)=\sum_{\ell=1}^8\mathcal{M}_{i\ell}\,A_\ell\ .
$$
\qed
\end{proof}

\section{Gaussian states}

The mesoscopic dissipative dynamics $\Phi_t$ obtained in the previous section is quasi-free as it maps Weyl operators into Weyl operators. The dual maps $\Psi_t$ acts on the states $\rho$ on the Weyl algebra $\mathcal{W}(\chi,\sigma^{(\beta)})$, sending them into
$\rho_t=\Psi_t[\rho]$ according to the duality relation
\be
\label{duality}
\rho_t(W)=\rho\big(\Phi_t[W]\big)\qquad\forall\, W\in\mathcal{W}(\chi,\sigma^{(\beta)})\ .
\ee
Particularly useful states on $\mathcal{W}(\chi,\sigma^{(\beta)})$ are the Gaussian states
(with zero averages) which are identified by their characteristic functions being Gaussian, 
{\it i.e.} by the following expectation of Weyl operators
\bea
\label{charfunct1}
\rho_G\big(W(r)\big)&=&\rho_G\left({\rm e}^{i(r,F)}\right)=\exp\left(-\frac{1}{2}(r,G\,r)\right)\ ,\qquad\forall r\in\RR^8\\
G&=&[G_{ij}]\ ,\quad G_{ij}=\frac{1}{2}\rho_G\left(\Big\{F(x_i)\,,\,F(x_j)\Big\}\right)\ .
\eea
These states are completely identified by their covariance matrix $G$; in particular, 
positivity of $\rho_G$ is equivalent to the following
condition on $G$ \cite{Holevo}:
\be
G+\frac{i}{2}\sigma^{(\beta)}\geq0\ ,
\label{g-positivity}
\ee
where $\sigma^{(\beta)}$ is the symplectic matrix in (\ref{COMM1}).
Clearly, the maps $\Psi_t$ transform Gaussian states into Gaussian states:
\bea
\nonumber
\Psi_t[\rho_G](W(r))&=&\rho_G\Big(\Phi_t[W(r)]\Big)={\rm e}^{f_r(t)}\,\rho_G\Big(W(r_t)\Big)\\
\label{gaussian1}
&=&
\exp\left(f_r(t)\,-\,\frac{1}{2}(r_t,G\,r_t)\right)=\rho_{G_t}\Big(W(r)\Big)\ , 
\eea
with the time-dependent covariance matrix $G_t$ obtained recalling {\sl Corollary 1}, \eqref{Ldec4} and
\eqref{difft2a}:
\be
\label{gaussian2}
G_t=\Sigma^{(\beta)}\,-\,{\rm e}^{t\mathcal{L}}\,\Sigma^{(\beta)}\,{\rm e}^{t\mathcal{L}^{tr}}+{\rm e}^{t\mathcal{L}}\,G\,{\rm e}^{t\mathcal{L}^{tr}}\ .
\ee
It follows that the mesoscopic state $\Omega_\beta$ in \eqref{qfs1} is Gaussian with covariance matrix $G=\Sigma^{(\beta)}$ and thus, as the microscopic state $\omega_\beta$ is invariant under the local dissipative dynamics $\Phi^N_t$, $\Omega_\beta$ is invariant under the mesoscopic dissipative dynamics $\Psi_t$, {\it i.e.} $G_t=\Sigma^{(\beta)}$.

A useful equivalent expression for the covariance matrix can be obtained by organizing the creation and annihilation operators in the new vector $\tilde A=(a_1,a^\dag_1,a_2,a^\dag_2,a_3,a^\dag_3,a_4,a^\dag_4)^{tr}$, and by 
introducing the coefficient vector $\tilde Z=(z_1,\bar{z}_1,z_2,\bar{z}_2,z_3,\bar{z}_3,z_4,\bar{z}_4)^{tr}\in\CC^8$ 
together with the $8\times 8$ matrix
$\tilde \Sigma_3=$diag$(1,-1,1,-1,1,-1,1,-1)$; it will be useful in the next Section while discussing
entanglement criteria for Gaussian states.

\begin{lemma} 
The displacement operator $D(z)=\exp\Big(-(Z,\Sigma_3\, A)\Big)$ in \eqref{displ1} 
can be recast as $D(z)=\exp\Big(-(\tilde Z,\tilde \Sigma_3\, \tilde A)\Big)$ with
\begin{equation}
\tilde A= \mathcal{P}^{tr}\,A,\qquad \tilde Z=\mathcal{P}^{tr}\, Z,\qquad 
\tilde\Sigma_3={\cal P}^{tr}\, \Sigma_3\, {\cal P}\ ,\qquad
\mathcal{P}\mathcal{P}^{tr}={\bf 1}_{8}\ ,
\label{lemm2}
\end{equation}
where $\mathcal{P}$ is explicitly given in \eqref{lastmat} of Appendix B.
\label{lem2}
\end{lemma}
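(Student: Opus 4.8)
The plan is to recognize that the passage from $A=(a,a^\dag)^{tr}$ and $Z=(z,z^*)^{tr}$ to the interleaved vectors $\tilde A$ and $\tilde Z$ is nothing but a relabelling of the eight components implemented by a single permutation matrix $\mathcal{P}$, and that the exponent $-(Z,\Sigma_3 A)$ in \eqref{displ1} is a contraction left invariant when this permutation is applied simultaneously to both vectors and to $\Sigma_3$. First I would fix $\mathcal{P}$ to be the permutation matrix written explicitly in \eqref{lastmat}, namely the one reordering the ``block'' list $(a_1,a_2,a_3,a_4,a_1^\dag,a_2^\dag,a_3^\dag,a_4^\dag)$ into the ``interleaved'' list $(a_1,a_1^\dag,a_2,a_2^\dag,a_3,a_3^\dag,a_4,a_4^\dag)$, so that by construction $\tilde A=\mathcal{P}^{tr}A$. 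Applying the very same $\mathcal{P}^{tr}$ to $Z=(z_1,\dots,z_4,z_1^*,\dots,z_4^*)^{tr}$ then reproduces exactly $\tilde Z=(z_1,\bar z_1,\dots,z_4,\bar z_4)^{tr}$, which one checks entry by entry. Being a permutation matrix, $\mathcal{P}$ is real and orthogonal, so $\mathcal{P}\mathcal{P}^{tr}=\mathbf{1}_8$, which gives the last relation in \eqref{lemm2} for free.

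Next I would verify $\tilde\Sigma_3=\mathcal{P}^{tr}\Sigma_3\mathcal{P}$. Conjugating a diagonal matrix by a permutation merely permutes its diagonal entries according to the same reordering; since $\Sigma_3=\mathrm{diag}(1,1,1,1,-1,-1,-1,-1)$ carries $+1$ on the annihilation slots and $-1$ on the creation slots, the interleaving sends its diagonal to $(1,-1,1,-1,1,-1,1,-1)$, which is precisely $\tilde\Sigma_3$. This confirms all the identities collected in \eqref{lemm2}.

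The core of the argument is then immediate: inserting $\mathbf{1}_8=\mathcal{P}\mathcal{P}^{tr}$ and using that the relevant sesquilinear contraction is preserved by the real orthogonal matrix $\mathcal{P}$,
\[
(\tilde Z,\tilde\Sigma_3\,\tilde A)=\big(\mathcal{P}^{tr}Z,\,(\mathcal{P}^{tr}\Sigma_3\mathcal{P})\,\mathcal{P}^{tr}A\big)=\big(\mathcal{P}^{tr}Z,\,\mathcal{P}^{tr}\Sigma_3\,A\big)=(Z,\Sigma_3\,A)\ ,
\]
the middle equality using $\mathcal{P}\mathcal{P}^{tr}=\mathbf{1}_8$ and the last using orthogonal invariance of the form. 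Equivalently, and more transparently, $(\tilde Z,\tilde\Sigma_3\tilde A)$ and $(Z,\Sigma_3 A)$ are literally the same finite sum $\sum_{j=1}^4\big(z_j^*\,a_j-z_j\,a_j^\dag\big)$ of scalars times operators, merely written with the terms permuted, so reordering cannot change it. Hence the two exponents coincide and, by the definition \eqref{displ1}, $\exp\big(-(\tilde Z,\tilde\Sigma_3\tilde A)\big)=\exp\big(-(Z,\Sigma_3 A)\big)=D(z)$, which is the assertion.

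I expect no genuine obstacle here: the whole content is a change of component ordering. The only point requiring care is the purely mechanical check that the explicit matrix $\mathcal{P}$ of \eqref{lastmat} really realizes this particular block-to-interleaved permutation, and hence that the stated $\tilde A$, $\tilde Z$, $\tilde\Sigma_3$ are indeed the images of $A$, $Z$, $\Sigma_3$. Once that is in place, the invariance of the contraction under the common orthogonal relabelling, encoded in $\mathcal{P}\mathcal{P}^{tr}=\mathbf{1}_8$, does all the work.
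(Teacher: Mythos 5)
Your proof is correct: the paper in fact states Lemma \ref{lem2} without proof, treating it as the immediate bookkeeping fact that $\mathcal{P}$ of \eqref{lastmat} is the permutation matrix turning the block ordering $(a,a^\dag)$ into the interleaved ordering, and your verification (entrywise check of $\tilde A$, $\tilde Z$, the permuted diagonal giving $\tilde\Sigma_3$, orthogonality $\mathcal{P}\mathcal{P}^{tr}=\mathbf{1}_8$, and invariance of the contraction under the common real relabelling) is exactly the intended argument. In particular you correctly handle the sesquilinear convention implicit in \eqref{displ1}, where reality of $\mathcal{P}$ is what lets the conjugation commute with the permutation.
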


Using this new ordering, the expectation of the displacement operator $D(z)$ with respect to a Gaussian state $\rho_G$ reads
\be
\rho_G\left(D(z)\right)=\exp\left(-\frac{1}{2}(\tilde Z,\tilde G,\tilde Z)\right)\ ,
\label{bigcov1}
\ee
with the new covariance matrix $\tilde G$ explicitly given by
\be
\tilde G=\begin{pmatrix}
\tilde G_{11}&\tilde G_{12}&\tilde G_{13}&\tilde G_{14}\\
\tilde G_{21}&\tilde G_{22}&\tilde G_{23}&\tilde G_{24}\\
\tilde G_{31}&\tilde G_{32}&\tilde G_{33}&\tilde G_{34}\\
\tilde G_{41}&\tilde G_{42}&\tilde G_{43}&\tilde G_{44}
\end{pmatrix}\ ,
\label{bigcov2}
\ee
where
\be 
\label{covariance2}
\tilde G_{ij}=\frac{1}{2}
\begin{pmatrix}
\rho_G\left(\big\{a_i,a^\dagger_j\big\}\right)&-\rho_G\left(\big\{a_i,a_j^{\phantom{\dagger}}\big\}\right)\\
-\rho_G\left(\big\{a_i^\dagger,a^\dagger_j\big\}\right)&\rho_G\left(\big\{a^\dagger_i,a_j\big\}\right)\\
\end{pmatrix}\ .
\ee
The $2\times2$ matrices along the diagonal represent single-mode covariance matrices, while the off-diagonal ones account for correlations among the various modes.

\section{Entanglement in Gaussian states}
\label{EML}

Using the previous results, and in particular
the quasi-free property of the maps $\Phi_t$, we want now to study 1) whether it is possible to generate mesosocopic entanglement between different chains entirely by means of the dissipative microscopic dynamics and further 2) 
investigate the fate of the generated entanglement in the course of time and of its dependence on the strength of the coupling with the environment and on the temperature of the given microscopic invariant state. 

By \textit{mesoscopic entanglement} we mean the existence of mesoscopic states carrying non-local, quantum correlations  among the fluctuation operators pertaining to different chains. 
More precisely, we shall focus on the creation and annihilation operators $a^\#_1$ and $a^\#_3$ that, as already observed before, are collective degrees of freedom attached to the first, second chain, respectively. We shall then study the time-evolution of two-mode Gaussian states $\rho^{(13)}$, obtained by tracing a full
four-mode Gaussian state over $a_2^\#$ and $a_4^\#$; indeed, as discussed below, the trace operation does not spoil the Gaussian character of the
initial four-mode states.

In the case of two-mode Gaussian states, the presence of entanglement can be ascertained using the partial
transposition criterion, {\it i.e.} by looking at their behaviour when $a_1$ and $a^\dag_1$ are exchanged  while keeping $a_1^\dag a_1$ and $a_1a^\dag_1$ unchanged and without touching $a_3$ and $a_3^\dag$. 
If under this substitution, $\rho^{(13)}$ does not remain positive, then it carries quantum correlations between the modes $1$ and $3$ and thus results entangled. 
Vice versa, a Gaussian state with respect to these two modes that remains positive under the above substitution is for sure separable.
This is the content of the so-called Simon entanglement criterion \cite{Simon}.

Notice that the state $\Omega_\beta$ in \eqref{qfs1} is separable with respect to all its four modes; indeed, its density matrix representation $R_\beta$ in \eqref{qfs2} can be written as a product of four independent density matrices one for each of the modes. Indeed, the corresponding covariance matrix $\widetilde\Sigma^{(\beta)}$ results diagonal when expressed in the representation \eqref{bigcov1}, \eqref{bigcov2}, thus showing neither quantum nor classical correlations between the different modes.

As initial states, we shall consider states that are obtained from $R_\beta$ by the action of suitable squeezing operators in the
modes 1 and 3, {\it i.e.} Gaussian states of the form
\begin{equation}
\rho^{(\beta)}_{r_1r_3}=S_1(r_1)S_3(r_3)\,R_\beta\, S^\dagger_3(r_3)S^\dagger_1(r_1)\ ,
\label{t0}
\end{equation}
where $S_j(r_j)$, $r_j\in\RR$, are single-mode squeezing operators such that
$$
S^\dag_j(r_j)\,a^\dagger_j\,S_j(r_j)=\cosh(r_j)\, a^{\dagger}_j\,-\,\sinh(r_j)\,a_j\ ,\quad j=1,3\ .
$$
The squeezing operators map displacement operators $D(z)$ in \eqref{displ1} into  displacement operators
$$
D(z')=S^\dag_3(r_3) 
S^\dag_1(r_1)\,D(z)\,S_1(r_1)S_3(r_3)\ ,
$$
where $z'=(z_1',z_2,z'_3,z_4)$ with $z'_{1,3}=\cosh(r_{1,3})z_{1,3}-\sinh(r_{1,3})\bar{z}_{1,3}$.
Further, the modes are not mixed by the squeezing so that 
$\rho^{(\beta)}_{r_1r_3}$ is also a 
separable Gaussian state relatively to all four modes. In particular, after squeezing, the $8\times 8$ covariance matrix 
$\widetilde\Sigma^{(\beta)}$ of the thermal state $R_\beta$ is mapped into the following one:
\be
\widetilde \Sigma^{(\beta)}_{r_1,r_3}=\frac{1}{2\epsilon}
\begin{pmatrix}
{\cal S}(r_1) & {\bf 0}_4\\
{\bf 0}_4 & {\cal S}(r_3)
\end{pmatrix}\ ,
\qquad
{\cal S}(r)=\begin{pmatrix}
\phantom{-}\cosh(2r)&-\sinh(2r)&0&0\\
-\sinh(2r)&\phantom{-}\cosh(2r)&0&0\\
0&0&1&0\\
0&0&0&1
\end{pmatrix} \ ,
\ee
where ${\bf 0}_4$ is the null matrix in four dimensions; in presenting this result, the ordering introduced
at the end of the previous Section (denoted by a tilde) has again been used, so that $\widetilde \Sigma^{(\beta)}_{r_1,r_3}$
takes a convenient block diagonal form.

Moreover, a state $\rho^{(13)}$ on the Bose algebra generated by $a_{1,3}^\#$ can be obtained from $\rho^{(\beta)}_{r_1r_3}$ by restricting its action on displacement operators of the form $D(z_{13})$ with $z_{13}=(z_1,0,z_3,0)$ and $z_{1,3}\in\CC$. 
Namely, $\rho^{(13)}$ is completely defined by the expectations
\be
\label{state13}
\rho^{(13)}\big(D(z_{13})\big)={\rm Tr}\left(\rho^{(\beta)}_{r_1r_3}\,D(z_{13})\right)=
{\rm Tr}\big[R_\beta\, D(z'_{13})\big]\ ,
\ee
and then inherits the Gaussian character of $R_\beta$ as these expectations are Gaussian functions of $z_{1,3}$. 
Finally, the same argument shows that the mesoscopic, dissipative time-evolution $\Phi_t$ transforms it in a Gaussian state at all times $t\geq 0$:
\be
\label{evolvGauss}
\rho^{(13)}_t\left(D(z_{13})\right)={\rm Tr}\Big[\rho^{(\beta)}_{r_1r_3}\,\Phi_t\big[D(z_{13})\big]\Big]\ .
\ee
Therefore the covariance matrix of interest, that involves only the modes $1,3$, can be retrieved from the total matrix in the form \eqref{bigcov2} by discarding the blocks relative to modes $2,4$. Explicitly,
\begin{equation}
\label{newcov}
\widetilde G_{red}(t)=
\begin{pmatrix}
\rho^{(13)}_t( a_1^{\dagger}a_1)+\frac{1}{2}&-\rho^{(13)}_t(a_1^{2})&\rho^{(13)}_t( a_1a_3^{\dagger})&-\rho^{(13)}_t( a_1a_3)\\
-\rho^{(13)}_t( a_1^{\dagger2})&\rho^{(13)}_t( a_1^{\dagger}a_1)+\frac{1}{2}&-\rho^{(13)}_t( a_1^{\dagger}a_3^{\dagger})&\rho^{(13)}_t( a_1^{\dagger}a_3)\\
\rho^{(13)}_t( a_1^{\dagger}a_3)&-\rho^{(13)}_t( a_1a_3)&\rho^{(13)}_t( a_3^{\dagger}a_3)+\frac{1}{2}&-\rho^{(13)}_t( a_3^{2})\\
-\rho^{(13)}_t( a_1^{\dagger}a^{\dagger}_3)&\rho^{(13)}_t( a_1a_3^{\dagger})&-\rho^{(13)}_t( a_3^{\dagger2})&\rho^{(13)}_t( a_3^{\dagger}a_3)+\frac{1}{2}
\end{pmatrix}
\equiv
\begin{pmatrix}
\Sigma_1&\Sigma_c\\
\Sigma_c^{\dagger}&\Sigma_2
\end{pmatrix}\ .
\end{equation}

For two mode-Gaussian states, the already mentioned Simon's criterion not only provides an exhaustive entanglement witness, but it also offers a means to quantify it \cite{Simon}.
It is nevertheless convenient to formulate the criterion in terms of the previous covariance matrix \cite{Souza}. Consider the block structure of $\widetilde G_{red}(t)$ and define:
\be
I_1=\det(\Sigma_1)\ ,\qquad I_2=\det(\Sigma_2)\qquad I_3=\det(\Sigma_c)\ ,\qquad
I_4=\tr\Big(\Sigma_1\sigma_3\Sigma_c\sigma_3\Sigma_2\sigma_3\Sigma_c^{\dagger}\sigma_3\Big)\ .
\label{crit-1}
\ee
Then, the necessary and sufficient condition for a state to be separable is:
\begin{equation}
S\equiv I_1I_2+\Big(\frac{1}{4}-|I_3|\Big)^2-I_4-\frac{(I_1+I_2)}{4}\ge0\ .
\label{crit}
\end{equation}
Taking real squeezing parameters $r_1,r_2$ for both chains, we have that $\Sigma_c=\Sigma_c^\dagger$; in this case, the
four quantities $I_j$ can be explicitly computed as shown in Appendix F. 

Further, the amount of entanglement in two-mode Gaussian states can be measured through the so-called logarithmic negativity of the state: 
\be
\label{entmeas1}
E=\max\left\{0,-\frac{1}{2}\log_2\left(4\, {\cal I}\right)\right\}\ ,\\
\ee
where
\be
\label{entmeas2}
{\cal I}=\frac{I_1+I_2}{2}-I_3-\sqrt{\left[\frac{I_1+I_2}{2}-I_3\right]^2-(I_1I_2+I_3^2-I_4)}\ .
\ee

\section{Spin chain models}

In the following we shall apply the theoretical tools developed so far to the study of the dissipative generation of mesoscopic entanglement in two different models: in the first one, 
the microscopic Lindblad generator contains contributions involving single-site
operators from both chains, while in the second one all terms contain single-site operators from one chain only.

\subsection{Model 1}

We shall consider a Lindblad generator of the form (\ref{LINDMICO0a})-(\ref{LINDMICO0c}), with Hamiltonian term
\be
\label{mod1H}
\HH_N[X]=-i\big[H_N,\, X\big]\ ,\qquad
H_N=\frac{\eta}{2}\sum_{k=0}^{N-1} h^{(k)}\ ,\quad h^{(k)}=\sigma_3^{(k)}\otimes \bold{1}^{(k)}\,+\,\bold 1^{(k)}\otimes\sigma_3^{(k)}\ ,
\ee
and dissipative contribution of the generic form \eqref{LINDMICO0c},
\be
\DD_N[X]=\frac{1}{2}\sum_{k,\ell=0}^{N-1}J_{k\ell}\sum_{\mu,\nu=1}^4\,D_{\mu\nu}\Big(v_\mu^{(k)}\,\left[X\,,\,(v_{\nu}^\dag)^{(\ell)}\right]\,+\,\left[v_{\mu}^{(k)}\,,\,X\right]\,(v_\nu^\dag)^{(\ell)}\Big)\ ,
\label{mod1-dissip}
\ee
with the following single-site Kraus operators
\be
\label{Krops}
v_1=\sigma_+\otimes \sigma_-\ ,\quad v_2=\sigma_-\otimes \sigma_+\,,\quad
v_3=\frac{1}{2}\big(\sigma_3\otimes \bold{1}\big)\,,\quad v_4=\frac{1}{2}\big(\bold{1}\otimes \sigma_3\big)\ ,
\ee
where $\sigma_{\pm}=(\sigma_1\pm i\,\sigma_2)/2$, while the $4\times 4$ matrix $D$ is given by
\be
D=\begin{pmatrix}
\delta&0&\gamma&\gamma\\
0&\delta&\gamma&\gamma\\
\gamma&\gamma&\delta&0\\
\gamma&\gamma&0&\delta
\end{pmatrix}\ ;
\label{Kosmat}
\ee
by choosing $|\gamma|\le \delta/2$, $D$ results positive semi-definite.
In this case, one can recast $\DD_N$ in a double commutator form:
\be
\DD_N[X]=\frac{1}{2}\sum_{k,\ell=0}^{N-1}J_{k\ell}\sum_{\mu,\nu=1}^4D_{\mu\nu}\Big[\left[v_\mu^{(k)}\,,\,X\right]\,,\,(v^\dag_{\nu})^{(\ell)}\Big]\ .
\label{double}
\ee
In the following we shall study the emergent mesoscopic dynamics corresponding to the microscopic dissipative dynamics locally generated by
$\LL_N[X]=\HH_N[X]+\DD_N[X]$ as given above.

Local states $\rho_N$ evolve according to the master equation involving the dual generator 
$\LL_N^{\phantom{|}\star}$:
\be
\partial_t \rho_N(t)=\LL_N^{\star\phantom{|}}[\rho_N(t)]=-i\big[H_N,\,\rho_N(t)\big]+ \DD_N[\rho_N(t)]\ .
\ee
The microscopic thermal state 
$$
\rho_N^{(\beta)}=\bigotimes_{j=0}^{N-1}\frac{1}{4\cosh^2(\eta\beta/2)}\,{\rm e}^{-\beta \eta h^{(k)}/2}\ ,
$$
in \eqref{STATE} is left invariant by the dissipative dynamics; indeed, 
$\LL_N^\star[\rho_N^{(\beta)}]=0$, as it follows from
$$
\left[\sigma_3\otimes\bold{1}+\bold{1}\otimes\sigma_3\,,\,v_\mu\right]=0\qquad \forall \mu=1,2,3,4\ .
$$ 

Further, since spin operators at different sites commute, given the Lindblad generator $\LL_N$, its action on the self adjoint element $x_i^{(k)}$ from the set $\chi$
at site $k$ is given by:
\beann
\LL_N\left[x_i^{(k)}\right]&=&i\frac{\eta}{2}\left[\sigma_3^{(k)}\otimes\bold{1}+\bold{1}\otimes \sigma_3^{(k)}\,,\,x_i^{(k)}\right]\\
&+&J_0\sum_{\mu,\nu=1}^4\frac{D_{\mu\nu}}{2}\left[\left[v_\mu^{(k)}\,,\,x_i^{(k)}\right]\,,\,(v^\dag_{\nu})^{(k)}\right]\ .
\eeann
This action maps the linear span $\chi$ in itself; indeed, $\LL_N\left[x_i^{(k)}\right]=\sum_{j=1}^8\mathcal{L}_{ij}\,x_j^{(k)}$, with the $8\times 8$ matrix $\mathcal{L}=\mathcal{H}+\mathcal{D}$ explicitly given in Appendix D.

Then, the generator of the mesoscopic dissipative dynamics as given in {\sl Corollary \ref{cor1}} is completely determined by the $8\times 8$ matrices $H^{(1)}$ and $D^{(1)}$ in \eqref{Flind1a}, \eqref{Flind1b} or $H^{(2)}$ and $D^{(2)}$ in \eqref{Flind2a}.
Here, we give the form of the generator with respect to
creation and annihilation operators.
 
\begin{proposition}
\label{propo1}
In terms of annihilation and creation operators $a^\#_i$, $i=1,2,3,4$, the mesoscopic Lindblad generator acts on displacement operators $D(z)$ as $\LL=\HH\,+\,\DD$, with
$\HH$ and $\DD$ given by
\bea
\label{LINDBOS0}
\HH[D(z)]&=&i\eta\Big[\sum_{j=1}^4a^\dag_j a_j\,,\,D(z)\Big]\\
\DD[D(z)]&=&\sum_{i,j=1}^{8}K_{ij}^{(\beta)}\left(V^\dag_i\,D(z)\,V_j\,-\,\frac{1}{2}\left\{V^\dag_i\,V_j\,,\,D(z)\right\}\right)\ ,
\label{LINDBOS}
\eea
where $V=(a_1,a_2,a^\dag_1,a^\dag_2,a_3,a_4,a^\dag_3,a^\dag_4)^{tr}$ and Kossakowski matrix 
\bea
\label{gen0}
\hskip-1cm
&&
K^{(\beta)}=\frac{J_0}{\epsilon}\begin{pmatrix}A_\beta&B_\beta\cr B_\beta&A_\beta\end{pmatrix}\ ,\quad 
A_\beta=\delta\,\begin{pmatrix}
1+\epsilon&0&0&0\cr   
0&1+\epsilon&0&0\cr    
0&0&1-\epsilon&0\cr    
0&0&0&1-\epsilon
\end{pmatrix}\\ 
\hskip-1cm
\label{gen2a}
&&
B_\beta=\gamma\begin{pmatrix}
\epsilon(1+\epsilon)&-(1+\epsilon)c&0&0\cr
-(1+\epsilon)c&-\epsilon(1+\epsilon)&0&0\cr
0&0&\epsilon(1-\epsilon)&-(1-\epsilon)c\cr
0&0&-(1-\epsilon)c&-\epsilon(1-\epsilon)
\end{pmatrix}\ ,
\eea
where $\epsilon=\tanh(\eta\beta/2)$ and $c=\sqrt{1-\epsilon^2}$ as before.
\end{proposition}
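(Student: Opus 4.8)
The plan is to read off Proposition \ref{propo1} from the general formulas of Corollary \ref{cor1} specialised to the data of Model 1. First I would verify the hypothesis \eqref{Ldec}. Since spin operators at different sites commute, acting with $\LL_N$ on a single-site element $x_i^{(k)}$ collapses both site-sums in the double-commutator dissipator \eqref{double}: only the diagonal coupling $J_{kk}=J_0$ contributes, leaving a single-site Hamiltonian commutator plus $\frac{J_0}{2}\sum_{\mu,\nu}D_{\mu\nu}\,[[v_\mu,x_i],v_\nu^\dag]^{(k)}$. A direct check on the eight generators \eqref{matrix}, \eqref{matrixa} shows that the result stays in the linear span $\cx$, yielding the $8\times 8$ matrix $\mathcal{L}=\mathcal{H}+\mathcal{D}$ recorded in Appendix D. With \eqref{Ldec} in force, Corollary \ref{cor1} fixes the entire mesoscopic generator in terms of $\mathcal{L}$ together with the already computed $C^{(\beta)}$ of \eqref{modcorrmat2}, the symplectic matrix $\sigma^{(\beta)}$ of \eqref{COMM1} and its inverse \eqref{invCOMM1}, and the matrix $\mathcal{M}$ of Appendix B.

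For the Hamiltonian sector I would evaluate $H^{(1)}$ from \eqref{Flind1a} and then $H^{(2)}=\mathcal{M}^\dag H^{(1)}\mathcal{M}$ from \eqref{Flind2a}, substituting into \eqref{fluctLind3}. Because the microscopic Hamiltonian \eqref{mod1H} is precisely the on-site field defining the thermal state, I expect $H^{(2)}$ to collapse to the total number operator, giving $\HH[D(z)]=i\eta\big[\sum_{j=1}^4 a_j^\dag a_j,D(z)\big]$; this is forced by Proposition \ref{prop-state}, since $\Omega_\beta$ is the KMS state for $K=\eta\sum_j a_j^\dag a_j$ and must remain invariant under $\Phi_t$.

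The dissipative sector is the computational core. Here I would compute $D^{(1)}$ from \eqref{Flind1b}, pass to the bosonic picture through the congruence $D^{(2)}=\mathcal{M}^\dag D^{(1)}\mathcal{M}$ of \eqref{Flind2a}, and insert the result into \eqref{fluctLind4}. It then only remains to re-express the answer in the mode ordering $V=(a_1,a_2,a_1^\dag,a_2^\dag,a_3,a_4,a_3^\dag,a_4^\dag)^{tr}$ of the statement. As $V=P\,A$ is a permutation of $A=(a,a^\dag)^{tr}$, the Kossakowski matrix is simply $K^{(\beta)}=P\,D^{(2)}\,P^{tr}$; collecting entries reproduces the block form \eqref{gen0}, \eqref{gen2a}, with the overall factor $J_0/\epsilon$ reflecting that only $J_0$ survives at the single-site level, while the $\epsilon$-dependence enters through $C^{(\beta)}$, $(\sigma^{(\beta)})^{-1}$ and the $\sqrt{\epsilon}$ scaling of $\mathcal{M}$. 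That $K^{(\beta)}$ is Hermitian and positive semi-definite for $|\gamma|\le\delta/2$ is guaranteed by the complete positivity stated in Corollary \ref{cor1}, but can equally be read off from the eigenvalues of the blocks $A_\beta\pm B_\beta$.

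The main obstacle is purely computational: the several $8\times 8$ products $(\sigma^{(\beta)})^{-1}(\mathcal{L}C^{(\beta)}\pm C^{(\beta)}\mathcal{L}^{tr})(\sigma^{(\beta)})^{-1}$ and the subsequent congruences by $\mathcal{M}$ must be carried out carefully, since the $\epsilon$-dependent off-diagonal entries couple all four modes. The one genuinely conceptual point is the bookkeeping of the reordering $A\mapsto V$: one must track the conjugation pattern of the components of $A$ so that the permutation correctly carries $\sum_{ij}D^{(2)}_{ij}A_i^\dag(\cdot)A_j$ into $\sum_{ij}K^{(\beta)}_{ij}V_i^\dag(\cdot)V_j$ and produces the symmetric two-by-two block structure of \eqref{gen0}, \eqref{gen2a}.
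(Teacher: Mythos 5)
Your proposal follows essentially the same route as the paper: verify \eqref{Ldec} so that only the diagonal coupling $J_{kk}=J_0$ survives and the single-site matrix $\mathcal{L}=\mathcal{H}+\mathcal{D}$ of Appendix D is obtained, feed it into Corollary \ref{cor1} to get $H^{(1)}$ and $D^{(1)}$ from \eqref{Flind1a}--\eqref{Flind1b}, pass to $H^{(2)}$ and $D^{(2)}$ by the congruence with $\mathcal{M}$, and finally permute the modes into the ordering $V=(a_1,a_2,a_1^\dag,a_2^\dag,a_3,a_4,a_3^\dag,a_4^\dag)^{tr}$ to read off the block Kossakowski matrix --- precisely the computation the paper carries out in Appendix D and invokes in its proof. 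One caveat: your side remark that the number-operator form of $\HH$ is \emph{forced} by the invariance of $\Omega_\beta$ is not sound, since any gauge-invariant quadratic Hamiltonian commutes with $K=\eta\sum_{j}a_j^\dag a_j$ and hence also leaves $R_\beta$ invariant; this is harmless here because your primary argument is the direct evaluation of $H^{(1)}$ and $H^{(2)}$, which is what actually establishes \eqref{LINDBOS0}.
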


\begin{proof}
The Hamiltonian contribution $\HH$ to the generator is defined by the matrix $H^{(2)}$ in equation \eqref{H1A}
of Appendix D: it is diagonal in the operators $A_i^\#$, defined in (\ref{anncrop}).
Moreover, $A^\dag_{5,6,7,8}=A_{1,2,3,4}$; thus, by using the canonical commutation relations $[a_i\,,\,a^\dag_j]=\delta_{ij}$, the mesoscopic Hamiltonian results proportional to the number operator $\sum_{j=1}^4a^\dag_j a_j$. 

The form of the dissipative term $\DD$ in the generator derives from the expression of the Kossakowski matrix given in equations \eqref{D1A1} and \eqref{D1A2} of Appendix D. Using {\sl Corollary~1}, the form (\ref{LINDBOS}) then follows;
note that, for convenience, the sums over the indices $i,j$ in (\ref{LINDBOS}) use the ordering
$(a_1,a_2,a^\dag_1,a^\dag_2,a_3,a_4,a^\dag_3,a^\dag_4)^{tr}$ instead of
$(a_1,a_2,a_3,a_4,a^\dag_1,a^\dag_2,a^\dag_3,a^\dag_4)^{tr}$ introduced before.
\qed
\end{proof}
\medskip

\begin{remark}
{\rm From the above expression of the Lindblad generator there emerge two main features
of the mesoscopic dissipative dynamics: 1) the unitary contribution $\HH$ to the collective dynamics of the Boson degrees of freedom shows no interactions among them.
The mesoscopic Hamiltonian is proportional to the number operator and as such it does commute with the dissipative contribution: $\DD\circ\HH=\HH\circ\DD$.  
In fact, $\DD$ is gauge-invariant, it does not change by sending $a_i$ into ${\rm e}^{i\phi}a_i$ and $a_i^\dag$ into ${\rm e}^{-i\phi}a^\dag_i$, $i=1,2,3,4$. Furthermore, 2) were it not for the off-diagonal blocks $B_\beta$ in the Kossakowski matrix, the dissipative dynamics would correspond to decaying process affecting independently the various bosonic degrees of freedom. For instance, in absence of off-diagonal terms in the Kossakowski matrix, one would have
$$
\LL[a_i]=-\left(i\omega\,+\,J_0\delta\right)\,a_i\ .
$$ 
Instead, the presence of $B_\beta\neq 0$ statistically couples the collective operators, $a^\#_{1,3}$, $a^\#_{2,4}$  referring to different chains.}
\qed
\end{remark}

\subsection{Model 2}

While the Lindblad operators $v$'s of the first model involve contributions from both chains ({\it c.f.} \eqref{Krops})
and different sites are statistically coupled by the coefficients $J_{k\ell}$, in the following we shall consider a Lindblad generator 
with the same Hamiltonian term as in (\ref{mod1H}),
and a diagonal dissipative contribution of the form:
\be
\label{modD2a}
\DD_N[X]=\sum_{k=0}^{N-1}\DD^{(k)}[X]\ ,\quad
\DD^{(k)}_N[X]=\sum_{\mu,\nu=1}^{6}D_{\mu\nu}\left(v^{(k)}_\mu\, X\,
v_\nu^{(k)}-\frac{1}{2}\left\{v^{(k)}_\mu\,v^{(k)}_\nu\,,\,X\right\}\right)\ ,
\ee
with self-adjoint Lindblad operators,
\be
\label{Krausopa}
v_{1,2,3}=\sigma_{1,2,3}\otimes\bold{1}\ ,\quad v_{4,5,6}=\bold{1}\otimes\sigma_{1,2,3}\ ,
\ee
and $6\times 6$ Kossakowski matrix $D$ given by
\be
\label{Kossmat2}
D=\begin{pmatrix}M&M\cr M&M\end{pmatrix}\ ,\quad M=
\begin{pmatrix}
1&-i\epsilon&0\\
i\epsilon&1&0\\
0&0&\xi
\end{pmatrix}\ ,
\end{equation}
where the conditions $\xi\ge0$ and $\epsilon=\tanh(\eta\beta/2)\leq 1$ guarantee $D\geq 0$. 
Because of the symmetry of the Kossakowski matrix, each single site contribution to the Lindblad generator can be recast in the simpler form: 
\bea
\label{Lind2a}
\DD^{(k)}_N[X]&=&\sum_{\mu,\nu=1}^{3}M_{\mu\nu}\left(w^{(k)}_\mu\,X\,w^{(k)}_\nu-\frac{1}{2}\left\{w^{(k)}_\mu\,w^{(k)}_\nu\,,\,X\right\}\right)\\
\nonumber
&&\hskip-2cm
=\frac{1}{2}\Big(\left[w^{(k)}_1\,,\,\left[X\,,\,w^{(k)}_1\right]\right]+
\left[w_2^{(k)}\,,\,\left[X\,,\,w^{(k)}_2\right]\right]+\gamma\left[w_3^{(k)}\,,\left[X\,,\,w^{(k)}_3\right]\right]\Big)\\
\label{Lind2c}
&-&i\frac{\epsilon}{2}\,\left\{w^{(k)}_1\,,\,\left[X\,,\,w^{(k)}_2\right]\right\}+i\frac{\epsilon}{2}\,\left\{w_2^{(k)}\,,\,\left[X\,,\,w^{(k)}_1\right]\right\}
\eea
with operators $w_\mu=\sigma_\mu\otimes\bold{1}+\bold{1}\otimes\sigma_{\mu}$ obeying
\bea
\label{Pauli1}
\left[w_j\,,\,w_k\right]&=&2i\epsilon_{jk\ell}\,w_\ell\\
\label{Pauli2}
\left\{w_j\,,\,w_k\right\}&=&\sigma_j\otimes\sigma_k\,+\,\sigma_k\otimes\sigma_j\,+\,i\epsilon_{jk\ell}\left(\sigma_\ell\otimes\bold{1}-\bold{1}\otimes\sigma_\ell\right)\ .
\eea
In the Schr\"odinger picture, the local spin states $\rho_N$ evolve in time according to the dual generator 
$\LL_N^{\star\phantom{|}}=\left(\HH_N^{\,\star\phantom{|}}+\DD_N^{\,\star\phantom{|}}\right)$ where
\beann
\hskip-1cm
\HH_N^{\,\star\phantom{|}}[\rho_N]&=&-i\eta\sum_{k=0}^{N-1}\left[w_3^{(k)}\,,\,\rho_N\right]\ ,\quad
\DD_N^{\,\star\phantom{|}}[\rho_N]=\sum_{k=0}^{N-1}\left(\DD^{(k)}\right)^{\star\phantom{|}}[\rho_N]\ ,\\
\left(\DD^{(k)}_N\right)^{\star\phantom{|}}[\rho_N]&=&\sum_{\mu,\nu=1}^{3}M_{\mu\nu}\left(w^{(k)}_\nu\,\rho_N\,w^{(k)}_\mu-\frac{1}{2}\left\{w^{(k)}_\mu\,w^{(k)}_\nu\,,\,\rho_N\right\}\right)\\
\hskip-1cm
&=&\frac{1}{2}\sum_{\mu=1}^2\left[w^{(k)}_\mu,\left[\rho_N,w^{(k)}_\mu\right]\right]+\gamma\left[w_3^{(k)},\left[w^{(k)}_3,\rho_N\right]\right]\\
&&\hskip-2cm
+i\frac{\epsilon}{2}\,\left\{w^{(k)}_1,\left[\rho_N,w^{(k)}_2\right]\right\}-i\frac{\epsilon}{2}\,\left\{w_2^{(k)},\left[\rho_N,w^{(k)}_1\right]\right\}-2\epsilon\left\{w_3,\rho_N\right\}\ .
\eeann
In terms of the operators $w_\mu$, the microscopic state $\rho^{(\beta)}_N$ in \eqref{STATE} is the tensor product of $N$ density matrices of the form
$$
\frac{1}{4\cosh^2(\frac{\eta\beta}{2})}\, \exp\left(-\frac{\eta\beta}{2} w_3\right)\ .
$$
Expanding the exponential and using \eqref{Pauli2} with $j=k=3$ one gets:
$$
\rho_N^{(\beta)}=\bigotimes_{k=0}^{N-1}\frac{1}{4}\left(\bold{1}\,-\,\epsilon\,w^{(k)}_3+\epsilon^2\sigma^{(k)}_3\otimes\sigma^{(k)}_3\right)\ ,\qquad \epsilon=\tanh\left(\frac{\beta\eta}{2}\right)\ .
$$
By explicit computation one then checks that $\LL_N^{\star\phantom{|}}\big[\rho^{(\beta)}_N\big]=0$, whence the microscopic local states are left invariant by the microscopic dissipative dynamics.
This fact is one of the two conditions for applying the results of the previous sections; the other condition is that the action of the local generator $\LL_N$ maps into itself the linear span $\mathcal{X}$ of the elements $x_j\in\chi$ in \eqref{matrix},\eqref{matrixa}.
This is verified in Appendix E. Finally, as for the first model, it is sufficient to explicitly write
the generator of the quasi-free mesoscopic semigroup emerging from the above microscopic dissipative dynamics
in the language of creation an annihilation operators:
 
\begin{proposition}
\label{propo3}
In terms of annihilation and creation operators $a^\#_i$, $i=1,2,3,4$, the mesoscopic Lindblad generator reads $\LL=\HH\,+\,\DD$, where the action of $\HH$ and $\DD$ on displacement operators $D(z)$ is as in (\ref{LINDBOS0}) and (\ref{LINDBOS}),
where the Kossakowski matrix now reads 
\bea
\label{gen20}
K_\beta&=&\frac{2}{\epsilon}\begin{pmatrix}(1+\epsilon)M_\beta&0&(1+\epsilon)N_\beta&0
\cr 0&(1-\epsilon)M_\beta&0&(1-\epsilon)N_\beta\cr
(1+\epsilon)N_\beta&0&(1+\epsilon)M_\beta&0\cr
0&(1-\epsilon)N_\beta&0&(1-\epsilon)M_\beta\end{pmatrix}\\ 
\label{gen21} 
M_\beta&=&\,\begin{pmatrix}
1+\xi&0\cr   
0&3+\xi\end{pmatrix}\ ,\quad 
N_\beta=\begin{pmatrix}
\epsilon^2&-\epsilon c\cr
-\epsilon c&1+c^2
\end{pmatrix}\ ,
\eea
again with $\epsilon=\tanh(\eta\beta/2)$, $c=\sqrt{1-\epsilon^2}$.
\end{proposition}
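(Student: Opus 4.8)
The plan is to follow the same route as in \emph{Proposition \ref{propo1}}, exploiting the fact that \emph{Corollary \ref{cor1}} reduces everything to linear algebra once the single-site action matrix $\mathcal{L}=\mathcal{H}+\mathcal{D}$ is available. Since the dissipative part of Model 2 is diagonal in the site index, it corresponds in the general scheme of Section \ref{DDTISC} to the choice $J_{k\ell}=\delta_{k\ell}$ (so that $J_0=1$ and the summability conditions \eqref{JKL1}--\eqref{JKL2} are automatic) and to the six self-adjoint Kraus operators \eqref{Krausopa}. First I would evaluate the local generator $\LL_N$ on each $x_i^{(k)}\in\chi$: the Hamiltonian piece coincides with that of Model 1, while the dissipative piece is obtained from the double-commutator/anti-commutator form \eqref{Lind2a}--\eqref{Lind2c} together with the algebra \eqref{Pauli1}--\eqref{Pauli2} of the operators $w_\mu=\sigma_\mu\otimes\bold{1}+\bold{1}\otimes\sigma_\mu$. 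This yields the $8\times 8$ matrix $\mathcal{L}$ whose entries, together with the stability of the linear span $\mathcal{X}$ under $\LL_N$, are recorded in Appendix E.

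With $\mathcal{L}$ at hand, I would insert it into \eqref{Flind1a}--\eqref{Flind1b} to obtain the Hermitian matrix $H^{(1)}$ and the positive matrix $D^{(1)}$, conjugating the combinations $\mathcal{L}\,C^{(\beta)}\mp C^{(\beta)}\mathcal{L}^{tr}$ by the inverse symplectic matrix $(\sigma^{(\beta)})^{-1}$ of \eqref{invCOMM1}, with $C^{(\beta)}$ the correlation matrix \eqref{modcorrmat2}. Passing to the creation and annihilation operator picture is then the congruence \eqref{Flind2a} by the mode matrix $\mathcal{M}$ of \eqref{matrix1}, which sends $H^{(1)}\mapsto H^{(2)}$ and $D^{(1)}\mapsto D^{(2)}$. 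For the Hamiltonian contribution the outcome is identical to Model 1: $H^{(2)}$ is diagonal in the $A_i^\#$ and, because $A^\dag_{5,6,7,8}=A_{1,2,3,4}$, the canonical commutation relations \eqref{coomodea} collapse it to a multiple of the number operator, reproducing \eqref{LINDBOS0}.

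The core of the proof is the dissipative term. After the congruence I would rewrite $D^{(2)}$ in the Kraus form \eqref{LINDBOS} relative to the ordered vector $V=(a_1,a_2,a^\dag_1,a^\dag_2,a_3,a_4,a^\dag_3,a^\dag_4)^{tr}$, so that its matrix of coefficients in that ordering is precisely the Kossakowski matrix $K_\beta$. I would then check block by block that $K_\beta$ takes the form \eqref{gen20}--\eqref{gen21}: the diagonal $2\times2$ blocks assemble into $M_\beta=\mathrm{diag}(1+\xi,3+\xi)$, the off-diagonal blocks into $N_\beta$ (these being the inter-chain correlations between the collective modes $a^\#_{1,2}$ and $a^\#_{3,4}$), while the weights $1\pm\epsilon$ and the overall factor $2/\epsilon$ descend from the eigenvalues $1\pm\epsilon$ of the factor $\bold{1}-\epsilon\sigma_1$ in $C^{(\beta)}$ and from the $1/\epsilon$ scaling of $(\sigma^{(\beta)})^{-1}$.

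The main obstacle is purely computational: carrying the temperature-dependent factors through the two similarity transformations and confirming that the anti-commutator contributions of \eqref{Lind2c}, which couple the two chains through the symmetric operators $w_\mu$, nonetheless recombine into the clean chain-block pattern of $K_\beta$. I expect the most economical route is to keep every intermediate $8\times 8$ matrix in the three-fold tensor-product form used for $C^{(\beta)}$, $\sigma^{(\beta)}$ and $\Sigma^{(\beta)}$ in \eqref{modcorrmat2}, \eqref{COMM1} and \eqref{covmat2}, since the conjugations by $(\sigma^{(\beta)})^{-1}$ and $\mathcal{M}$ then act factor-wise; in this representation the positivity $K_\beta\ge0$ guaranteed abstractly by \emph{Corollary \ref{cor1}} is also transparent from the conditions $\xi\ge0$ and $\epsilon\le1$.
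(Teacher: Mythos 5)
Your proposal matches the paper's own proof, which likewise reduces Proposition~\ref{propo3} to \emph{Corollary~\ref{cor1}} plus the single-site computation of $\mathcal{L}=\mathcal{H}+\mathcal{D}$ for the operators $w_\mu$ recorded in Appendix~E, followed by the conjugations by $(\sigma^{(\beta)})^{-1}$ and the congruence $D^{(2)}=\mathcal{M}^\dag D^{(1)}\mathcal{M}$ exactly as in Model~1. The points you single out---site-diagonality as the choice $J_{k\ell}=\delta_{k\ell}$, stability of the span $\mathcal{X}$, and reassembly of the blocks $M_\beta$, $N_\beta$ with the weights $(1\pm\epsilon)$ and prefactor $2/\epsilon$---are precisely the ingredients the paper relies on, so the approach is essentially identical.
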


\noindent
The proof is very similar to the one discussed for the previous model and it is based on 
{\sl Corollary 1} and the results of Appendix E.

Though the details are different, the structure of the Kossakowski matrix is similar to the one in Model 1, so that again the Hamiltonian contribution $\HH$ to the mesoscopic Lindblad generator commutes with the dissipative one.
Moreover, also in this case, the off-diagonal elements of the Kossakowski matrix statistically couple the mesoscopic operators $a^\#_{1,3}$, $a^\#_{2,4}$ referring to different chains.

\section{Environment induced mesoscopic entanglement}

Given the results of the previous Section, one can now study whether the mesoscopic dissipative time-evolutions in Model 1 and 2 can give rise to mesoscopic entanglement between the two independent chains, and, if yes, analyze 
the fate of the generated entanglement in the course of time.

\subsection{Entanglement Dynamics: Model 1}
\label{EDPTM}

In this case the entanglement criterion (\ref{crit}) can be studied analytically:
we will show that the two spin chains can indeed become mesoscopically entangled, and relate the behaviour of these bath-induced quantum correlations to the squeezing parameters, the parameter $\gamma$ and the temperature associated to
the initial microscopic state. For sake of simplicity, we shall further
set $\delta=J_0=\eta=1$, since these parameters do not play any role in the discussion that follows.

The behaviour in time of the logarithmic negativity $E$, introduced in (\ref{entmeas1}), is shown in Fig.\ref{GAMMA} for
different values of the dissipative parameter $\gamma$ appearing in the Kossakowski matrix and fixed initial temperature $T$.
Both a ``symmetrically squeezed'', with $r_1=r_3=r$, and ``one-mode squeezed'', with $r_1=r$, $r_3=\,0$, initial state
have been studied; however, since similar results hold for both cases, only the graphs relative to the symmetric squeezed case will be shown.
From the behaviour of $E$, one clearly sees
that the two infinite spin chains get entangled by the dynamics. Since the Hamiltonian does not contain coupling terms,
this entanglement is due solely to the mixing effects of the environment within which the two spin chains are embedded.
Moreover, the amount of created entanglement increase as the dissipative parameter $\gamma$ gets larger,
while a non-zero entanglement appears earlier in time.
\begin{figure}[t]
\center\includegraphics[scale=0.65]{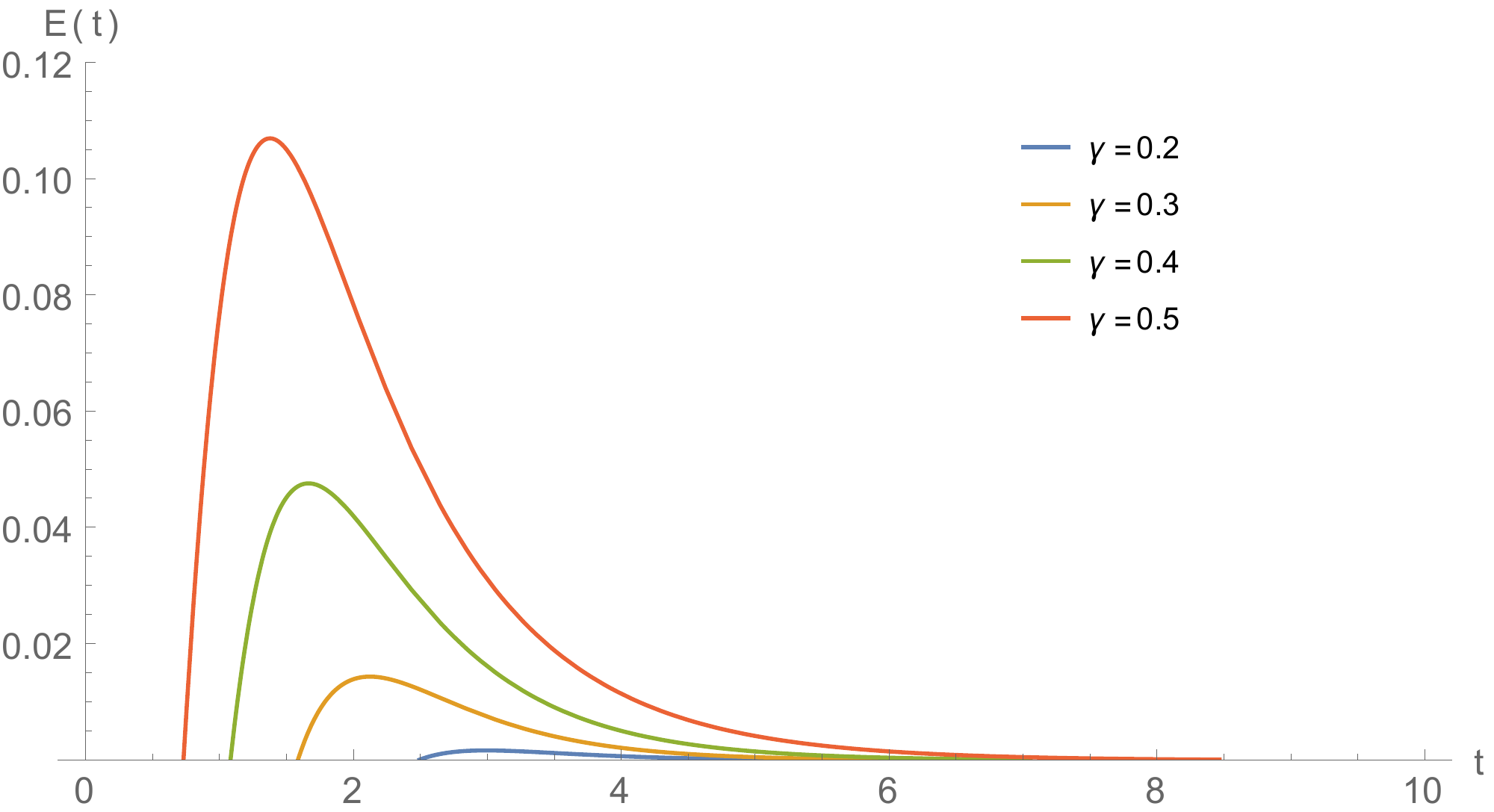}
\caption{\small Model 1: behaviour in time of the logarithmic negativity $E$ for different values of $\gamma$ at fixed temperature
$T=0.1$, for a symmetrically squeezed initial state with $r_1=r_3=r=1$.}
\label{GAMMA}
\end{figure}
\begin{figure}[h!]
\center\includegraphics[scale=0.65]{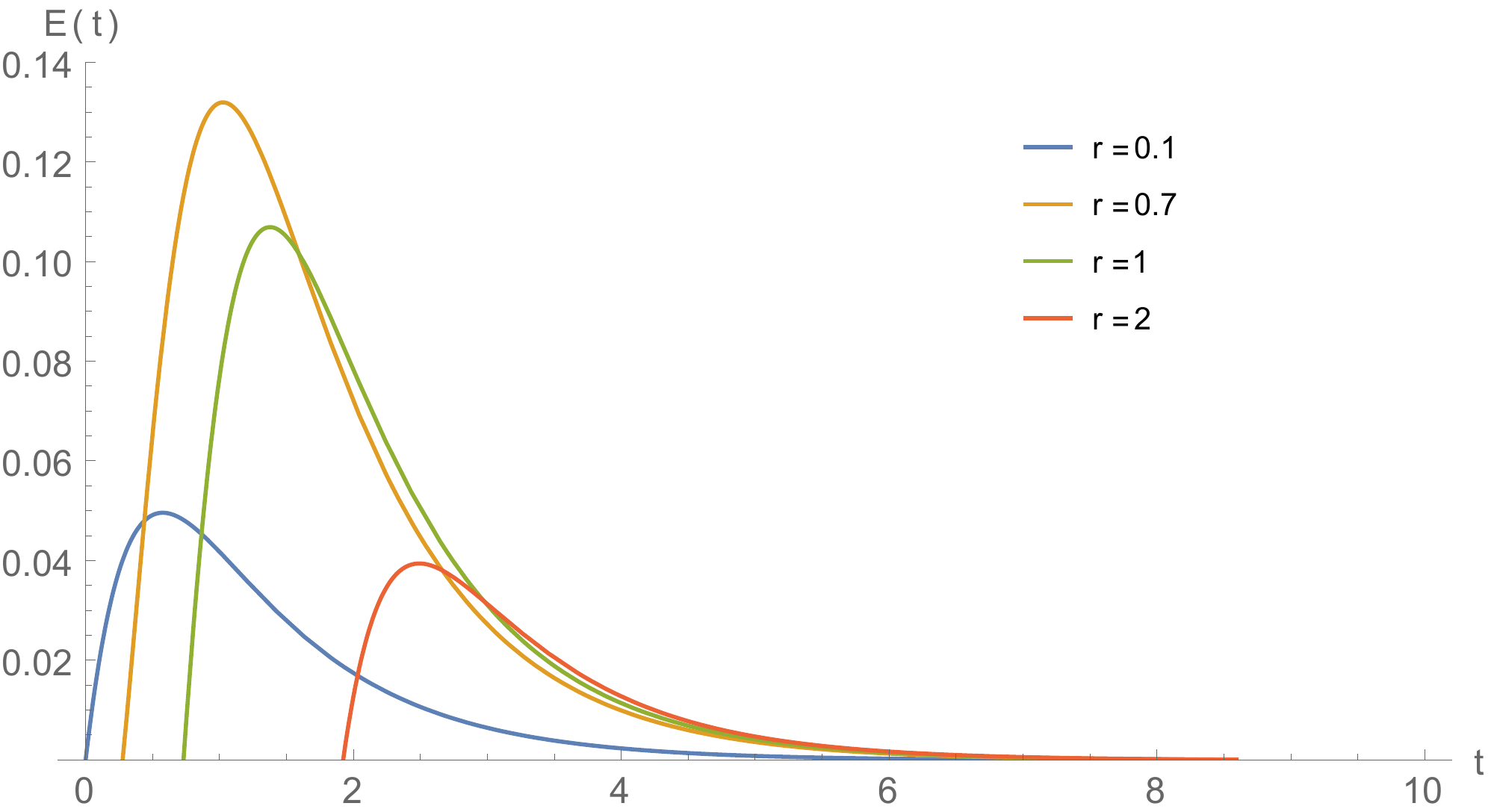}
\caption{\small Model 1: behaviour in time of the logarithmic negativity $E$ for different values of the squeezing parameter
$r=r_1=r_3$, at fixed temperature
$T=0.1$ and dissipative parameter $\gamma=1/2$.}
\label{SQUEEZE}
\end{figure}\
\begin{figure}[h!]
\center\includegraphics[scale=0.65]{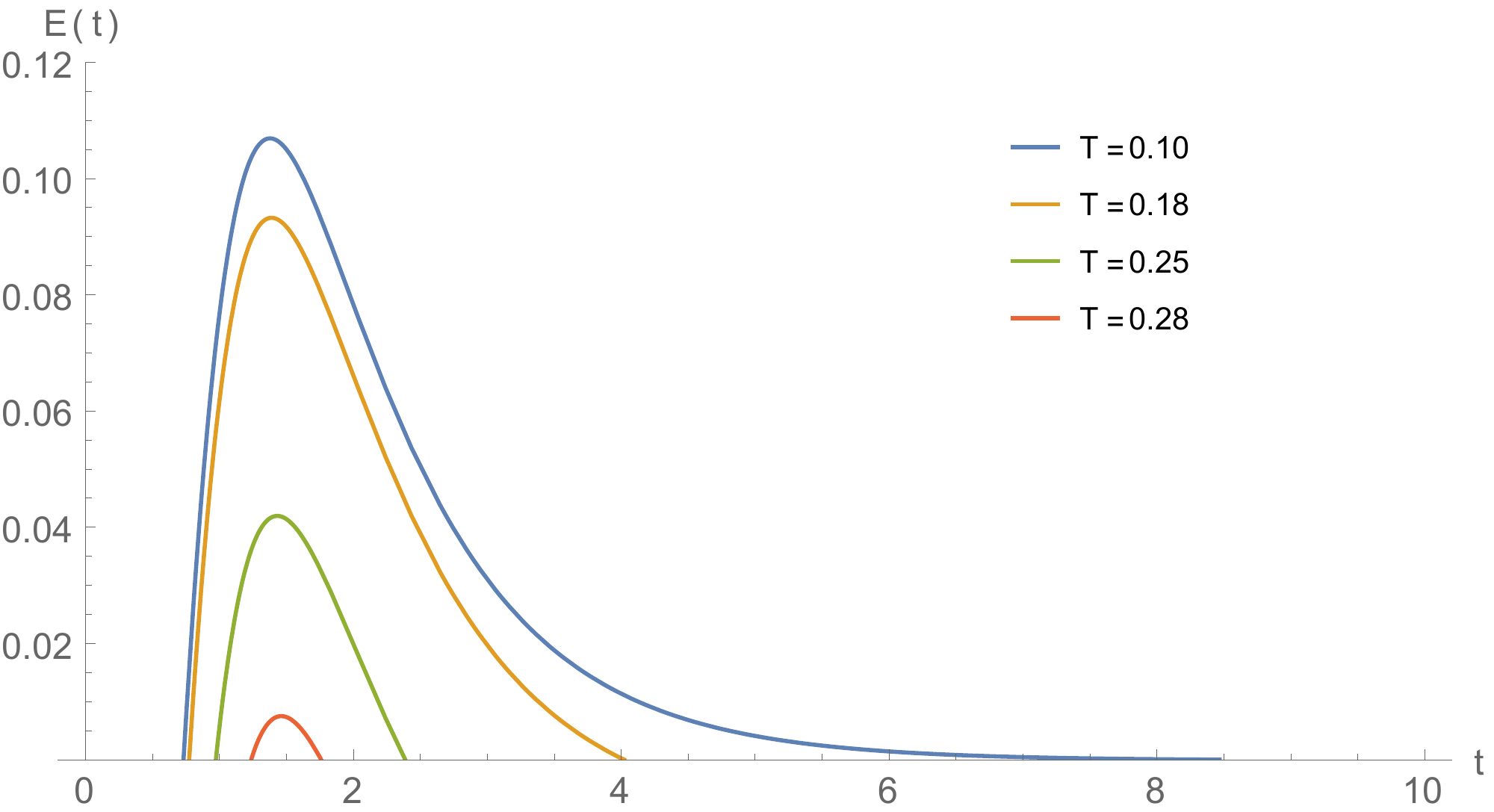}
\caption{\small Model 1: behaviour in time of the logarithmic negativity $E$ for different values of the temperature $T$, at fixed dissipative parameter $\gamma=1/2$ and squeezing $r_1=r_3=r=1$.}
\label{TEMP}
\end{figure}

Also the amount of squeezing plays an essential role; while a non-vanishing squeezing appears necessary to create quantum correlations, too much squeezing decreases the maximum value of $E$. Squeezing also influences the time at which it is
first generated. Further, for fixed $T$ and $\gamma$, there is a value of the squeezing parameter $r$ allowing for a maximal value of $E$. All this is explicitly shown in Fig.\ref{SQUEEZE}.

Finally, the effect of the temperature is displayed in Fig.\ref{TEMP}, for fixed dissipative and squeezing
parameters. One sees that
increasing the temperature, the maximum of the logarithmic negativity $E$ decreases, indicating that there exists a critical temperature $T_C$, above which no entanglement is possible.

The explanation of this result can be traced to the behaviour of the quantity $S$ appearing in the
separability criterion in (\ref{crit}). In Appendix F, this quantity has been explicitly computed both for the case
of a symmetrically squeezed initial state, see (\ref{critsas0}), and one-mode squeezed initial state, see (\ref{critsas}).
For large temperatures, the parameter $\epsilon$ becomes small, 
so that all terms but those proportional to $1/\epsilon^4$ can be neglected, obtaining in the two cases:
\begin{eqnarray*}
&&
S_{S}(t)\sim\frac{1}{16\epsilon^4}\left(1+8\sinh^2(r)\left( y_1(t)-y_1^2(t)\right)\right)\ ,\\
&&
S_A(t)\sim\frac{1}{16\epsilon^4}\left(1+4\sinh^2(r)\left(y_1(t)-y_1^2(t)\right)\right)\ ,
\end{eqnarray*}
where $y_1(t)$ is given in (\ref{y1}) of Appendix F. Notice that since $y_1(t)<1$ for $t>0$, these two
quantities are always positive; therefore, there must be a finite ``critical temperature'' $T_C$ 
beyond which entanglement is no longer present.

\begin{figure}[h!]
\center
\subfigure{\includegraphics[scale=0.35]{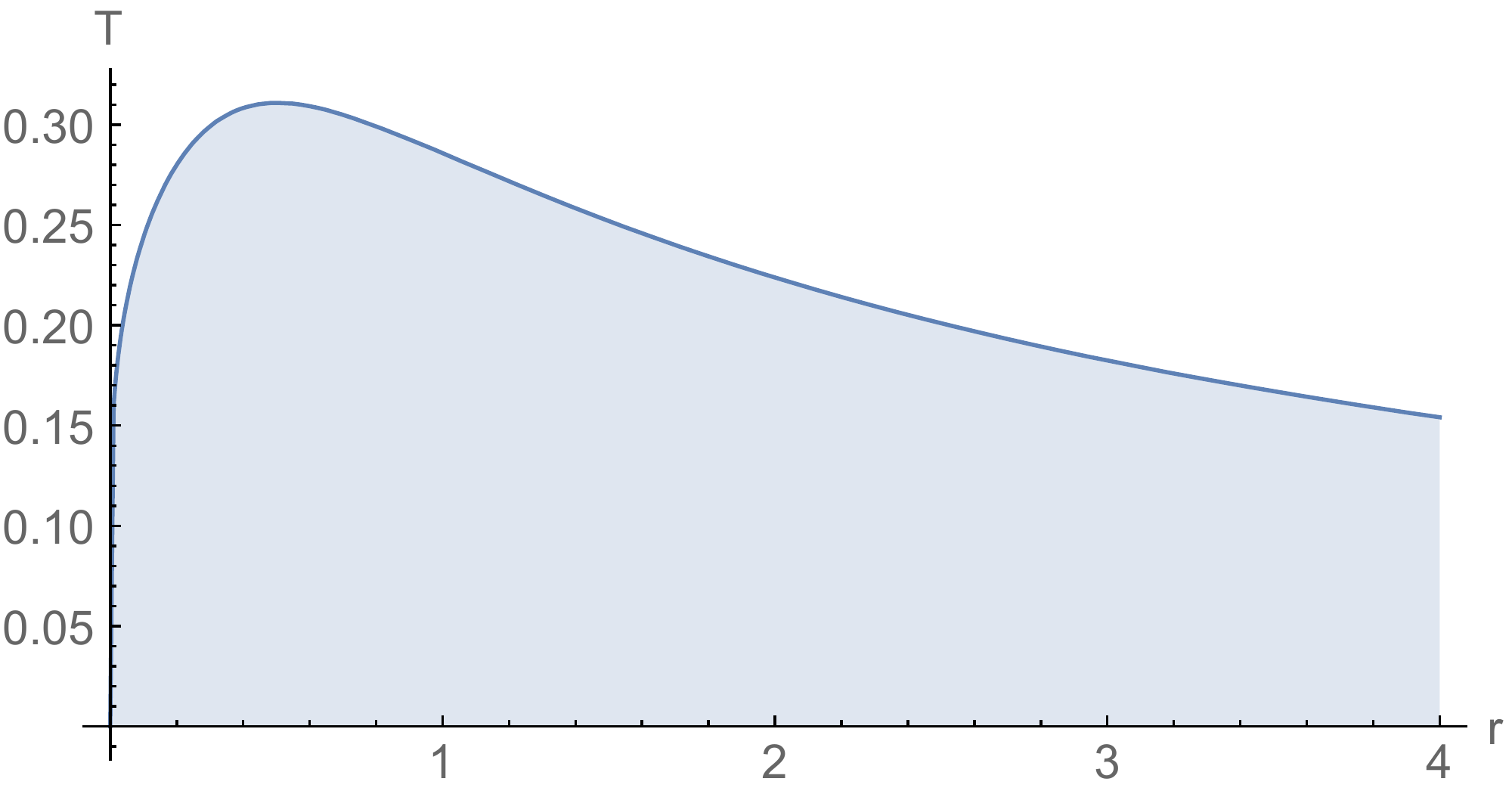}}\qquad
\subfigure{\includegraphics[scale=0.35]{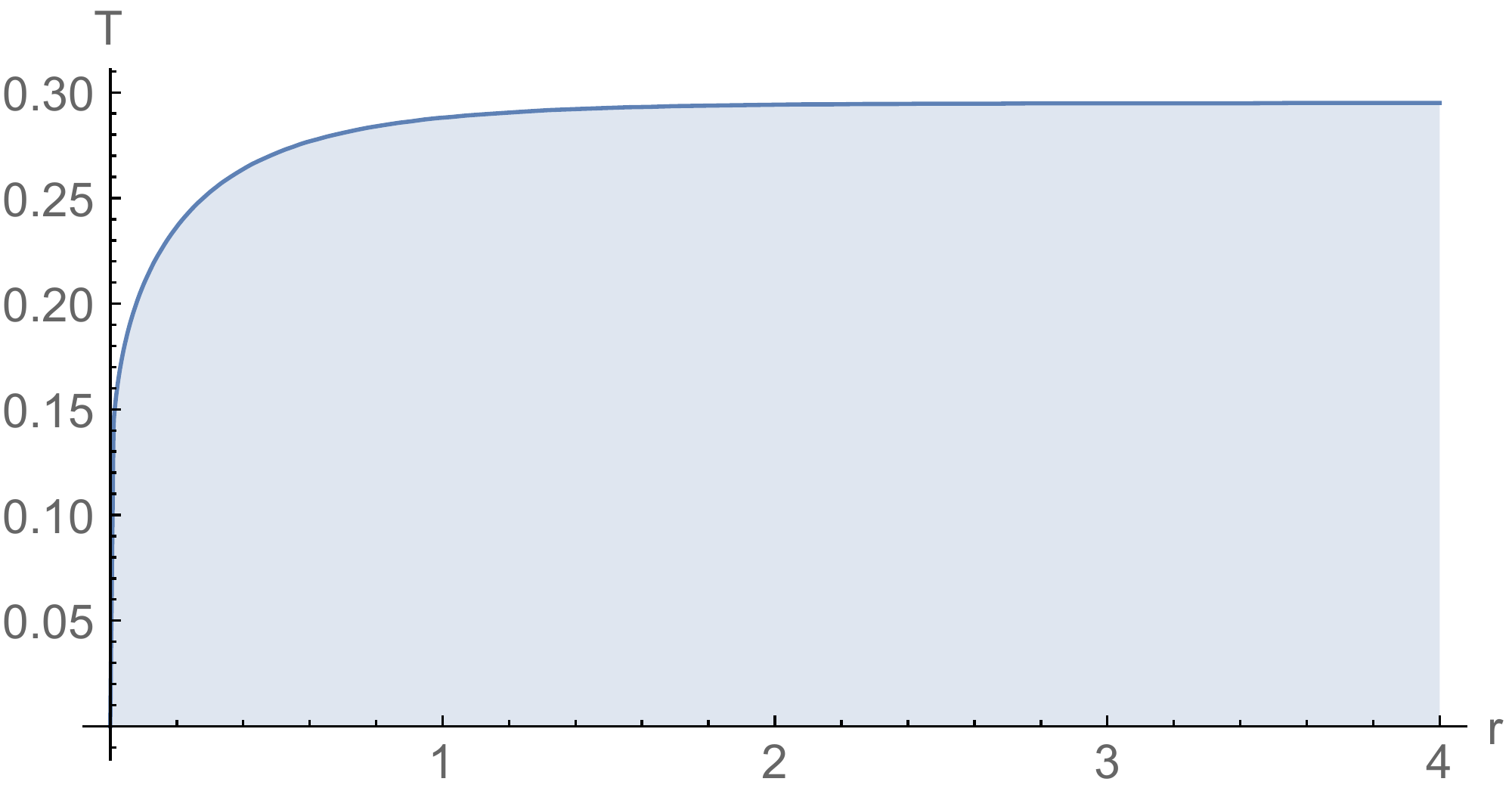}}
\caption{\small Model 1: entanglement phase diagrams for the symmetrically squeezed state $r=r_1=r_3$ (left)
and one-mode squeezed state $r=r_1$, $r_3=0$ (right), with $\gamma=1/2$;
the line separating the two regions gives the behaviour of the critical temperature $T_C$ as a function of $r$.}
\label{phase}
\end{figure}

This result is further illustrated by Fig.4, where the points in the $(r,T)$ plane 
with non-vanishing mesoscopic entanglement are highlighted. These figures show two regions,  
the dark ones associated with a non-vanishing maximal value of $E$, the brighter ones with
vanishing maximal value of $E$ and therefore no entanglement. The line separating the two regions determines the ``critical temperature'' $T_C$, above which entanglement among the two chains is not possible, 
as a function of the squeezing parameter; it is defined implicitly by the condition ${\rm max}\big(E(r,T)\big)=0$,
where the maximization is over all times.



\subsection{Entanglement sudden birth and sudden death}
\label{suddendeath}

The time behaviour of the logarithmic negativity $E$ reported in Fig.'s \ref{GAMMA},\ref{SQUEEZE},\ref{TEMP}
shows the phenomena of the
so-called ``sudden birth'' and ``sudden death'' of entanglement \cite{Eberly}, {\it i.e.} the sudden generation
of entanglement only after a finite time since the starting of the dynamics, 
and the abrupt vanishing of it at a later, finite time. These two effects can be analyzed in detail as function
of the temperature $T$ of the initial state.

Let us first consider the phenomenon of sudden death and accordingly look at the large $t$ behaviour of
evolved initial Gaussian state. As discussed before, the asymptotic state of the dynamics generated by
(\ref{LINDBOS0}) and (\ref{LINDBOS}) is thermal, with a reduced covariance in the modes $a_1$, $a_3$
given by (see Appendix F):
$$
\widetilde G_{red}^\infty\equiv\lim_{t\to\infty}\widetilde G_{red}(t)=\frac{1}{2\epsilon}\bold{1}_{4}\ .
$$
Positivity of the asymptotic state requires ({\it c.f.} (\ref{g-positivity})):

\be
\label{ineq}
\widetilde G_{red}^\infty+\frac{i}{2}\tilde\sigma\geq0\ ,\quad 
\tilde\sigma=-i\begin{pmatrix}
\sigma_3 & 0\\ 0 & \sigma_3
\end{pmatrix}\ ,
\ee
where $\tilde\sigma$ is the symplectic matrix in the reduced $a_1$, $a_3$ representation. This condition
assures also the positivity of the partially transposed state, since $\widetilde G_{red}^\infty$ is left invariant
by this transformation. In fact, the large time asymptotic limit of the 
lowest eigenvalue $\lambda_{min}(t)$ of the matrix $\displaystyle \widetilde G_{red}(t)+\frac{i}{2}\tilde\sigma$ is given by $\displaystyle\lambda_{min}^\infty=\frac{1-\epsilon}{2\epsilon}$, which is always
strictly positive, except at zero temperature ($\epsilon=1$) when it vanishes.
Therefore, when $T>0$, the bath generated entanglement must always vanish in finite times, since
$\lambda_{min}(t)$, from being negative, becomes strictly positive for $t\to\infty$.
Only at $T=0$ the created entanglement may vanish asymptotically.

In order to study the phenomenon of sudden birth of entanglement, one has to analyze the behaviour
of the logarithmic negativity $E$ in a right neighborhood of $t=0$. Let us consider first the case
of the symmetrically squeezed initial state. Using (\ref{critsas0}) in Appendix~F, one checks that
$$
\lim_{t\to{0^+}}S_{S}(t)=\frac{(1-\epsilon^2)^2}{16\epsilon^4}\ge0\ .
$$
This result already shows that only at zero temperature ($\epsilon=1$) there is the possibility of having 
generation of entanglement as soon as the dynamics starts. In fact, at $T=0$ one has:
\be
S_{S}^{T=0}(t)=\sinh^4(r)\Big(e^{-8t}-2e^{-6t}\cosh(2\gamma t)+e^{-4t}\Big)-e^{-4t}\sinh^2(2\gamma t)\sinh^2(r)\ .
\label{T0}
\ee
Since its first derivative with respect to $t$ vanishes at $t=0$, one needs to study the behaviour of its
second derivative:
$$
\frac{d^2}{dt^2}S_{S}^{T=0}(t)\Big|_{t=0}=8\big[\sinh^4(r)(1-\gamma^2)-\sinh^2(r)\gamma^2\big]\ .
$$
Since $S_{S}^{T=0}(t)=0$, there can be entanglement generation as soon as $t>0$ only if this quantity is negative,
{\it i.e.} only when  $\sinh^2(r)<\gamma^2/(1-\gamma^2)$. In the opposite case, as well as for
$T>0$, entanglement generation can occur only through the sudden creation phenomenon.

Similarly, in the case of a single mode squeezed initial state, $r_1=r$, $r_3=0$, from (\ref{critsas}) of Appendix F, 
we have:
$$
\lim_{t\to{0^+}}S_{A}(t)=\frac{(1-\epsilon^2)^2}{16\epsilon^4}\ge0\ .
$$
Therefore, also in this case, the system may become entangled as soon as $t>0$ only at zero temperature.
Indeed, one has
\be
S_{A}^{T=0}(t)=-\sinh^2(r)\frac{e^{-4t}\sinh^2(2\gamma t)}{16}\ ,
\ee
which is always negative, vanishing only at $t=0$, so that indeed entanglement is created as soon as $t>0$.
On the other hand, the phenomenon of sudden creation of entanglement always occur for $T>0$.

Concerning the behaviour of the critical temperature $T_C$ for large squeezing parameter~$r$, 
the graph on the left part of Fig.4 suggests a vanishing value for $T_C$, 
while that on the right a constant value, independent from $r$.
Indeed, in the first case, recalling the result (\ref{T0}) above, one sees that for $T=0$ and
$\gamma=1/2$, {\it i.e.} the largest admissible value for the dissipative parameter $\gamma$,
one gets for large $r$:
\be
S_{S}^{T=0}(t) \simeq e^{4(r-t)}\Big(1-e^{-3t}\Big) \Big(1-e^{-t}\Big)\ ,
\ee
which is always non negative. This means that in the limit $r\to\infty$, no entanglement is created at any time
when $T=0$. The critical temperature $T_C$ must therefore approach zero in the same limit.

Instead, in the other case one finds that for large squeezing parameter:
\be
S_{A}(t)\simeq e^{2r}\, g(t,T)\ ,
\ee
where $g(t,T)$ is the function multiplying $\sinh^2(r)$ in (\ref{critsas}).
One can show that this function takes negative values for some $t$, {\it i.e.} entanglement is generated,
only for temperatures below a certain fixed value $\bar T$, which can be computed only numerically.
As shown by the graph in the right part of Fig.4, 
the critical temperature is thus always non vanishing, reaching the asymptotic value
$\bar T$ for large squeezing.

\subsection{Entanglement Dynamics: Model 2}

While in Model 1 the microscopic dynamics is generated by a Lindblad term involving contributions 
from both chains and also different sites, the dissipative generator (\ref{modD2a}) of Model 2 contains 
only single chain Lindblad operators, and further without any statistical coupling between different sites.

\begin{figure}[hbtp]
\center\includegraphics[scale=0.57]{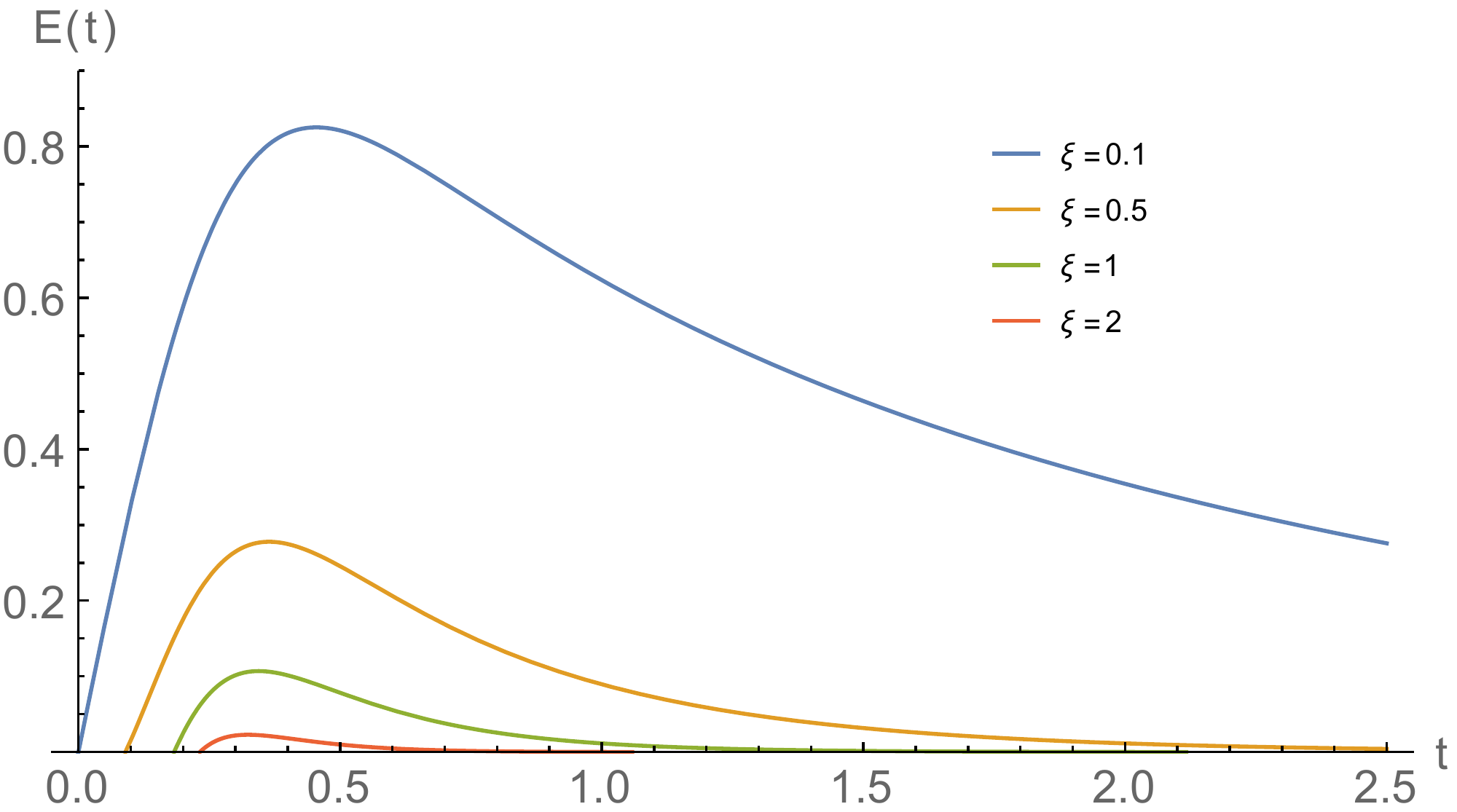}
\caption{\small Model 2: behaviour in time of the logarithmic negativity $E$ for different values of the dissipative parameter $\xi$, at fixed temperature $T=0.1$ and squeezing $r=r_1=r_3=1$.}
\label{UNDYN1}
\end{figure}

\begin{figure}[hbtp]
\center\includegraphics[scale=0.57]{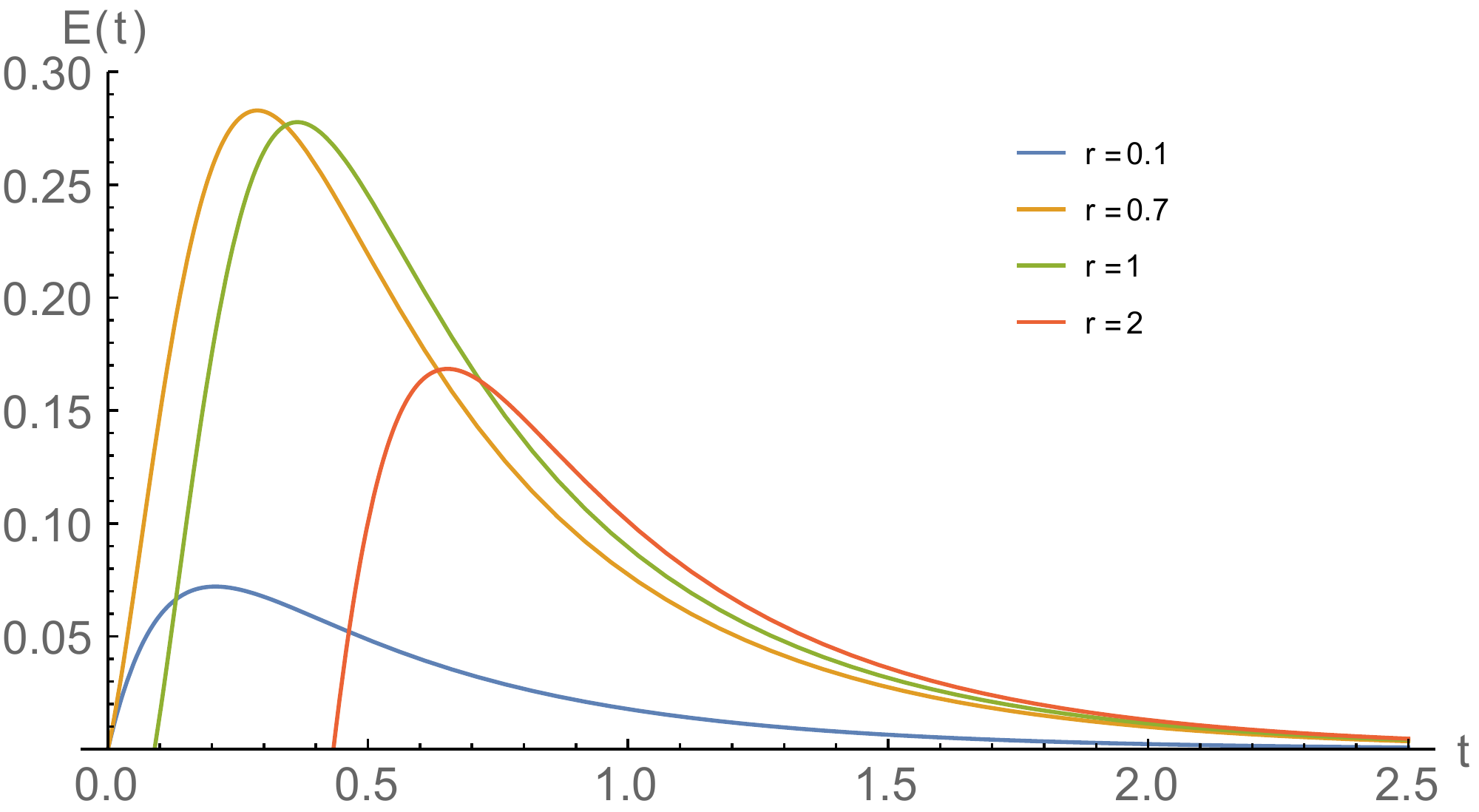}
\caption{\small Model 2: behaviour in time of the logarithmic negativity $E$ for different values of the temperature $T$, for $\xi=1/2$ and squeezing $r=r_1=r_3=1$.}
\label{UNDYN2}
\end{figure}

This model is the many-body generalization of a two-qubit system studied in \cite{Benatti3}, 
where entanglement between the two qubits was shown to occur through a purely mixing mechanism
induced by the presence of off-diagonal contributions of the form 
$(\sigma_\mu\otimes\bold{1})\,X\,(\bold{1}\otimes\sigma_\nu)$ in the dissipative generator.
In fact, the entangling power of the model depends entirely on the strength 
of the statistical coupling of the otherwise independent qubits. 

Similarly, in Model 2, mesoscopic entanglement can be dissipatively generated among the two chains 
in the large $N$ limit. Unfortunately, in this case
manageable analytic expressions for the logarithmic negativity are not available, so that the behaviour
of $E$ can be studied only numerically. For simplicity, in the following discussion
we have further set $\eta=1$, since this parameter can be reabsorbed into a redefinition
of the temperature. 

As in Model 1, some initial squeezing is necessary in order for the dynamics
to generate entanglement; further, the amount of created entanglement decreases as the
dissipative parameter $\xi$ entering the Kossakowski matrix (\ref{Kossmat2}) gets larger.
This is explicitly shown by the behaviour of the graphs in Fig.5 and Fig.6,
where the phenomena of sudden birth and sudden death of entanglement are also visible as in Model~1.
These graphs (and the ones below) refer to the choice of a symmetrically squeezed initial state; 
similar results hold also in the case of one-mode squeezed initial states.

The dependence on the initial state temperature $T$ is instead depicted in Fig.7, for fixed $\xi$
and squeezing parameter. Also in this case, one sees that
increasing the temperature, the maximum of the logarithmic negativity $E$ decreases, 
indicating that there exists a critical temperature $T_C$, above which no entanglement is possible;
the behaviour of $T_C$ as function of the squeezing parameter $r$ is given
by phase diagrams very similar to those in Fig.4.

\begin{figure}[hbtp]
\center\includegraphics[scale=0.6]{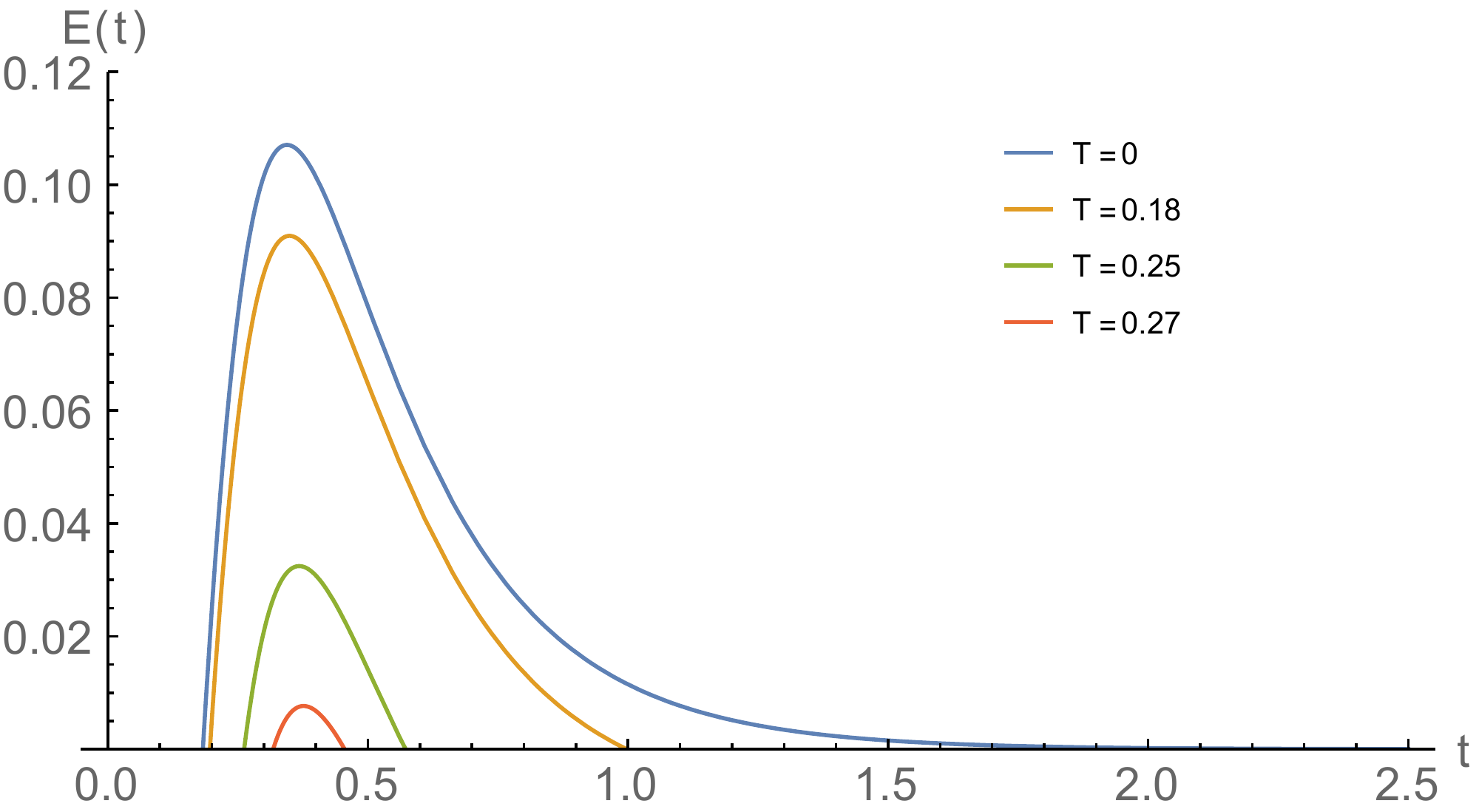}
\caption{\small Model 2: behaviour in time of the logarithmic negativity $E$ for different values of the temperature $T$, for $\xi=1$ and squeezing $r=r_1=r_3=1$.}
\end{figure}

However, unlike in Model 1, asymptotic entanglement is now possible.
Indeed, setting the parameter $\xi=0$ and decreasing the initial temperature $T$, one sees that the two chains not only get mesoscopically entangled at finite time, but remarkably, the generated mesoscopic entanglement persists for longer times.
This behaviour is clearly shown by the plots in Fig.8, where the time behaviour of the logarithmic negativity is reported
for a symmetrically squeezed initial state: in the case of zero temperature, one sees
that the generated mesoscopic entanglement persists for arbitrary long times.

\begin{figure}[hbtp]
\center
\includegraphics[scale=0.5]{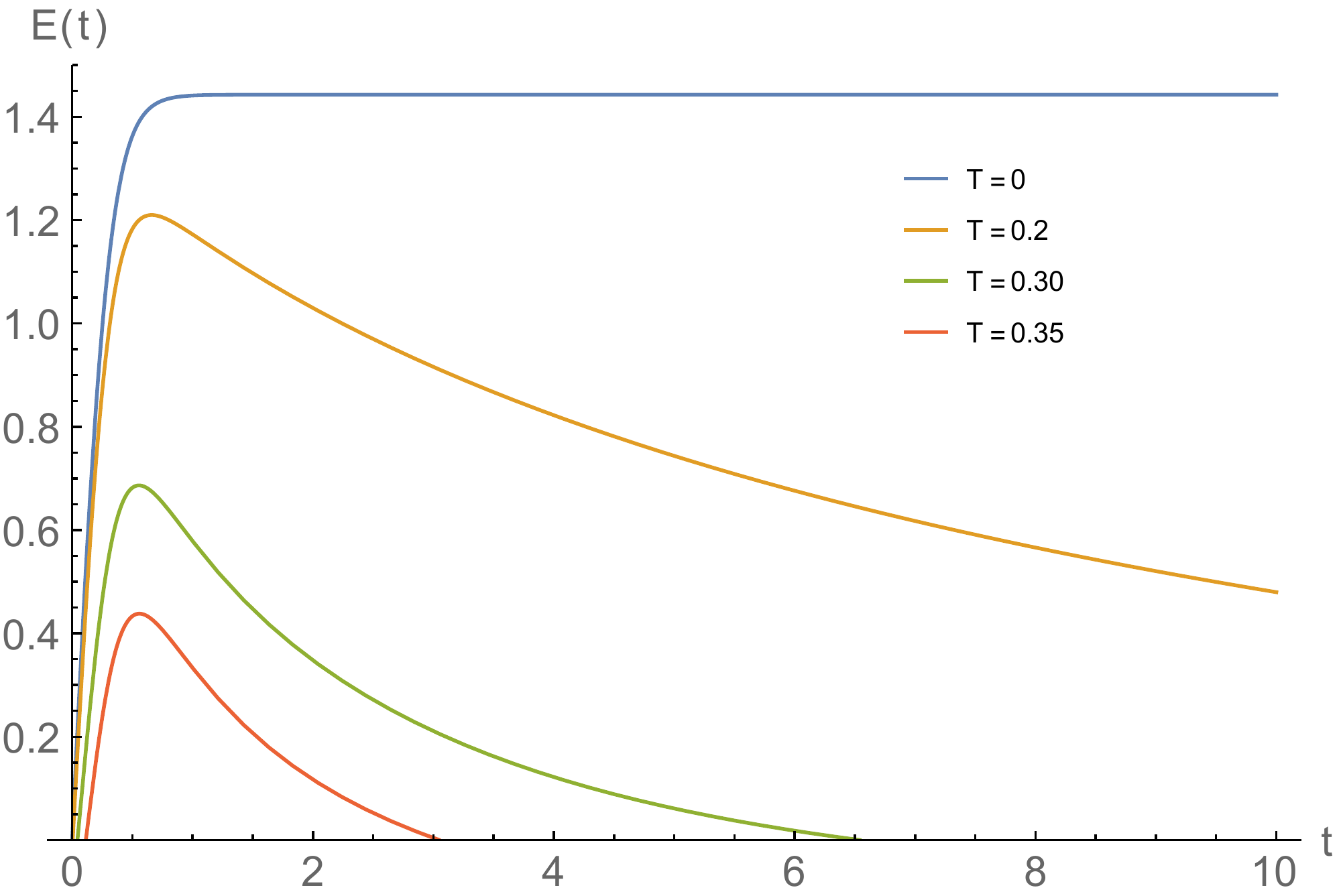}
\caption{\small Model 2: behaviour in time of the logarithmic negativity $E$ for different values of the temperature $T$, for $\xi=0$ and squeezing $r=r_1=r_3=1$.}
\end{figure}


\section{Outlook}

When dealing with many-body systems, {\it i.e.} systems with a very large number $N$
of elementary constituents, accessible observables are global, collective ones,
involving the degrees of freedom of all its parts. Typical examples are the mean-field
observables, defined as the algebraic mean of single particle observables,
as in the case of mean magnetization for spin systems. These quantities scale as $1/N$
and can be seen to behave as classical observables in the thermodynamical limit,
{\it i.e.} as the number of constituents becomes very large.

Similarly, fluctuation operators, defined in analogy with classical stochastic theory
as deviations from the mean, form another class of collective operators; however, they
scale as $1/\sqrt{N}$, and, because of this, they retain some quantum properties as $N$ increases.
Indeed, irrespective of the nature of the microscopic many-body system, 
the algebra they form turns out to be non-commutative and always of
bosonic type: they can be used to probe at the mesoscopic level 
the quantum properties of the system.

We have studied the quantum dynamics of the fluctuation operators in a many-body system
composed by two, non-interacting spin-1/2 chains, immersed in a common, weakly coupled external environment.
The system behaves as an open quantum systems, so that noise and dissipation
are expected to occur. Nevertheless, even in the thermodynamical limit,
these phenomena are not able to spoil the quantum character of suitable chosen, two-chain fluctuation operators.
Actually, despite the decohering and mixing-enhancing effects usually induced by the presence of the environment,
the two chains can get entangled by the emergent, open mesoscopic dynamics, through a purely
dissipative mechanism. 

We have studied in details the fate of the generated entanglement in the course of time 
and of its dependence on the strength of the coupling with the environment 
and on the temperature of the starting microscopic many-body state:
despite its inevitable dissipative action, the environment can nevertheless sustain non vanishing 
quantum correlations among the two chains even for very large times, provided the temperature
of the initial state is sufficiently low.

The mechanism of environment induced entanglement generation has been previously
known only for systems involving few qubits or oscillator modes;
our discussion shows that this phenomenon is at work also in the case of many-body systems
provided suitable mesoscopic observables are considered.
This result is general 
and can find direct applications in all instances
where mesoscopic, coherent quantum behaviours are expected to emerge, 
{\it e.g.} in experiments involving
spin-like and optomechanical systems, or ultra-cold gases trapped in optical lattices:
the possibility of entangling these many-body systems through a purely mixing mechanism
may reinforce their use for the actual realization of quantum information and communication protocols.

\section{Appendix A}

The relation \eqref{macro1} can be proved as follows: because of definition \eqref{macro}, it is equivalent to 
$$
\lim_{N\to\infty}\omega\bigg(a^\dag\,\big(X_N-\omega(x)\big)\big(Y_N-\omega(y)\big)\,b\bigg)=0
$$
for all $a,b\in\ca$. Set
$$
\widetilde{X}_N=\frac{1}{N}\sum_{k=0}^{N-1}\underbrace{\bigg(x^{(k)}-\omega(x)\bigg)}_{\widetilde{x}^{(k)}}\ ,\quad \widetilde{Y}_N=\frac{1}{N}\sum_{k=0}^{N-1}\underbrace{\bigg(y^{(k)}-\omega(y)\bigg)}_{\widetilde{y}^{(k)}}\ ,
$$
so that $\omega(\widetilde{x}^{(k)})=\omega(\widetilde{x})=0$, $\omega\left(\widetilde{X}_N\right)=0$ and similarly for $\widetilde{y}$, $\widetilde{Y}_N$.
Then, as shown in the main text for a single variable, the quasi-locality of $a,b$ and the clustering properties of the state yield:
$$
\lim_{N\to\infty}\omega\bigg(a^\dag\,\big(X_N-\omega(x)\big)\big(Y_N-\omega(y)\big)\,b\bigg)=\omega(a^\dag b)\lim_{N\to\infty}\omega\bigg(\widetilde{X}_N\widetilde{Y}_N\bigg)\ .
$$
Further, one can write:
$$
\omega\bigg(\widetilde{X}_N\widetilde{Y}_N\bigg)=\frac{1}{N^2}\sum_{k=0}^{N-1}
\omega\bigg(\widetilde{x}^{(k)}\widetilde{y}^{(k)}\bigg)\,+\,
\frac{1}{N^2}\sum_{k\neq\ell=0}^{N-1}
\omega\bigg(\widetilde{x}^{(k)}\widetilde{y}^{(\ell)}\bigg)\ .
$$
Since $\omega$ is translation-invariant, the first term vanishes as $\omega\big(\widetilde{x}\widetilde{y}\big)/N$ when $N\to\infty$.
Moreover, thank to the clustering property (\ref{clustates}), for any small $\epsilon>0$, 
there exists an integer $N_\epsilon$,
such that for $|k-\ell|^2>N_\epsilon$ one has:
$$
\left|\omega(\big(\widetilde{x}^{(k)}\widetilde{y}^{(\ell)}\big)-\omega(\widetilde{x})\,\omega(\widetilde{y})\right|=
\left|\omega(\big(\widetilde{x}^{(k)}\widetilde{y}^{(\ell)}\big)\right|\leq\epsilon\ .
$$
Then, using this result, one can finally write:
\beann
\left|\frac{1}{N^2}\sum_{k\neq\ell=0}^{N-1}
\omega\bigg(\widetilde{x}^{(k)}\widetilde{y}^{(\ell)}\bigg)\right|&\leq&
\frac{1}{N^2}\sum_{0<|k-\ell|\leq N_\epsilon}\left|\omega\bigg(\widetilde{x}^{(k)}\widetilde{y}^{(\ell)}\bigg)\right|\\
&+&\frac{1}{N^2}\sum_{|k-\ell|> N_\epsilon}\left|\omega\bigg(\widetilde{x}^{(k)}\widetilde{y}^{(\ell)}\bigg)\right|\\
&\leq&4\frac{2N_\epsilon+1}{N}\,\|x\|\,\|y\|\,+\,\epsilon\ ,
\eeann
so that, in the large $N$ limit, the relation \eqref{macro1} is indeed satisfied.
Notice that \eqref{macro1} entails that, in the GNS representation,
\bea
\nonumber
&&
\lim_{N\to\infty}\omega\bigg(a^\dag \big(X-\omega(x)\big)^\dag\,\big(X-\omega(x)\big)\,a\bigg)=\\
\label{macro2}
&&\hskip 2cm
=\lim_{N\to\infty}\left\|\pi_\omega\big(X-\omega(x)\big)\vert\Psi_a\rangle\right\|^2=0\ ,
\eea
for all $a\in\ca$. Namely, mean-field spin observables converge to their expectations with respect to $\omega$ in the strong operator topology on the GNS Hilbert space $\mathbb{H}_\omega$.

\section{Appendix B}

In this Appendix we collect the explicit expressions of various matrices that have been used in the main text;
these results are obtained from the corresponding multiple tensor product expressions by multiplying each matrix by the entries of the matrix which precedes it.

The correlation matrix $C^{(\beta)}$ in \eqref{modcorrmat2} then reads:
\be
C^{(\beta)}=\left(\bold{1}-\epsilon\,\sigma_1\right)\otimes\bold{1}\otimes\left(\bold{1}+
\epsilon\,\sigma_2\right)\\
=\begin{pmatrix}
\phantom{-}{\cal C}_\epsilon &-\epsilon\, {\cal C}_\epsilon\\
-\epsilon\, {\cal C}_\epsilon & \phantom{-}{\cal C}_\epsilon
\end{pmatrix}\ ,
\label{COMM2}
\ee
with
$$
{\cal C}_\epsilon=
\begin{pmatrix}
1&-i\epsilon&0&0\\
i\epsilon&1&0&0\\
0&0&1&-i\epsilon\\
0&0&i\epsilon&1
\end{pmatrix}\ .
$$
The symplectic matrix in \eqref{COMM1} and it inverse in \eqref{invCOMM1} are represented by:
\bea
&&\sigma^{(\beta)}=2i\epsilon(\epsilon\sigma_1-\bold{1})\otimes\bold{1}\otimes\sigma_2=
2\epsilon\begin{pmatrix}
\phantom{-}{\cal S} & -\epsilon\, {\cal S}\\
 -\epsilon\, {\cal S} & \phantom{-}{\cal S}
\end{pmatrix}\ ,\\
&&\big(\sigma^{(\beta)}\big)^{-1}=\frac{i}{2\epsilon\,c^2}(\bold{1}+\epsilon\sigma_1)\otimes\bold{1}\otimes\sigma_2
=-\frac{1}{2\epsilon\,c^2}
\begin{pmatrix}
{\cal S} & \epsilon\, {\cal S}\\
 \epsilon\, {\cal S} & {\cal S}
\end{pmatrix}\ ,
\eea
where
\be
{\cal S}=
\begin{pmatrix}
0&-1&0&0\\
1&0&0&0\\
0&0&0&-1\\
0&0&1&0
\end{pmatrix}\ ,
\label{S}
\ee
and $c=\sqrt{1-\epsilon^2}$, while the covariance matrix in \eqref{covmat2} is given by:
\be
\Sigma^{(\beta)}=(1-\epsilon\sigma_1)\otimes {\bf 1}\otimes{\bf 1}=
\begin{pmatrix}
\phantom{-}{\bf 1}_4 & -\epsilon\, {\bf 1}_4\\
-\epsilon\, {\bf 1}_4 & \phantom{-}{\bf 1}_4
\end{pmatrix}\ ,
\ee
with ${\bf 1}_4$ the unit matrix in four dimensions.
Furthermore, the matrix $\mathcal{M}$  in \eqref{matrix1} reads
\be
\label{matrix1.1}
\mathcal{M}=\sqrt{\epsilon}\,
\begin{pmatrix}
{\cal K} & {\cal K}^*\\
{\cal Q}^* & {\cal Q}
\end{pmatrix}\ ,
\ee
with
$$
{\cal K}=
\begin{pmatrix}
1&0&0&0\\
i&0&0&0\\
0&0&1&0\\
0&0&i&0\\
\end{pmatrix}\ ,\qquad
{\cal Q}=
\begin{pmatrix}
-\epsilon&c&0&0\\
i\epsilon&-ic&0&0\\
0&0&-\epsilon&c\\
0&0&i\epsilon&-ic\\
\end{pmatrix}\ ,
$$
while its inverse is given by
\be
\label{matrix2.1}
\mathcal{M}^{-1}=\frac{1}{2c\sqrt{\epsilon}}\,
\begin{pmatrix}
{\cal W} & {\cal Z}^*\\
{\cal W}^* & {\cal Z}
\end{pmatrix}\ ,
\ee
with
$$
{\cal W}=
\begin{pmatrix}
c&-ic&0&0\\
\epsilon&-i\epsilon&0&0\\
0&0&c&-ic\\
0&0&\epsilon&-i\epsilon\\
\end{pmatrix}\ ,\qquad
{\cal Z}=
\begin{pmatrix}
0&0&0&0\\
1&i&0&0\\
0&0&0&0\\
0&0&1&i\\
\end{pmatrix}\ .
$$
Finally, the $\mathcal{P}$ in \eqref{lemm2} is explicitly given by
\be
\label{lastmat}
\mathcal{P}=
\begin{pmatrix}
{\cal P}_{11} & {\cal P}_{12}\\
{\cal P}_{21} & {\cal P}_{22}\\
\end{pmatrix}\ ,
\ee
with
$$
{\cal P}_{11}=
\begin{pmatrix}
1&0&0&0\\
0&0&1&0\\
0&0&0&0\\
0&0&0&0\\
\end{pmatrix}\ ,\qquad
{\cal P}_{12}=
\begin{pmatrix}
0&0&0&0\\
0&0&0&0\\
1&0&0&0\\
0&0&1&0\\
\end{pmatrix}\ ,
$$
$$
{\cal P}_{21}=
\begin{pmatrix}
0&1&0&0\\
0&0&0&1\\
0&0&0&0\\
0&0&0&0\\
\end{pmatrix}\ ,\qquad
{\cal P}_{22}=
\begin{pmatrix}
0&0&0&0\\
0&0&0&0\\
0&1&0&0\\
0&0&0&1
\end{pmatrix}\ .
$$

\section{Appendix C}

We shall prove that, given a time-dependent Hermitean matrix $M_t$ and its exponential 
$\displaystyle N_t={\rm e}^{iM_t}$, then
\be
\label{appb1}
\dot{N_t}:=\frac{{\rm d}N_t}{{\rm d}t}=O_t\,N_t\ ,\quad O_t:=\sum_{k=1}^\infty\frac{i^k}{k!}\KK^{k-1}_{M_t}(\dot{M}_t)\ ,
\ee
where 
$$
\KK^n_A(B):=\Big[A\,,\,\KK^{n-1}_A(B)\Big]\ ,\quad \KK^0_A(B)=B\ .
$$

Indeed, given matrices $A$ and $B$, one has
$$
{\rm e}^{iA}\,B\,{\rm e}^{-iA}=\sum_{n=0}^\infty\frac{i^n}{n!}\underbrace{\Big[A\,\Big[A\,,\cdots\Big[}_{n\ times}B\,,\,A\Big]\cdots\Big]\Big]=\sum_{n=0}^\infty\frac{i^n}{n!}\,\KK^n_A(B)\ .
$$
Then, $[N_t\,,\,M_t]=0$ and $N_tN_t^\dag=N_t^\dag N_t=1$ imply $N_tM_tN^\dag_t=M_t$ and
$\displaystyle \dot{N}_t\,N^\dag_t=-N_t\,\dot{N}^\dag_t$. Therefore,
$$
N_t\,\dot{M}_t\,N^\dag_t\,-\,\dot{M}_t\,=-\,\dot{N}_t\,M_t\,N^\dag_t\,-\,N_t\,M_t\,\dot{N}^\dag_t=\Big[M_t\,,\,\dot{N}_t\Big]\,N_t^\dag\ .
$$
Furthermore, since, for $n\geq 1$,
$\displaystyle \mathbb{K}_A^n[B]=\Big[A\,,\,\mathbb{K}^{n-1}_A[B]\Big]$, it follows that
$$
\hskip -.5cm
N_t\,\dot{M}_t\,N^\dag_t-\dot{M}_t=\sum_{n=1}^\infty\frac{i^n}{n!}\mathbb{K}^n_{M_t}[\dot{M}_t]=\Big[M_t\,,\,O_t\Big]=\Big[M_t\,,\,\dot{N}_t\Big]\,N_t^\dag\ ,
$$
where $O_t=\sum_{k=1}^\infty\frac{i^k}{k!}\mathbb{K}_{M_t}^{k-1}[\dot{M}_t]$. 
Then, using again that $[N_t\,,\,M_t]=0$, one obtains
$$
\Big[M_t\,,\,O_t\,N_t\Big]=\Big[M_t\,,\,\dot{N}_t\Big]\ .
$$
In order to show that $\dot{N}_t=O_tN_t$, consider the orthogonal eigenvectors $\vert m_a(t)\rangle$ of $M_t$ with eigenvalues $m_a(t)$. Then, if $m_a(t)\neq m_b(t)$, the previous equality yields 
$$
\langle m_a(t)\vert O_tN_t\vert m_b(t)\rangle=\langle m_a(t)\vert\dot{N}_t\vert m_b(t)\rangle\ .
$$
On the other hand if $\vert m_a(t)\rangle$ and $\vert m_b(t)\rangle$ correspond to a same (real) eigenvalue $m(t)$, then one uses that
$$
0=\frac{{\rm d}}{{\rm d}t}\Big(\langle m_a(t)\vert m_b(t)\rangle\Big)=\langle \dot{m}_a(t)\vert m_b(t)\rangle\,+\,
\langle m_a(t)\vert\dot{m}_b(t)\rangle\ ,
$$
to deduce that also in such a case
\begin{eqnarray*}
\langle m_a(t)\vert O_t\,N_t\vert m_b(t)\rangle&=&i\,\langle m_a(t)\vert \dot{M}_t\vert m_b(t)\rangle\, {\rm e}^{im(t)}\, 
\delta_{ab}
=i\dot{m}(t)\,{\rm e}^{im(t)}\,\delta_{ab}\\ 
&=&\langle m_a(t)\vert\dot{N}_t\vert m_b(t)\rangle\ .
\end{eqnarray*}

\section{Appendix D}

In Model 1, the dynamics is generated by a Lindblad operator $\LL_N[X]=\HH_N[X]+\DD_N[X]$,
with hamiltonian part $\HH_N$ as in (\ref{mod1H}) and dissipative part $\DD_N$ given 
by~\eqref{mod1-dissip} with Kraus operators as in~\eqref{Krops}. 
When acting on the self-adjoint element $x_i^{(k)}\in\chi$ at site $k$, it reduces to:
\beann
\LL_N\left[x_i^{(k)}\right]&=&i\frac{\eta}{2}\left[\sigma_3^{(k)}\otimes\bold{1}+\bold{1}\otimes \sigma_3^{(k)}\,,\,x_i^{(k)}\right]\\
&+&\frac{J_0}{2}\sum_{\mu,\nu=1}^4 D_{\mu\nu}\left[\left[v_\mu^{(k)}\,,\,x_i^{(k)}\right]\,,\,(v^\dag_{\nu})^{(k)}\right]\ .
\eeann
One can recast the first term as:
$$
i\frac{\eta}{2}\left[\sigma_3^{(k)}\otimes\bold{1}+\bold{1}\otimes \sigma_3^{(k)}\,,\,x_i^{(k)}\right]=\sum_{j=1}^8\mathcal{H}_{ij}\,x_j^{(k)}\ ,\quad
\mathcal{H}=-i\eta\begin{pmatrix}
\sigma_2&0&0&0\cr
0&\sigma_2&0&0\cr
0&0&\sigma_2&0\cr
0&0&0&\sigma_2
\end{pmatrix}\ .
$$
Further, let $\left[v_\mu^{(k)}\,,\,x_i^{(k)}\right]=\sum_{j=1}^8\mathcal{V}_\mu^{ij}x_j^{(k)}$; then, the dissipative term reads
$$
\DD_N\left[x_i^{(k)}\right]=\sum_{k=1}^8\mathcal{D}_{ik}\,x_k^{(k)}\ ,\quad
\mathcal{D}_{ik}=J_0\sum_{\mu,\nu=1}^4\frac{D_{\mu\nu}}{2}(\mathcal{V}_\mu \mathcal{V}^*_\nu)^{ik}\ .
$$
The four $8\times 8$ matrices $\mathcal{V}_\mu$ explicitly read
\begin{eqnarray*}
\mathcal{V}_1&=&\frac{1}{2}\begin{pmatrix}
0&0&0&\bold{1}+\sigma_2\cr
0&0&\sigma_2-\bold{1}&0\cr
0&\bold{1}+\sigma_2&0&0\cr
\sigma_2-\bold{1}&0&0&0
\end{pmatrix}=
-\mathcal{V}^*_2\\
\mathcal{V}_3&=&-\begin{pmatrix}
\sigma_2&0&0&0\cr
0&0&0&0\cr
0&0&\sigma_2&0\cr
0&0&0&0
\end{pmatrix}\ ,\quad 
\mathcal{V}_4=-\begin{pmatrix}
0&0&0&0\cr
0&\sigma_2&0&0\cr
0&0&0&0\cr
0&0&0&\sigma_2
\end{pmatrix}\ .
\end{eqnarray*}
In order to make computations easier, it proves convenient to write these matrices as (sums of) $3$-fold tensor products of Pauli matrices:
\begin{eqnarray*}
\mathcal{V}_1&=&\frac{1}{2}\sigma_1\otimes\left(i\,\sigma_2\otimes\bold{1}+\sigma_1\otimes\sigma_2\right)=-\mathcal{V}^*_2\\
\mathcal{V}_3&=&-\bold{1}\otimes\left(\bold{1}+\sigma_3\right)\otimes\sigma_2\ ,\quad
\mathcal{V}_4=\bold{1}\otimes\left(\sigma_3-\bold{1}\right)\otimes\sigma_2\ .
\end{eqnarray*}
Similarly, $\mathcal{H}=-i\eta\bold{1}\otimes\bold{1}\otimes\sigma_2$, whence
$$
\mathcal{L}\equiv\mathcal{H}+\mathcal{D}=-i\eta\bold{1}\otimes\bold{1}\otimes\sigma_2-J_0\Big(\delta-\gamma\sigma_1\otimes\sigma_1\otimes\bold{1}\Big)\ .
$$
Explicitly, one has:
\be
\mathcal{H}=\eta
\begin{pmatrix}
{\cal S} & {\bf 0}_4\\
{\bf 0}_4 & {\cal S}\\
\end{pmatrix}\ ,\qquad
\mathcal{D}=
J_0\,\begin{pmatrix}
-\delta {\bf 1}_4 &\Gamma\\
\Gamma & -\delta {\bf 1}_4\\
\end{pmatrix}
\ee
where ${\cal S}$ is as in (\ref{S}) and ${\bf 0}_4$ is the null matrix in four dimensions, while
$$
\Gamma=\gamma\,
\begin{pmatrix}
0&0&1&0\\
0&0&0&1\\
1&0&0&0\\
0&1&0&0\\
\end{pmatrix}\ .
$$
The expressions of the $8\times 8$ matrices $H^{(1)}$ and $D^{(1)}$ in \eqref{Flind1a} and \eqref{Flind1b} that define the action of the mesoscopic dissipative generator in \eqref{fluctLind1}-\eqref{fluctLind2} can then be readily computed by expressing also the matrices $C^{(\beta)}$ and $(\sigma^{(\beta)})^{-1}$ as (sums of) $3$-fold tensor products of Pauli matrices, as given in \eqref{modcorrmat} and \eqref{invCOMM1}, respectively:
$$
C^{(\beta)}=\left(\bold{1}-\epsilon\,\sigma_1\right)\otimes\bold{1}\otimes\left(\bold{1}+
\epsilon\,\sigma_2\right)\ ,\quad
(\sigma^{(\beta)})^{-1}=\frac{1}{2c^2\epsilon}\,{\left(\bold{1}+\epsilon\sigma_1\right)\otimes\bold{1}\otimes\,i\sigma_2}\ ,
$$
where $\epsilon=\tanh(\eta\beta/2)$, $c^2=1-\epsilon^2$.
Then, one computes
\beann
&&\hskip -.8cm
\mathcal{L}\,C^{(\beta)}\,-\,C^{(\beta)}\,\mathcal{L}^{tr}=
-2i\eta\left(\bold{1}-\epsilon\sigma_1\right)\otimes\bold{1}\otimes\left(\epsilon+\sigma_2\right)\\
&&\hskip-.8cm
\mathcal{L}\,C^{(\beta)}\,+\,C^{(\beta)}\,\mathcal{L}^{tr}=-2J_0\Big(\delta\left(\bold{1}-\epsilon\sigma_1\right)\otimes\bold{1}\,-\,\gamma\left(\sigma_1-\epsilon\right)\otimes\sigma_1\Big)\otimes\left(\bold{1}+\epsilon\sigma_2\right)\ .
\eeann
From \eqref{Flind1a}, {\it i.e.}
$$
H^{(1)}=-i(\sigma^{(\beta)})^{-1}\left(\mathcal{L}C^{(\beta)}\,-\,C^{(\beta)}\mathcal{L}^{tr}\right)\,(\sigma^{(\beta)})^{-1}\ ,
$$
one derives that the Hamiltonian coupling among the $F(x_i)$ is given by
\be
\label{H1F}
H^{(1)}=\frac{\eta}{2c^2\epsilon^2}\left(\bold{1}+\epsilon\sigma_1\right)\otimes\bold{1}\otimes\left(\epsilon+\sigma_2\right)=
\begin{pmatrix}
{\cal E} & \epsilon\,{\cal E}\\
\epsilon\,{\cal E} & {\cal E}\\
\end{pmatrix}\ ,
\ee
with
$$
{\cal E}=
\begin{pmatrix}
\epsilon&-i&0&0\\
i&\epsilon&0&0\\
0&0&\epsilon&-i\\
0&0&i&\epsilon\\
\end{pmatrix}\ .
$$
Similarly, the hamiltonian contribution expressed in terms of creation and annihilation operators
in \eqref{Flind2a} gives rise to the matrix $H^{(2)}=\mathcal{M}^\dag\,H^{(1)}\,\mathcal{M}$, explicitly given by 
\be
\label{H1A}
H^{(2)}=
\frac{\eta}{\epsilon}
\begin{pmatrix}
(\epsilon+1) {\bf 1}_4 & {\bf 0}_4\\
{\bf 0}_4 & (\epsilon-1) {\bf 1}_4
\end{pmatrix}\ .
\ee
From \eqref{Flind1b}, {\it i.e.}
$$
D^{(1)}=(\sigma^{(\beta)})^{-1}\left(\mathcal{L}C^{(\beta)}\,+\,C^{(\beta)}\mathcal{L}^{tr}\right)(\sigma^{(\beta)})^{-1}\ ,
$$
one derives the Kossakowski matrix responsible for the dissipative action of the generator: 
\beann
D^{(1)}&=&\frac{J_0}{2c^2\epsilon^2}\Big(\delta\left(\bold{1}+\epsilon\sigma_1\right)\otimes\bold{1}-\gamma\left(\epsilon+\sigma_1\right)\otimes\sigma_1\Big)\otimes\left(\bold{1}+\epsilon\sigma_2\right)\\
&=&\frac{J_0}{2c^2\epsilon^2}
\begin{pmatrix}
D_1&\epsilon D_2&\epsilon D_1&D_2\cr
\epsilon D_2&D_1&D_2&\epsilon D_1\cr
\epsilon D_1&D_2&D_1&\epsilon D_2\cr
D_2&\epsilon D_1&\epsilon D_2&D_1\end{pmatrix}\ ,\\
D_1&=&\delta\begin{pmatrix}1&-i\epsilon\cr
i\epsilon&1\end{pmatrix}\ ,\qquad D_2=-\gamma\begin{pmatrix}1&-i\epsilon\cr
i\epsilon&1\end{pmatrix}\ .
\eeann
Instead, when the dissipative contribution is expressed in terms of creation and annihilation operators,
the corresponding Kossakowski matrix reads
\bea
\label{D1A1}
D^{(2)}&=&\mathcal{M}^\dag\,D\,\mathcal{M}=\frac{J_0}{\epsilon}
\begin{pmatrix}
D_{1+}&D_{2+}&0&0\cr
D_{2+}&D_{1+}&0&0\cr
0&0&D_{1-}&D_{2-}\cr
0&0&D_{2-}&D_{1-}
\end{pmatrix}\ ,\\
\label{D1A2}
D_{1\pm}&=&\delta(1\pm\epsilon)\begin{pmatrix}
1&0\cr0&1\end{pmatrix}\ ,\qquad
D_{2\pm}=\gamma(1\pm\epsilon)\begin{pmatrix}
\epsilon&-c\cr
-c&\epsilon\end{pmatrix}\ .
\eea

\section{Appendix E}

The Hamiltonian contribution to the Lindblad generator of the microscopic 
dynamics studied in Model 2 is the same as in Model 1, thus we concentrate on the dissipative term
$\DD_N$ of $\LL_N$. Since operators at different sites commute, the action of $\DD_N$ on an operator $x_i$ from the set $\chi$ at a given site $k$ is given by
\beann
\DD_N[x_i^{(k)}]&=&
\frac{1}{2}\Big(\left[w^{(k)}_1\,,\,\left[x^{(k)}_i\,,\,w^{(k)}_1\right]\right]+
\left[w_2^{(k)}\,,\,\left[x^{(k)}\,,\,w^{(k)}_2\right]\right]\\
&+&\gamma\left[w_3^{(k)}\,,\left[x^{(k)}\,,\,w^{(k)}_3\right]\right]\Big)\\
&-&i\frac{\epsilon}{2}\,\left\{w^{(k)}_1\,,\,\left[x^{(k)}_i\,,\,w^{(k)}_2\right]\right\}+i\frac{\epsilon}{2}\,\left\{w_2^{(k)}\,,\,\left[x^{(k)}_i\,,\,w^{(k)}_1\right]\right\}\ ,
\eeann
with $w_\mu=\sigma_\mu\otimes\bold{1}+\bold{1}\otimes\sigma_{\mu}$.
Then, by means of the Pauli algebraic relations, one explicitly computes that 
$$
\DD_N\left[x_i^{(p)}\right]=\sum_{k=1}^8\mathcal{D}_{ik}\,x_k^{(p)}\ ,
$$
where
\be
\label{mod2L}
\mathcal{D}=-2\begin{pmatrix}
1+\xi&0&0&0&0&0&-\epsilon&0\cr
0&1+\xi&0&0&0&0&0&-\epsilon\cr
0&0&1+\xi&0&-\epsilon&0&0&0\cr
0&0&0&1+\xi&0&-\epsilon&0&0\cr
2\epsilon&0&\epsilon&0&3+\xi&0&2&0\cr
0&2\epsilon&0&\epsilon&0&3+\xi&0&2\cr
\epsilon&0&2\epsilon&0&2&0&3+\xi&0\cr
0&\epsilon&0&2\epsilon&0&2&0&3+\xi\cr
\end{pmatrix}\ .
\ee
As in the previous Appendix, from \eqref{Flind1b}, with 
$$
C^{(\beta)}=\left(\bold{1}-\epsilon\,\sigma_1\right)\otimes\bold{1}\otimes\left(\bold{1}+
\epsilon\,\sigma_2\right)\ ,
$$ 
one computes
$$
D^{(1)}=(\sigma^{(\beta)})^{-1}\left(\mathcal{L}C^{(\beta)}\,+\,C^{(\beta)}\mathcal{L}^{tr}\right)(\sigma^{(\beta)})^{-1}\ .
$$
Then, by the transformation $D^{(2)}=\mathcal{M}^\dag\,D^{(1)}\,\mathcal{M}$ that maps the dissipator written in terms of the operators $F(x_i)$, $1\leq i\leq 8$, to the one expressed using annihilation and creation operators $a^\#_i$, $i=1,2,3,4$, one gets
\be
\label{mod2La1}
D^{(2)}=\frac{2}{\epsilon}\begin{pmatrix}(1+\epsilon)A&0\cr
0&(1-\epsilon)A
\end{pmatrix}\ ,
\ee
with
\be
\label{mod2La2} 
A=\begin{pmatrix}1+\xi&0&\epsilon^2&-\epsilon c\cr
0&3+\xi&-\epsilon c&-(1+c^2)\cr
\epsilon^2&-\epsilon c&1+\xi&0\cr
-\epsilon c&-(1+c^2)&0&3+\xi
\end{pmatrix}\ ,
\ee
where, as before, $\epsilon=\tanh(\eta\beta/2)$, $c^2=1-\epsilon^2$.

\section{Appendix F}

In this Appendix we derive the explicit form of the quantity 
$S$ appearing in the entanglement criterion of equation \eqref{crit} in Model 1, 
both in the case of an initial symmetrically squeezed state, $r_1=r_3=r$, 
and for a one-mode squeezed state, $r_1=r$, $r_3=0$.

The first step is to find the evolution of the reduced covariance matrix at every time $t$, 
in the language of creation and annihilation operators. From Appendix~D, {\sl Theorem~\ref{qfth}}, {\sl Lemma~\ref{lemma0}} and {\sl Lemma~\ref{lem2}}, one finds:
\begin{equation}
\Phi_t\left[D(z)\right]={\rm e}^{-\frac{1}{2}(\tilde Z, \tilde{\mathcal{Y}}_t \tilde Z)}\,D(z_t)
\end{equation}
with:
$$
\tilde {Z}_t= {\rm e}^{t\widetilde{\mathcal{L}}^{tr}} \tilde Z\ ,\qquad
{\rm e}^{t\widetilde{\mathcal{L}}^{tr}}=\mathcal{P}^T\Sigma_3\mathcal{M}^\dagger e^{t\mathcal{L}^{tr}}
(\mathcal{M}^{\dagger})^{-1}\Sigma_3\mathcal{P}\ ,\qquad
\widetilde{\mathcal{Y}}_t=\frac{1}{2\epsilon}\left({\bf 1}_{8}-\left({\rm e}^{t\widetilde{\mathcal{L}}^{tr}}\right)^\dagger {\rm e}^{t\widetilde{\mathcal{L}}^{tr}}\right)\ ,
$$
and
$$
{\rm e}^{t\widetilde{\mathcal{L}}^{tr}}=e^{-\delta J_0 t}\begin{pmatrix}
\cosh(J_0\gamma t)&0&-\epsilon\sinh(J_0\gamma t)&c\sinh(J_0\gamma t)\\
0&\cosh(J_0\gamma t)&c\sinh(J_0\gamma t)&\epsilon\sinh(J_0\gamma t)\\
-\epsilon\sinh(J_0\gamma t)&c\sinh(J_0\gamma t)&\cosh(J_0\gamma t)&0\\
c\sinh(J_0\gamma t)&\epsilon\sinh(J_0\gamma t)&0&\cosh(J_0\gamma t)
\end{pmatrix}\otimes \begin{pmatrix}
{\rm e}^{i\omega t}&0\\
0&{\rm e}^{-i\omega t}
\end{pmatrix}\ .
$$
As a result, the evolution of the covariance matrix for the four modes reads as follows:
$$
\widetilde G(t)=\left({\rm e}^{t\tilde{\mathcal{L}}^{tr}}\right)^\dagger\widetilde \Sigma^{(\beta)}_{r_1,r_3}\,{\rm e}^{t\tilde{\mathcal{L}}^{tr}}+\widetilde \Sigma^{(\beta)}_{0,0}-\left({\rm e}^{t\tilde{\mathcal{L}}^{tr}}\right)^\dagger\widetilde \Sigma^{(\beta)}_{0,0}\, {\rm e}^{t\tilde{\mathcal{L}}^{tr}}\ .
$$
In order to construct the reduced matrix for the two relevant modes under investigation, it is sufficient to look at the block structure of formula \eqref{bigcov2} and to collect the corresponding entries:
$$
\widetilde G_{red}(t)=\begin{pmatrix}
\widetilde G_{11}(t)&\widetilde G_{13}(t)\\
\widetilde G_{13}(t)& \widetilde G_{33}(t)
\end{pmatrix}\ ,
$$
where one has $\widetilde G_{13}=(\widetilde G_{13})^\dagger$. 
This allows evaluating the four quantities $I_j$ entering the definition of $S$ in
(\ref{crit}). As already mentioned, two cases have been considered for the initial state, 
a symmetrically squeezed state and a one-mode squeezed state. In the two cases, one obtains,
respectively:
\begin{eqnarray}
\nonumber
S_S(t)&=&\frac{\left(\epsilon^2-1\right)^2}{16\epsilon^4}+\sinh^2(r)\left[\left(\frac{1}{2\epsilon^2}-\frac{1}{2}\right)\left(\frac{y_\epsilon(t)}{\epsilon}-y_\epsilon^2(t)\right)-2\left(1+\frac{1}{\epsilon^2}\right)y_3^2(t)\right]+\\
\label{critsas0}
&+&\sinh^4(r)\left[\left(\frac{y_\epsilon(t)}{\epsilon}-y_\epsilon^2(t)+4y_3^2(t)\right)^2-4\frac{y_3^2(t)}{\epsilon^2}\right]\ ,
\\
\nonumber
S_A(t)&=&\frac{\left(\epsilon^2-1\right)^2}{16\epsilon^4}+\sinh^2(r)\Bigg[\left(\frac{1}{4\epsilon^2}-\frac{1}{4}\right)\left(\frac{y_1(t)-y_1^2(t)}{\epsilon^2}+y_2(t)-\epsilon^2 y_2^2(t)\right)+\\
&-&y_3^2(t)\left(\frac{1}{2}+\frac{1}{2\epsilon^2}\right)\Bigg]\ ,
\label{critsas}
\end{eqnarray} 
where
\begin{eqnarray}
\label{y1}
y_1(t)&=&\frac{e^{-2J_0\delta t}}{2}\left(\cosh(2J_0\gamma t)+1\right)\ ,\qquad
y_2(t)=\frac{e^{-2J_0\delta t}}{2}\left(\cosh(2J_0\gamma t)-1\right)\ ,\\
y_3(t)&=&\frac{e^{-2J_0\delta t}}{2}\sinh(2J_0\gamma t)\ ,\hskip 1.9cm
y_\epsilon(t)=\frac{y_1(t)}{\epsilon}+\epsilon y_2(t)\ .
\end{eqnarray}

\vfill\eject

\end{document}